\documentclass[11pt]{article}

\usepackage[letterpaper,margin=1in]{geometry}
\usepackage[T1]{fontenc}
\usepackage{lmodern}
\usepackage{amsmath,amssymb,amsthm,mathtools}
\usepackage{enumitem,microtype}
\usepackage{aliascnt}
\usepackage[colorlinks,allcolors=blue,bookmarksdepth=2]{hyperref}
\hypersetup{
  pdftitle={SVP Is NP-Hard for Some Rank-2 Cyclotomic Modules},
  pdfauthor={Jiaqi Liu, Yansong Feng, Yanbin Pan},
  pdfsubject={Exact shortest vector problem on rank-two cyclotomic modules},
  pdfkeywords={shortest vector problem, module lattices, cyclotomic fields, NP-completeness}
}

\allowdisplaybreaks

\newcommand{\Z}{\mathbb Z}
\newcommand{\Q}{\mathbb Q}
\newcommand{\F}{\mathbb F}
\newcommand{\R}{\mathbb R}
\newcommand{\C}{\mathbb C}
\newcommand{\NP}{\mathsf{NP}}
\newcommand{\SVP}{\mathsf{SVP}}
\newcommand{\dist}{\operatorname{dist}}
\newcommand{\Tr}{\operatorname{Tr}}

\newtheorem{theorem}{Theorem}[section]
\newaliascnt{lemma}{theorem}
\newtheorem{lemma}[lemma]{Lemma}
\aliascntresetthe{lemma}
\newaliascnt{corollary}{theorem}
\newtheorem{corollary}[corollary]{Corollary}
\aliascntresetthe{corollary}
\newaliascnt{proposition}{theorem}
\newtheorem{proposition}[proposition]{Proposition}
\aliascntresetthe{proposition}

\usepackage[nameinlink,capitalize]{cleveref}
\crefname{lemma}{Lemma}{Lemmas}
\Crefname{lemma}{Lemma}{Lemmas}
\crefname{corollary}{Corollary}{Corollaries}
\Crefname{corollary}{Corollary}{Corollaries}
\crefname{proposition}{Proposition}{Propositions}
\Crefname{proposition}{Proposition}{Propositions}
\crefname{equation}{Equation}{Equations}
\Crefname{equation}{Equation}{Equations}

\begin{document}

\hypersetup{pageanchor=false}
\begin{titlepage}
\thispagestyle{empty}

\begin{center}
  \vspace*{1.5cm}
  {\LARGE\bfseries
SVP Is NP-Hard for Some Rank-2 Cyclotomic Modules
    \par}
\end{center}

\vspace{0.55cm}

\begin{center}
  {\large Jiaqi Liu \quad Yansong Feng \quad Yanbin Pan\par}
  \vspace{0.8em}
  {\small
    State Key Laboratory of Mathematical Sciences,\par
    Academy of Mathematics and Systems Science, Beijing, China\par}
  \vspace{0.35em}
  {\small\texttt{\{ljqi,fengyansong,panyanbin\}@amss.ac.cn}\par}
\end{center}

\vspace{0.55cm}

\begin{abstract}
 Let $q$ range over primes congruent to $3$ modulo $4$.  Let
 $\zeta_q$ be a primitive $q$th root of unity, and put
 $K=\Q(\zeta_q)$, with ring of integers $\mathcal{O}_K=\Z[\zeta_q]$.  We
 prove that the decision version of the Shortest Vector Problem
 ($\SVP$) in the $\ell_2$-norm is
 $\NP$-complete on full-rank free submodules of $\mathcal{O}_K^2$ by a
 deterministic polynomial-time many-one reduction from Exact Cover by
 3-Sets (X3C).  The module rank
 is fixed at two.  As a $\Z$-lattice, the module has rank $2(q-1)$, which
 grows with $q$.  The main obstacle is closure under the action of $\mathcal{O}_K$.  A module containing a nonzero vector also contains every scalar
 multiple of that vector by a nonzero element of $\mathcal{O}_K$, and some of
 these multiples may be shorter.

 Three ideas overcome this obstacle.  First, we map the
 Bennett--Peikert Reed--Solomon lattice to a principal cyclotomic ideal and
 use Wan's point-count estimates to prove that a coset of this
 ideal contains many binary coefficient representatives.
 Second, a checker based on a quadratic Gauss sum turns the X3C
 equations into a canonical squared norm.  Third, the checker
 and a second module coordinate combine with a separation bound for ideal
 cosets to rule out every unintended vector created by
 the $\mathcal{O}_K$-action.
 Each constructed instance consists of a prime
 $q\equiv3\pmod4$, two integral generators whose $2\times2$ generator matrix
 has nonzero determinant, and an integer squared threshold.  The construction
 also gives $\NP$-hardness of search-$\SVP$ under
 polynomial-time Turing reductions.
\end{abstract}
\end{titlepage}

\hypersetup{pageanchor=true}
\pagenumbering{arabic}
\setcounter{page}{1}

\section{Introduction}
\begingroup
\crefname{lemma}{Lemma}{Lemmas}
\Crefname{lemma}{Lemma}{Lemmas}
\crefname{corollary}{Corollary}{Corollaries}
\Crefname{corollary}{Corollary}{Corollaries}
\crefname{proposition}{Proposition}{Propositions}
\Crefname{proposition}{Proposition}{Propositions}
A lattice $\mathcal L\subseteq\R^m$ is the set of all integer linear
combinations of linearly independent vectors.  Given a basis of
$\mathcal L$, the decision shortest vector problem ($\SVP$) asks whether
the lattice contains a nonzero vector of squared norm at most a given
threshold.  Its search version asks for a shortest nonzero lattice vector.
In 1981, Van Emde Boas asked whether exact
$\SVP$ is $\NP$-hard~\cite{vanEmdeBoas81}.  Later, Ajtai's breakthrough
result established that exact Euclidean $\SVP$ is $\NP$-hard under randomized
reductions~\cite{Ajtai98}.  For Euclidean $\SVP$, later randomized
reductions gave hardness within $1+n^{-\varepsilon}$ for every fixed
$\varepsilon>0$, within every constant below $\sqrt2$, and within every
constant factor~\cite{cai1998approximating,Micciancio01,Khot05,Micciancio12}.
Stronger randomized-time assumptions yield superconstant and
almost-polynomial factors~\cite{Khot05,HavivRegev12}.

However, a deterministic
$\NP$-hardness proof for exact Euclidean $\SVP$ remained open for decades.
Along one route, Hair and Sahai used probabilistically checkable
proofs (PCPs) to obtain deterministic hardness for
$\ell_p$-norms with $p>2$ and, under a subexponential-time assumption,
approximation hardness for every finite $p$
~\cite{HairSahai26STOC,HairSahai26FOCS}.  In two recent preprints,
Wan proved deterministic hardness for every constant approximation factor in
the Euclidean norm~\cite{wan2026euclidean}, extending an earlier result for
factors below $\sqrt2$, including the exact problem~\cite{wan2026np}.

At an intuitive level, Micciancio's \emph{local-density} program asks
for an explicit lattice $\mathcal L$ and a coset $\mathbf t+\mathcal L$
containing many points of norm strictly below $\lambda_1(\mathcal L)$
~\cite{Micciancio01,Micciancio12}.  The minimum of $\mathcal L$
protects the soundness gap, while the abundance of short coset points allows
the reduction to encode a solution to the input problem.  For derandomization,
these points must realize every prescribed assignment on a designated set of
coordinates.  Bennett and Peikert proposed a transparent candidate by lifting
Reed--Solomon codes~\cite{BennettPeikert23}.  Wan used finite-field point
counts to prove that an explicit Reed--Solomon coset has the required
representatives with prescribed coordinates~\cite{wan2026np}.  Wan's
result supplies the local-density statement needed for deterministic
hardness results for general lattices in the $\ell_2$-norm.  It does not by itself
show that the reduction remains sound when the output is required to have
a prescribed cyclotomic module structure.

Under the canonical embedding of a number field, ideals yield
ideal lattices~\cite{lyubashevsky2013ideal}.  Higher-rank modules similarly
yield module lattices~\cite{LangloisStehle15,FelderhoffPelletMaryStehle22}.
The ring action supports compact representations and fast arithmetic
~\cite{HoffsteinPipherSilverman98,LyubashevskyPeikertRegev13}, underlies Ring-LWE
and Module-LWE~\cite{lyubashevsky2013ideal,LangloisStehle15}, and enables
specialized algorithms for module lattices~\cite{LeePelletMaryStehleWallet19}.
This algebraic structure is no longer just a theoretical design choice.  In
2024, NIST standardized ML-KEM (FIPS~203), a module-lattice-based scheme
derived from CRYSTALS--Kyber and based on Module-LWE
~\cite{BosDucasKiltzEtAl18,NISTFIPS203}.  It also standardized ML-DSA
(FIPS~204), derived from CRYSTALS--Dilithium and relying on Module-LWE and a
variant of Module-SIS~\cite{DucasKiltzLepointEtAl18,NISTFIPS204}.

The cryptographic importance of module lattices has motivated broad
work on structured lattices.  Prior work
studies cyclic-lattice problems
~\cite{Micciancio07}, module unique-SVP and NTRU
~\cite{FelderhoffPelletMaryStehle22}, probabilistic reductions for
approximation problems from arbitrary modules to free modules
~\cite{DeMicheliMicciancioPelletMaryTran23}, worst-case to average-case
reductions for the approximate shortest independent vectors problem
($\mathsf{SIVP}$) at fixed module rank under the
Extended Riemann Hypothesis
~\cite{deBoerPageTomaWesolowski26}, and the geometry of shortest vectors in
random modules~\cite{GargavaSerbanViazovskaViglino25}.  To the best
of our knowledge, none of these results establishes $\NP$-hardness of exact
Euclidean $\SVP$ for rank-two modules under
deterministic polynomial-time reductions.  Most closely related to
the present question, a recent preprint proves that exact Euclidean
decision-$\mathsf{CVP}$ is $\NP$-complete on coefficient lattices of nonzero
principal ideals in power-of-two cyclotomic rings, with a lift to principal
cyclic ideal lattices~\cite{LiuFengPan26}.

\enlargethispage{\baselineskip}
\begin{samepage}
This leads to the question addressed in this work:
\begin{quote}
\emph{Is Euclidean $\SVP$ $\NP$-hard for rank-two modules
over cyclotomic rings?}
\end{quote}
 We answer this question affirmatively, even for full-rank free
 submodules of $\mathcal{O}_K^2$.
\end{samepage}

\subsection{Our result}

For every prime $q\equiv3\pmod4$, let $\zeta:=\zeta_q$ be a primitive
$q$th root of unity.  Write $K:=\Q(\zeta)$ and
$\mathcal{O}_K:=\Z[\zeta]$.  For $\mathbf x=(x_1,x_2)\in K^2$, define
\[
 \|\mathbf x\|^2
 :=\sum_{\sigma:K\hookrightarrow\C}
   \bigl(|\sigma(x_1)|^2+|\sigma(x_2)|^2\bigr)
\]
 for the canonical product norm.  An $\mathcal{O}_K$-submodule
$\mathcal M\subseteq\mathcal{O}_K^2$ is called \emph{full rank} if it has finite index
in $\mathcal{O}_K^2$.

We consider the following decision problem.  An input consists of a prime
$q\equiv3\pmod4$, two vectors $\mathbf m_1,\mathbf m_2\in\mathcal{O}_K^2$
with $\det[\,\mathbf m_1\ \mathbf m_2\,]\ne0$, and an integer
 $S\ge0$.  Each of the four ring coordinates of
 $\mathbf m_1$ and $\mathbf m_2$ is specified by its $q-1$ integer
 coefficients in the basis $1,\zeta,\ldots,\zeta^{q-2}$.  For
$\mathcal M:=\mathcal{O}_K\mathbf m_1+\mathcal{O}_K\mathbf m_2$, the question
is whether
\[
 \lambda_1(\mathcal M)^2\le S.
\]
We call this problem \emph{decision-$\SVP$ in the $\ell_2$-norm
on rank-two cyclotomic modules}.  The determinant condition makes $\mathcal M$ a full-rank free
submodule of $\mathcal{O}_K^2$ generated by two vectors.

\begin{theorem}
\label{thm:main-intro}
As $q$ ranges over primes congruent to $3$ modulo $4$,
decision-$\SVP$ in the $\ell_2$-norm on rank-two cyclotomic modules is
$\NP$-complete.  The hardness holds under deterministic
polynomial-time many-one reductions.
\end{theorem}

Our reduction is from Exact Cover by 3-Sets (X3C), which is
$\NP$-complete~\cite{GareyJohnson79}.  Given an X3C instance, the
reduction outputs a tuple $(q,\mathbf m_1,\mathbf m_2,S)$ of the form above.
Here $q\equiv3\pmod4$ is prime, $\mathbf m_1,\mathbf m_2\in\mathcal{O}_K^2$
generate the output module and have nonzero determinant, and $S$ is an integer
squared threshold.  The tuple has polynomial encoding length.  The generator
 construction and the denominator-clearing argument appear in
 \cref{subsec:denominator-clearing}.
 \Cref{subsec:prime-and-euclidean} chooses $q$ and computes an
 integer basis whose Euclidean lattice is isometric to the canonically
 embedded output module.  This basis is used by the
$\NP$ verifier and the search reduction.

The output module is a full-rank free submodule of the rank-two
module $\mathcal{O}_K^2$.  As a $\Z$-lattice, the output module has
rank $2(q-1)$.  The same construction also gives the search consequence
stated below.  This is a worst-case hardness result in which the
cyclotomic ring varies with $q$.  The theorem does not establish hardness for a
single fixed cyclotomic ring, for the power-of-two cyclotomic family, on
average, or through a cryptographic security reduction.

\begin{corollary}
\label{cor:search-hardness}
As $q$ ranges over primes congruent to $3$ modulo $4$,
search-$\SVP$ in the $\ell_2$-norm on full-rank free rank-two
$\mathcal{O}_K$-module lattices is $\NP$-hard.  The hardness holds under
polynomial-time Turing reductions.
\end{corollary}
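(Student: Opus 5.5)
The plan is to derive \cref{cor:search-hardness} from \cref{thm:main-intro} by a standard Turing reduction: one call to a search-$\SVP$ oracle decides the decision instance. Start from an X3C instance and run the deterministic many-one reduction of \cref{thm:main-intro}. It produces $(q,\mathbf m_1,\mathbf m_2,S)$ with $q\equiv3\pmod4$ prime, $\det[\,\mathbf m_1\ \mathbf m_2\,]\ne0$, and polynomial encoding length. The X3C instance is a yes-instance if and only if $\lambda_1(\mathcal M)^2\le S$.

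Next, pass the generators $\mathbf m_1,\mathbf m_2$ to the search oracle. If the oracle's input format is a $\Z$-basis instead, first compute the integer basis from \cref{subsec:prime-and-euclidean}, whose Euclidean lattice is isometric to the canonically embedded $\mathcal M$. This takes polynomial time. The oracle returns a vector $\mathbf v\in\mathcal M$ with $\|\mathbf v\|=\lambda_1(\mathcal M)$. It is given either as a pair in $\mathcal O_K^2$ or as an integer coefficient vector with respect to the $\Z$-basis.

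The reduction then computes $\|\mathbf v\|^2$ exactly. One way is through the Gram matrix of the integer basis. Alternatively, since $\mathbf v=(v_1,v_2)$ with $v_i=\sum_j a_{ij}\zeta^j$, compute $\sum_\sigma|\sigma(v_i)|^2=\Tr_{K/\Q}(v_i\bar v_i)$ using $\Tr(\zeta^k)=q-1$ if $q\mid k$ and $-1$ otherwise. This is an integer computed in time polynomial in the bit size of $\mathbf v$. The reduction accepts if and only if $\|\mathbf v\|^2\le S$. Correctness follows immediately from \cref{thm:main-intro}, because $\|\mathbf v\|^2=\lambda_1(\mathcal M)^2$.

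The main obstacle is to make sure the whole procedure runs in polynomial time. In particular, the oracle's answer must have polynomial size, or the reduction must be able to check it. The oracle's output size counts toward the running time of an oracle machine only if it is polynomially bounded. This does hold. A shortest vector has norm at most the norm of either generator, so its coordinates in the canonical embedding are bounded. The coefficient vector in the power basis is obtained by a fixed linear map with entries of polynomial bit size, namely the inverse of the Vandermonde matrix in $\zeta$, which has entries $q^{-1}(\zeta^{-ij}-\zeta^{-i(j+1)})$-type values. Hence the coefficients are integers of polynomially many bits. If the oracle nonetheless returns something malformed or too long, the reduction checks membership in $\mathcal M$ by solving an integer linear system with the $\Z$-basis, and rejects or truncates as appropriate. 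This keeps the reduction a deterministic polynomial-time Turing reduction.
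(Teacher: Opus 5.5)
Your proposal is correct and is essentially the paper's argument: run the reduction, make a single search-$\SVP$ oracle call on the output module (through the isometric integer basis $\mathbf B$), and accept exactly when the returned vector's squared norm is at most the scaled threshold $\widehat S$. The paper differs only in that it answers NO directly in the isolated-element branch instead of querying the oracle, and it omits the output-size bound.

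One small correction to that bound: your inverse-Vandermonde expression actually recovers $a_j-a_{j+1}$ rather than $a_j$. The bound follows directly from \cref{lem:cyclotomic-form}: for the power-basis representative, whose last $q$-coordinate is zero, every coefficient satisfies $|a_j|\le\|v\|$, and $\|v\|$ is at most the norm of a generator.
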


\subsection{Proof overview}
Our goal is to give a deterministic polynomial-time reduction from
X3C to decision-$\SVP$ on rank-two modules over
cyclotomic rings.  Before clearing denominators, the reduction constructs an
intermediate rank-two $\mathcal{O}_K$-module $\mathcal M\subset K^2$ and a
squared threshold $S$ such that
\[
 \lambda_1(\mathcal M)^2\le S
 \quad\Longleftrightarrow\quad
 \text{the X3C instance is positive}.
\]
A common integer scaling later produces the integral output module in
$\mathcal{O}_K^2$ and an integer squared threshold without changing this
equivalence.  The main challenge is to prove the strict lower bound
$\lambda_1(\mathcal M)^2>S$ for every negative instance despite closure under
multiplication by $\mathcal{O}_K$.

X3C asks whether a given matrix
$\mathbf A\in\{0,1\}^{M\times n}$ admits a vector
$\boldsymbol\xi\in\{0,1\}^n$ satisfying
\begin{equation}
\label{eq:intro-x3c}
 \mathbf A\boldsymbol\xi=\mathbf1_M,
\end{equation}
where $\mathbf1_M$ is the all-one vector in $\{0,1\}^M$.

The proof of the reduction has three layers.  First, it extends
$\boldsymbol\xi$ to a binary coefficient vector $\mathbf x$ and evaluates
$\mathbf x$ at $\zeta$ to obtain an element of one fixed ideal coset in
$\mathcal{O}_K$.  Second, it transforms this element into an X3C checker
expression whose squared norm measures
$\mathbf A\boldsymbol\xi-\mathbf1_M$.  Third, it places the checker expression
in a rank-two module.  Writing a module vector as an
$\mathcal{O}_K$-linear combination of the two generators introduces a
coefficient $w\in\mathcal{O}_K$ for the second generator.  Completeness uses
the intended choice $w=1$, whereas soundness must control every possible
$w\in\mathcal{O}_K$.

\paragraph{The ideal coset.}
The first step embeds $\boldsymbol\xi\in\{0,1\}^n$ into
$\mathbf x=(x_a)_{a\in\F_q}\in\{0,1\}^{\F_q}$.  The reduction chooses distinct
field elements $\alpha_1,\ldots,\alpha_n\in\F_q$ and stores $\xi_j$ in the
coordinate $x_{\alpha_j}$.  Let $\zeta:=\zeta_q$ be a primitive $q$th
root of unity and put $\pi:=1-\zeta$.  For a chosen integer $k$, define
$\mathfrak a:=\pi^k\mathcal{O}_K$.  For
$\mathbf x=(x_a)_{a\in\F_q}\in\Z^{\F_q}$, evaluation at $\zeta$ gives the
corresponding ring element
\[
 v_{\mathbf x}:=\sum_{a\in\F_q}x_a\zeta^a\in\mathcal{O}_K.
\]

By \cref{thm:cyclotomic-descent}, which uses the Bennett--Peikert minimum
bound~\cite{BennettPeikert23}, evaluation maps the Reed--Solomon lattice to
$\mathfrak a$ and $\lambda_1(\mathfrak a)^2\ge2kq$.  Wan's point-count
estimate~\cite{wan2026np}, in the uniform form of
\cref{thm:uniform-completions}, supplies many binary vectors whose ring
elements lie in a fixed coset.  The reduction chooses one coset
$c+\mathfrak a$.  By
\cref{lem:coset-syndrome,thm:uniform-completions}, every
$\boldsymbol\xi\in\{0,1\}^n$ has many extensions
$\mathbf x\in\{0,1\}^{\F_q}$ such that
\[
 v_{\mathbf x}\in c+\mathfrak a,
 \qquad \sum_{a\in\F_q}x_a=h,
 \qquad x_{\alpha_j}=\xi_j\quad(1\le j\le n),
\]
where $h$ is the prescribed Hamming weight.  Thus
$x_{\alpha_j}=\xi_j$ stores $\boldsymbol\xi$ in designated coordinates of
$\mathbf x$, while $v_{\mathbf x}$ places the resulting ring element in the
fixed ideal coset.  This encoding does not yet test
\cref{eq:intro-x3c}.

\paragraph{The X3C checker.}
 The second step encodes \cref{eq:intro-x3c} in the canonical norm.
The reduction
constructs $U,V\in K$.  By \cref{thm:checker-score}, every weight-$h$ binary
vector $\mathbf x$, with $\xi_j=x_{\alpha_j}$, satisfies
\begin{equation}
 \|Uv_{\mathbf x}-V\|^2
 =B_0+2\|\mathbf A\boldsymbol\xi-\mathbf1_M\|_2^2
   +\mathcal E(\mathbf x).
\end{equation}
Here $B_0$ depends only on the constructed instance, and
$\mathcal E(\mathbf x)\ge0$ collects the additional terms not determined by
$\boldsymbol\xi$.  Thus $Uv_{\mathbf x}-V$ is the \emph{checker value} whose
squared norm encodes the residual
$\mathbf A\boldsymbol\xi-\mathbf1_M$.  We call a representative
$\mathbf x$ \emph{clean} if it satisfies the correlation conditions in
\cref{sec:clean-representatives}, and these conditions imply
$\mathcal E(\mathbf x)=0$.  The uniform coset count in
\cref{thm:uniform-completions} and the union bound in the proof of
 \cref{thm:enhanced-clean} show that every
 $\boldsymbol\xi\in\{0,1\}^n$ has a clean coefficient vector
 $\mathbf x$ with $v_{\mathbf x}\in c+\mathfrak a$.
Consequently, an exact cover gives squared checker norm $B_0$, whereas every
binary coefficient vector in a negative instance has squared checker norm at
least $B_0+2$.

\paragraph{The rank-two module.}
 The third step places the checker value in a module.  Choose a
 positive integer $\Gamma$ to control the second module coordinate.  Define
 the rank-two $\mathcal{O}_K$-submodule of $K^2$
\[
 \mathcal M
 :=\mathcal{O}_K(U\pi^k,0)
   +\mathcal{O}_K(Uc-V,-\Gamma)
 \subset K^2.
\]
Every vector in $\mathcal M$ has the unique form
\begin{equation}
\label{eq:intro-module-vector}
 \mathbf v(z,w)
 :=\bigl(U(\pi^kz+wc)-Vw,-\Gamma w\bigr),
 \qquad z,w\in\mathcal{O}_K.
\end{equation}
 When $w=1$ and $\pi^kz+c=v_{\mathbf x}$, the first coordinate is
 the checker value $Uv_{\mathbf x}-V$.  For arbitrary $w$, the
 element $\pi^kz+wc$ lies in $wc+\mathfrak a$, and the first coordinate is
 $U(\pi^kz+wc)-Vw$.  The second coordinate contributes
$\Gamma^2\|w\|^2$ to the squared norm.  Set
 $S:=B_0+\Gamma^2(q-1)$.  If $\boldsymbol\xi$ is an exact cover, choose
 a clean coefficient vector $\mathbf x$ and write
 $v_{\mathbf x}=\pi^kz+c$.  Taking $w=1$ in
\cref{eq:intro-module-vector} then gives
$\|\mathbf v(z,1)\|^2=S$.

\paragraph{Soundness under multiplication.}
It remains to control all vectors in this module when the X3C instance is
negative.  The main issue is possible cancellation between
$U(\pi^kz+wc)$ and $Vw$.
 Set $y:=\pi^kz+wc\in wc+\mathfrak a$.
 Then $\|y\|\ge\dist(wc,\mathfrak a)$.
The stability estimate in \cref{lem:gap-and-stability} gives an explicit
$\nu=o(1)$ and the following lower bound for the squared norm of the module
vector.
\begin{equation}
\label{eq:intro-stability}
 \|Uy-Vw\|^2+\Gamma^2\|w\|^2
 \ge(1-\nu)\bigl(\|y\|^2+\Gamma^2\|w\|^2\bigr).
\end{equation}
This inequality transfers the geometric bounds from the ideal cosets to the
module.  The argument considers four cases.
\begin{enumerate}[label=(\roman*),leftmargin=*]
 \item If $w=0$, Part~(iii) of
 \cref{lem:checker-U-properties} gives $\|Uy\|\ge\|y\|$.  This preserves the
 lower bound for the minimum of $\mathfrak a$ and puts every nonzero vector
 above $S$.
 \item If $w$ is a root of unity, write $y=wu$ with
  $u\in c+\mathfrak a$.  Multiplication by $w$ is an isometry.  Let
  $\mathbf x$ be the integral coefficient representative of $u$ from
  \cref{lem:binary-shell}.  If $\mathbf x$ is binary, the checker identity gives
  a squared first-coordinate norm at least $B_0+2$ in a negative instance.
  If $\mathbf x$ is not binary,
  \cref{lem:binary-shell} gives $\|u\|^2\ge H+2q$.  Applying
  \cref{eq:intro-stability} to $u$ and $1$ therefore puts the
  module vector above $S$.
 \item If $w$ is not a root of unity and $0<\|w\|^2\le3q$,
 \cref{thm:robust-center} separates $wc$ from $\mathfrak a$, while
 \cref{lem:second-shell} gives $\|w\|^2\ge2q-4$.  These bounds and
 \cref{eq:intro-stability} put the module vector above $S$.
 \item If $w$ is not a root of unity and $\|w\|^2>3q$, the coordinate
 $-\Gamma w$ alone puts the module vector above $S$.
\end{enumerate}

\paragraph{Organization.}
\Cref{sec:preliminaries} recalls Euclidean lattices, prime
cyclotomic rings, X3C, and Sidon sequences, and records the power-sum
estimates used to count binary representatives in ideal cosets.
\Cref{sec:descent} maps the lifted Reed--Solomon lattice to a principal
 cyclotomic ideal, proves the fixed-coordinate count, and
 establishes the
norm-separation bounds used for soundness.
\Cref{sec:x3c-checker} constructs the X3C checker and proves
the existence of clean completions.  \Cref{sec:reduction}
gives the main reduction.  It assembles the rank-two module, proves
completeness and soundness, clears denominators, and establishes the claimed
input representations and complexity bounds.

\endgroup

\section{Preliminaries}
\label{sec:preliminaries}
\paragraph{Notation.}
For a positive integer $r$, write $[r]:=\{1,\ldots,r\}$.  For a
prime $q$, let $\F_q$ denote the finite field with $q$ elements,
and let $\F_q^\times:=\F_q\setminus\{0\}$ denote its multiplicative group.
Vectors and matrices are represented in bold, e.g., $\mathbf v$ for a vector and $\mathbf A$ for a matrix, while their scalar coordinates are not. We use $\|\cdot\|_2$ for the $\ell_2$-norm and $\mathbf1_r$ for the vector of ones in $\R^r$.

\paragraph{Hasse derivatives.}
We will use Hasse derivatives in \cref{sec:descent} to recognize when a
polynomial has a zero of prescribed order at $X=1$.  Let $F$ be a field, let
$f(X)=\sum_{i=0}^d c_iX^i\in F[X]$, and let $j\ge0$ be an integer.  The $j$th
\emph{Hasse derivative} of $f$ is
\[
 D^{[j]}f(X):=\sum_{i=j}^d\binom ij c_iX^{i-j},
\]
where each integer binomial coefficient is read as an element of $F$.
We also use the binomial polynomials $\binom{X}{0}:=1$ and
$\binom{X}{j}:=X(X-1)\cdots(X-j+1)/j!\in F[X]$ for $j\ge1$ whenever
$j!\ne0$ in $F$.
For $b\in F$ and an integer $k\ge1$, we say that $b$ is a zero of order at
least $k$ when $(X-b)^k$ divides $f(X)$.
The following standard characterization will be
used~\cite{LidlNiederreiter97}:
\begin{equation}
\label{eq:hasse-multiplicity}
 (X-b)^k\mid f(X)
 \quad\Longleftrightarrow\quad
 D^{[j]}f(b)=0\quad(0\le j<k).
\end{equation}

\subsection{Euclidean lattices}
A Euclidean lattice of rank $n$ is a discrete additive subgroup
$\mathcal L\subseteq\R^m$ of the form
\[
 \mathcal L=\mathcal L(\mathbf B):=\mathbf B\Z^n,
\]
where $\mathbf B\in\R^{m\times n}$ has linearly independent columns.  These
columns form a basis.
The first minimum of $\mathcal L$ is the length of a shortest
nonzero lattice vector:
\[
 \lambda_1(\mathcal L)
 =\min_{\mathbf0\ne\mathbf v\in\mathcal L}\|\mathbf v\|_2.
\]

For computational problems, the basis
$\mathbf B\in\Q^{m\times n}$ is given explicitly.  Exact decision-$\SVP$
also receives a nonnegative rational squared threshold $\Delta$ and asks
whether
\[
 \lambda_1(\mathcal L(\mathbf B))^2\le\Delta.
\]
The problem belongs to $\NP$: a YES certificate is a nonzero integer
coefficient vector for a short lattice point.  Standard cofactor bounds give
a certificate of polynomial bit length, and exact rational arithmetic checks
both membership and the squared norm. See, e.g.,~\cite{MicciancioGoldwasser02}.
The corresponding search-$\SVP$ problem takes a full-column-rank
basis $\mathbf B$ and outputs a shortest nonzero vector of
$\mathcal L(\mathbf B)$.

\subsection{Prime cyclotomic fields and module lattices}

\paragraph{Cyclotomic field.}
Fix an odd prime $q$ for the remainder of the construction, and let
$\zeta:=\zeta_q\in\C$ be a primitive $q$th root of unity.  We suppress the
subscript $q$ after fixing the prime and put
\[
 K:=\Q(\zeta),
 \qquad
 \mathcal{O}_K:=\Z[\zeta],
 \qquad
 \pi:=1-\zeta.
\]
Here $\mathcal{O}_K$ is the ring of integers of $K$.  The elements
$1,\zeta,\ldots,\zeta^{q-2}$ form an integral basis of $K$, so
$[K:\Q]=q-1$.  The prime $q$ is totally ramified in $\mathcal{O}_K$, and hence
$(q)=(\pi)^{q-1}$~\cite{Washington97}.  Whenever $a\in\F_q$ occurs as an
exponent of $\zeta$, we identify it with its unique integer representative in
$\{0,\ldots,q-1\}$.  A representation
$v=\sum_{a\in\F_q}c_a\zeta^a$ with $c_a\in\Q$ is called a
\emph{$q$-coordinate expression}
for $v$.  Its $q$ coefficients are indexed by $\F_q$ and need not be unique,
because $\sum_{a\in\F_q}\zeta^a=0$.  By contrast, the expression of $v$ in
the power basis $1,\zeta,\ldots,\zeta^{q-2}$ is unique.

\paragraph{Canonical embedding.}
For $v\in K$, write $\overline v$ for its complex conjugate.  In particular,
$\overline\zeta=\zeta^{-1}$.  For $a\in\F_q^\times$, let
$\sigma_a:K\hookrightarrow\C$ be the embedding defined by
$\sigma_a(\zeta)=\zeta^a$.  These are all the embeddings of $K$.  The field
trace is
\[
 \Tr_{K/\Q}(v):=\sum_{a\in\F_q^\times}\sigma_a(v).
\]
 The canonical embedding sends $v$ to
 $(\sigma_a(v))_{a\in\F_q^\times}$.  For each integer $s\ge1$, we
 equip $K^s$ with the real inner product
\[
 \langle\mathbf x,\mathbf y\rangle
 :=\operatorname{Re}\sum_{i=1}^s\sum_{a\in\F_q^\times}
   \sigma_a(x_i)\overline{\sigma_a(y_i)}
\]
and its induced product norm
\[
 \|\mathbf x\|^2:=\langle\mathbf x,\mathbf x\rangle
 =\sum_{i=1}^s\Tr_{K/\Q}(x_i\overline{x_i}).
\]
For $s=1$, this is the canonical norm on $K$:
\[
 \|v\|^2=\sum_{a\in\F_q^\times}|\sigma_a(v)|^2
 =\Tr_{K/\Q}(v\overline v).
\]
The roots of unity in $K$ are exactly $\pm\zeta^j$ for $0\le j<q$
~\cite{Washington97}.  If $\varepsilon$ is a root of unity, then
$\|\varepsilon\mathbf x-\varepsilon\mathbf y\|=\|\mathbf x-\mathbf y\|$
for all $\mathbf x,\mathbf y\in K^s$.  Thus multiplication by
$\varepsilon$ is an isometry.  In particular, it preserves distances.

\paragraph{Module lattices.}
For an integer $s\ge1$, an \emph{$\mathcal{O}_K$-module lattice} in
$K^s$ is a finitely generated $\mathcal{O}_K$-submodule
$\mathcal M\subseteq K^s$, equipped with the componentwise canonical
embedding and the product norm.  It has \emph{full rank} if its $K$-span is
$K^s$.  When $s=1$, a nonzero ideal
$\mathfrak a\subseteq\mathcal{O}_K$ gives a module lattice in $K$ called an
\emph{ideal lattice}.  Both module and ideal lattices are discrete additive
subgroups of their real spans.

In this paper, we only use free module lattices of full rank.  If
$\mathbf m_1,\ldots,\mathbf m_s\in K^s$ are linearly independent over $K$, then
\[
 \mathcal M=\mathcal{O}_K\mathbf m_1+\cdots+\mathcal{O}_K\mathbf m_s
\]
is free of module rank $s$, has full rank, and has $\Z$-rank
$s(q-1)$.  The square matrix with
columns $\mathbf m_1,\ldots,\mathbf m_s$ has nonzero determinant precisely
when these vectors are linearly independent over $K$.  If
$\mathcal M\subseteq\mathcal{O}_K^s$, then full rank is equivalent
to finite index in $\mathcal{O}_K^s$.

The first minimum of an ideal lattice or module lattice is the length of its
shortest nonzero vector in the canonical or product norm.  For $v\in K$ and
an ideal lattice $\mathfrak a$, define
\[
 \dist(v,\mathfrak a):=\min_{y\in\mathfrak a}\|v-y\|.
\]

\paragraph{Coefficient representatives.}
For $\mathbf z=(z_a)_{a\in\F_q}\in\Q^q$, write
\[
 [\mathbf z]:=\sum_{a\in\F_q}z_a\zeta^a\in K.
\]
If
$\mathbf z\in\Z^q$, then $[\mathbf z]\in\mathcal{O}_K$, and we call
$\mathbf z$ an \emph{integral coefficient representative} of
$[\mathbf z]$.  Since $\Phi_q(X)=1+X+\cdots+X^{q-1}$ is the minimal
polynomial of $\zeta$, two integral coefficient vectors represent the same
element precisely when they differ by an integer multiple of $\mathbf1_q$.

The following identity converts the canonical inner product and norm into
calculations with these $q$ coefficients.
\begin{lemma}
\label{lem:cyclotomic-form}
For all rational coefficient vectors $\mathbf x,\mathbf y\in\Q^q$,
\[
 \langle[\mathbf x],[\mathbf y]\rangle
 =q\sum_a x_ay_a-
 \left(\sum_a x_a\right)\left(\sum_a y_a\right).
\]
In particular, for every $\mathbf z\in\Z^q$,
\[
 \|[\mathbf z]\|^2
 =q\sum_a z_a^2-\left(\sum_a z_a\right)^2
 =\sum_{a<b}(z_a-z_b)^2.
\]
\end{lemma}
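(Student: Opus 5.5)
The plan is to expand the inner product in terms of the embeddings and evaluate a character sum. By bilinearity over $\Q$,
\[
 \langle[\mathbf x],[\mathbf y]\rangle
 =\sum_{a,b\in\F_q}x_ay_b\,\operatorname{Re}\sum_{c\in\F_q^\times}\sigma_c(\zeta^a)\overline{\sigma_c(\zeta^b)}.
\]
Since the coefficients are rational, everything reduces to the inner products of the monomials $\zeta^a,\zeta^b$ with $a,b\in\F_q$.

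For these monomials we compute
\[
 \sum_{c\in\F_q^\times}\sigma_c(\zeta^a)\overline{\sigma_c(\zeta^b)}
 =\sum_{c\in\F_q^\times}\zeta^{c(a-b)}.
\]
If $a=b$, every term equals $1$, so the sum is $q-1$. If $a\ne b$, then $c(a-b)$ runs over all of $\F_q^\times$ as $c$ does, and the sum equals $\sum_{t\ne0}\zeta^t=-1$, using $\sum_{t\in\F_q}\zeta^t=0$. Both values are real, so the real part changes nothing. Hence the inner product equals $q\,[a=b]-1$.

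Substituting this back gives
\[
 \langle[\mathbf x],[\mathbf y]\rangle=q\sum_a x_ay_a-\sum_{a,b}x_ay_b,
\]
and the double sum factors as $(\sum_a x_a)(\sum_b y_b)$. This is the first identity.

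For the norm statement, take $\mathbf x=\mathbf y=\mathbf z$, which gives $q\sum z_a^2-(\sum z_a)^2$. The final form comes from the standard Lagrange-type identity
\[
 \sum_{a<b}(z_a-z_b)^2=\tfrac12\sum_{a,b}(z_a-z_b)^2=q\sum_a z_a^2-\Bigl(\sum_a z_a\Bigr)^2,
\]
which we verify by expanding the square. No step here is a real obstacle. The only point needing care is the evaluation of the character sum for $a\ne b$, which rests on $c\mapsto c(a-b)$ being a bijection of $\F_q^\times$.
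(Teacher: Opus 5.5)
Your proposal is correct and follows the paper's proof essentially step for step. You evaluate the additive character sum $\sum_{c\in\F_q^\times}\zeta^{c(a-b)}$ as $q-1$ or $-1$, substitute by bilinearity, factor the double sum, and obtain the pairwise-difference form by expanding the square.
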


\begin{proof}
For $a,b\in\F_q$, additive-character orthogonality gives
\[
 \sum_{t\in\F_q^\times}\zeta^{t(a-b)}
 =\begin{cases}
   q-1,&a=b,\\
   -1,&a\ne b.
  \end{cases}
\]
Hence, for rational coefficient vectors $\mathbf x$ and $\mathbf y$,
\[
 \langle[\mathbf x],[\mathbf y]\rangle
 =q\sum_a x_ay_a-\sum_{a,b}x_ay_b
 =q\sum_a x_ay_a-\left(\sum_a x_a\right)\left(\sum_b y_b\right).
\]
Taking $\mathbf x=\mathbf y=\mathbf z$ gives the norm identity.  Expanding
$\sum_{a<b}(z_a-z_b)^2$ gives its final form.
\end{proof}

\subsection{X3C and Sidon sequences}

An Exact Cover by 3-Sets (X3C) instance consists of a universe
$[M]$ and three-element sets $C_1,\ldots,C_n\subseteq[M]$.  Its incidence
matrix $\mathbf A\in\{0,1\}^{M\times n}$ is defined by $A_{ij}=1$ when
$i\in C_j$ and $A_{ij}=0$ otherwise.  A \emph{selection vector}
$\boldsymbol\xi\in\{0,1\}^n$ specifies a choice of sets, with $\xi_j=1$
exactly when $C_j$ is selected.  We call $\boldsymbol\xi$ an
\emph{X3C witness} if
\[
 \mathbf A\boldsymbol\xi=\mathbf1_M.
\]
This equality says that every universe element belongs to exactly one
selected set.  The instance is positive exactly when it has an X3C witness.
X3C is $\NP$-complete~\cite{GareyJohnson79}.

A universe element is \emph{isolated} if it belongs to none of the sets $C_j$.  An instance with an isolated element is negative.  After handling
this case separately, every row of $\mathbf A$ is nonzero and $M\le3n$.
Adjoining a disjoint three-element block together with its unique set
preserves the answer, so we may also assume $n\ge1$.

To index the $n$ selection bits, we choose distinct nonnegative
integers whose nonzero ordered differences are all distinct.  Such a sequence
is called a \emph{Sidon sequence}.  The following lemma gives a
deterministic construction with largest entry $O(n^2)$.
\begin{lemma}
\label{lem:sidon}
For every $n\ge1$, one can deterministically compute integers
\[
 0\le\alpha_1<\cdots<\alpha_n=: \alpha_\star,
 \qquad \alpha_\star=O(n^2),
\]
such that all nonzero ordered differences
$\alpha_\ell-\alpha_j$, $j\ne\ell$, are distinct.
\end{lemma}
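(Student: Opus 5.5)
We will use the classical Erdős--Turán construction.  First compute the smallest prime $p\ge n$ by trial division over the integers $n,n+1,\ldots$.  By Bertrand's postulate $p<2n$, so this search takes time polynomial in $n$.  For $i=0,1,\ldots,p-1$ define
\[
 \beta_i:=2pi+(i^2\bmod p),
\]
where $i^2\bmod p\in\{0,\ldots,p-1\}$.  These integers are nonnegative and strictly increasing in $i$: since $0\le i^2\bmod p<p$, we have $\beta_{i+1}-\beta_i\ge2p-(p-1)>0$.  Set $\alpha_j:=\beta_{j-1}$ for $1\le j\le n$, which uses $n\le p$.  Then $\alpha_\star=\beta_{n-1}<2pn\le4n^2$, so $\alpha_\star=O(n^2)$.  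The whole computation is clearly deterministic and polynomial time.

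The main step is the Sidon property.  Distinctness of all nonzero ordered differences $\alpha_\ell-\alpha_j$ is equivalent to the following statement: $\beta_a+\beta_b=\beta_c+\beta_d$ with $\{a,b\}\ne\{c,d\}$ as multisets is impossible.  Indeed, an equality $\alpha_\ell-\alpha_j=\alpha_{\ell'}-\alpha_{j'}$ with $(\ell,j)\ne(\ell',j')$ and $j\ne\ell$ rearranges to $\alpha_\ell+\alpha_{j'}=\alpha_{\ell'}+\alpha_j$.  The multiset $\{\ell,j'\}$ could equal $\{\ell',j\}$ only if $\ell=\ell'$ and $j=j'$, which is excluded, or if $\ell=j$, which is also excluded.

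So suppose $\beta_a+\beta_b=\beta_c+\beta_d$.  Write $r_i:=i^2\bmod p$.  Then
\[
 2p(a+b-c-d)=r_c+r_d-r_a-r_b .
\]
The right-hand side has absolute value at most $2(p-1)<2p$, so both sides vanish.  This gives $a+b=c+d$ and $r_a+r_b=r_c+r_d$ as integers, hence $a^2+b^2\equiv c^2+d^2\pmod p$.

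Put $s:=a+b=c+d$ and reduce modulo $p$.  From $a^2+b^2=s^2-2ab$ we get $2ab\equiv2cd$, and since $p$ is odd, $ab\equiv cd$.  (If $p=2$, then $n\le2$ and any two distinct integers trivially form a Sidon sequence, so we may assume $p$ odd.)  Therefore $\{a,b\}$ and $\{c,d\}$ are the two roots, with multiplicity, of the same polynomial $X^2-sX+ab$ over $\F_p$.  Since $\F_p$ is a field, the two multisets agree modulo $p$.  All indices lie in $\{0,\ldots,p-1\}$, so the multisets agree as integers, which is a contradiction.

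The only delicate point is the carry-free separation of the two scales, $2p$ times the index versus the residue below $p$.  The factor $2p$ is chosen exactly so that this separation holds.
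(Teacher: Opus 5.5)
Your proof is correct and uses the same Erd\H{o}s--Tur\'an quadratic-residue construction the paper cites; the paper only gives the citation and defers the directed-difference verification to Erd\H{o}s--Tur\'an and Liu--Feng--Pan, and your carry-free separation plus the root-multiset argument over $\F_p$ supply that verification in full. One trivial slip: for $n=1$ the smallest prime $p\ge n$ is $2=2n$, so Bertrand only gives $p\le 2n$, but that is all your bound $\alpha_\star<2pn\le4n^2$ needs.
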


\begin{proof}
Use the deterministic quadratic-residue Sidon construction of
Erd\H{o}s and Tur\'an~\cite{ErdosTuran41}.  For an explicit construction and
the verification of the directed-difference property, see also
Liu--Feng--Pan~\cite{LiuFengPan26}.
\end{proof}

\subsection{Power-sum estimates}

Let $q$ be prime, let $2\le k<m$, and let
$\mathbf b=(b_1,\ldots,b_{k-1})\in\F_q^{k-1}$.  Define the
\emph{power-sum solution set}
\[
 \mathcal X_{k,m}(\mathbf b)
 :=\left\{\mathbf z=(z_1,\ldots,z_m)\in\F_q^m:
   \sum_{r=1}^m z_r^j=b_j\quad(1\le j<k)\right\}.
\]
The following proposition records the uniform point-count
bounds used below.

\begin{proposition}[{\cite{wan2026np}}]
\label{prop:wan-power-sum-estimates}
Let $q$ be prime and suppose that
$2\le k\le m-2<q-2$.  For every
$\mathbf b\in\F_q^{k-1}$, the following estimates hold.
\begin{enumerate}[label=(\roman*)]
 \item
 \[
  \left||\mathcal X_{k,m}(\mathbf b)|-q^{m-k+1}\right|
  \le \frac12(2k)^m q^{(m-k+2)/2}.
 \]
 \item For distinct $i,\ell\in[m]$,
 \[
  \left|
   \bigl|\{\mathbf z\in\mathcal X_{k,m}(\mathbf b):z_i=z_\ell\}\bigr|
   -q^{m-k}
  \right|
  \le \frac12(2k)^{m-1}q^{(m-k+2)/2}.
 \]
 \item For $i\in[m]$ and $a\in\F_q$,
 \[
  \left|
   \bigl|\{\mathbf z\in\mathcal X_{k,m}(\mathbf b):z_i=a\}\bigr|
   -q^{m-k}
  \right|
  \le \frac12(2k)^{m-1}q^{(m-k+1)/2}.
 \]
\end{enumerate}
All three estimates are uniform in $\mathbf b$, and the last two are also
uniform in the displayed choices of indices and field elements.
\end{proposition}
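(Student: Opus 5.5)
The plan is to prove the three estimates by a Lang--Weil/Deligne-type point count for affine complete intersections. The cited source~\cite{wan2026np} is where such estimates are established. Let $V_{\mathbf b}\subseteq\mathbb A^m$ be the affine variety cut out by the $k-1$ equations $p_j(\mathbf z):=\sum_r z_r^j=b_j$, $1\le j<k$, so that $\mathcal X_{k,m}(\mathbf b)=V_{\mathbf b}(\F_q)$. The expected dimension is $d=m-k+1$, which explains the main term $q^{m-k+1}$. The error exponent $(m-k+2)/2=(d+1)/2$ is exactly what the Hooley--Katz theorem predicts when the relevant singular loci have dimension at most $0$. The constant $(2k)^m$ will come from Katz's bound on the sum of compactly supported Betti numbers. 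That bound is of size roughly $(\text{number of equations}\cdot\text{degree}+1)^{m}$, and here it is at most $(2k)^m/2$.

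First I will check the complete-intersection and smoothness hypotheses. The Jacobian of $(p_1,\dots,p_{k-1})$ has rows $(j z_r^{j-1})_r$. Since $q>k$, the factors $j$ are units in $\F_q$, so the Jacobian is a scaled Vandermonde matrix in the coordinate values. It therefore has full rank $k-1$ exactly when $\mathbf z$ has at least $k-1$ distinct coordinates. The singular points of $V_{\mathbf b}$ are thus the $\mathbf z$ with at most $k-2$ distinct values $u_1,\dots,u_{k-2}$ and multiplicities $\mu_i$, satisfying $\sum_i\mu_iu_i^j=b_j$ for $1\le j<k$. For each multiplicity pattern this is an overdetermined system of $k-1$ equations in at most $k-2$ unknowns.

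The delicate part is at infinity. There the leading forms give the homogeneous system $p_j(\mathbf z)=0$, $1\le j<k$. By Newton's identities (valid since $q>m$), this is equivalent to the vanishing of the elementary symmetric functions $e_1,\dots,e_{k-1}$. I will verify that the singular locus of this projective section has dimension at most $0$ uniformly in $\mathbf b$. The condition $m-2<q-2$ gives invertibility of all needed binomial and factorial constants, and $k\le m-2$ keeps the codimension count valid. With the singular-locus bound in hand, Hooley--Katz gives (i) with the displayed constant.

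For (iii), substituting $z_i=a$ turns $V_{\mathbf b}$ into $V_{\mathbf b'}$ in $m-1$ variables, with $b'_j=b_j-a^j$. Estimate (i) then applies with $m-1$ in place of $m$; the hypothesis $k\le m-3$ fails only at the boundary, which will need a direct check. This yields the main term $q^{m-k}$ and error $\tfrac12(2k)^{m-1}q^{(m-k+1)/2}$, uniformly in $a$ and $\mathbf b$. For (ii), substituting $z_i=z_\ell=u$ gives the weighted power sums $2u^j+\sum_{r\ne i,\ell}z_r^j=b_j$ in $m-1$ variables. I will run the same Jacobian and infinity analysis with the weight $2$, which is a unit since $q$ is odd. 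In the weighted case I expect only a $0$-dimensional loss in the singular locus. That would give dimension $m-k$ and error exponent $(m-k+2)/2$, matching the statement.

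The main obstacle is the uniform bound on singular loci, both affine and at infinity, for every $\mathbf b$, including degenerate $\mathbf b$ (e.g.\ $\mathbf b=\mathbf 0$). For special $\mathbf b$ the overdetermined systems above could acquire positive-dimensional solution sets. My first attempt is to stratify by the multiplicity pattern and eliminate with Newton's identities: a configuration with at most $k-2$ distinct values forces a linear recurrence on $(b_j)$ of length at most $k-2$. I would then show that each such stratum contributes at most finitely many singular points. If this fails for some $\mathbf b$, the fallback is to bypass geometry and expand $|\mathcal X_{k,m}(\mathbf b)|$ via additive characters as $q^{-(k-1)}\sum_{\mathbf t}\psi(-\mathbf t\cdot\mathbf b)\,S(\mathbf t)^m$, where $S(\mathbf t)=\sum_z\psi(\sum_jt_jz^j)$. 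One would then bound the nontrivial terms using Weil and a sieve over coincident coordinates. I expect this route to recover the same shape of bound only with more work.
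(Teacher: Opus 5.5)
\textbf{Comparison with the paper.} The paper does not re-derive these counts. It takes (i), and (ii) for the pair $z_1=z_2$, from Wan's Propositions~5.2 and~6.2. It notes that Wan's bounds depend only on $q,k,m$ and are therefore uniform in $\mathbf b$, and it gets (ii) for an arbitrary pair $i\ne\ell$ by permuting coordinates. It proves (iii) exactly as you do: delete the fixed coordinate and apply (i) to $m-1$ variables with power sums $b_j-a^j$. Your remark that this step needs $k\le m-3$ is correct, and the paper's one-line reduction does not treat the boundary case $k=m-2$ separately.

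\textbf{The gap: the explicit constant.} Your plan re-derives Wan's estimates via Hooley--Katz, and the step that fails is the constant. The proposition asserts $\frac12(2k)^m$, but the general Betti-number bounds you invoke see only the number of equations $k-1$ and the maximal degree $k-1$.
\begin{itemize}
\item The quantity you describe, $((k-1)^2+1)^m$, already exceeds $(2k)^m$ for every $k\ge4$.
\item Katz's actual bound, of order $2^{r}(rd+3)^{m+1}$ with $r=d=k-1$, grows like $k^{2m}$.
\item An explicit Hooley--Katz estimate (e.g.\ the version of Ghorpade and Lachaud) also has a secondary term $C\,q^{(m-k+1)/2}$ with $C$ of this Katz size. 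Absorbing it into $\frac12(2k)^mq^{(m-k+2)/2}$ requires $q$ exponentially large in $m$, whereas the proposition assumes only $q>m$.
\end{itemize}
As written, your argument gives the right exponent with a weaker constant, not the stated inequality. To reach $\frac12(2k)^m$ you need a bound on the relevant primitive Betti numbers that exploits the specific multidegree $(1,2,\dots,k-1)$, or you must cite Wan as the paper does.

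Your fallback does not close the gap. Weil gives only $|S(\mathbf t)|\le(k-2)\sqrt q$ termwise, hence an error of $(k-2)^mq^{m/2}$. For $k\ge3$ this exponent is worse than $(m-k+2)/2$. For $m<2k-2$ it even exceeds the main term $q^{m-k+1}$, and the paper applies the proposition with $m\approx\frac{201}{200}k$.

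\textbf{Your stated main obstacle is not a real one.} Resolving it is exactly what makes your route uniform in $\mathbf b$. The section at infinity is cut out by $p_1,\dots,p_{k-1}$ alone and does not depend on $\mathbf b$. By your Jacobian computation, a singular point of the cone $\{p_1=\cdots=p_{k-1}=0\}$ has distinct values $u_1,\dots,u_s$ with $s\le k-2$, and multiplicities $\mu_i\in[1,m]$, all units modulo $q$. The equations $\sum_i(\mu_iu_i)u_i^{j-1}=0$ for $1\le j\le s$ form a Vandermonde system, so every $u_i=0$.

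Hence $V_\infty$ is a smooth complete intersection of dimension $m-k$, and $\bar V_{\mathbf b}$ is smooth along it. So $\mathrm{Sing}(\bar V_{\mathbf b})$ is a closed subset of $\mathbb P^m$ disjoint from a hyperplane, hence finite, for every $\mathbf b$. No stratification by multiplicity patterns is needed. The same argument, counting $u$ twice, handles the weighted system in (ii).

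\textbf{A bookkeeping slip in (ii).} A variety of dimension $m-k$ with finite singular locus gives $q^{(m-k+1)/2}$, not $q^{(m-k+2)/2}$. That stronger exponent is harmless, but the constant problem above remains.
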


\begin{proof}
Wan~\cite[Propositions~5.2 and~6.2]{wan2026np} proves
part~(i) and proves part~(ii) for $z_1=z_2$, respectively.  The statements in
that paper use one particular vector of prescribed power sums $\mathbf b$.
The displayed estimates and their proofs depend only on $q$, $k$, and $m$,
so they hold uniformly for every $\mathbf b$.

For distinct $i,\ell\in[m]$, a permutation of the $m$ coordinates sends the
condition $z_i=z_\ell$ to $z_1=z_2$ and leaves each equation
$\sum_{r=1}^m z_r^j=b_j$ unchanged.  Therefore Wan's estimate for
$z_1=z_2$ proves part~(ii) for every pair $i\ne\ell$.

For part~(iii), fix $z_i=a$ and delete that coordinate.  The remaining
$m-1$ variables satisfy $\sum_{r\ne i}z_r^j=b_j-a^j$ for $1\le j<k$.
Applying Wan's first point-count estimate to these $m-1$ variables proves
part~(iii).
\end{proof}

\section{Reed--Solomon lattices and ideal cosets}
\label{sec:descent}

The first step of the reduction stores an X3C selection vector
$\boldsymbol\xi\in\{0,1\}^n$ in designated coordinates of a longer vector
$\mathbf x\in\{0,1\}^{\F_q}$.  Evaluation at $\zeta$ maps $\mathbf x$ to
$[\mathbf x]=\sum_{a\in\F_q}x_a\zeta^a\in\mathcal{O}_K$.  The checker in
\cref{sec:x3c-checker} operates on fixed-weight binary vectors and reads
$x_{\alpha_j}=\xi_j$ as the selection bits.  Hence every $\boldsymbol\xi$ must
admit a weight-$h$ extension $\mathbf x$ whose evaluation lies in one fixed
ideal coset.  This coset is independent of $\boldsymbol\xi$.  The ideal is
designed to have a large minimum, so that a nonzero ideal
element cannot produce an unintended short vector.  This section obtains such
an ideal from a lifted Reed--Solomon lattice~\cite{BennettPeikert23}, proves
that the required binary extensions exist even after their designated
coordinates are fixed, and establishes the additional norm bounds needed for
nonbinary representatives and scalar multiples.

\subsection{From lattice to ideal}

We first transfer an $\ell_2$-minimum bound for the lifted
Reed--Solomon lattice to a principal ideal of $\mathcal{O}_K$.

Fix an ordering $\F_q=\{a_1,\ldots,a_q\}$ and index the coordinates
of $\Z^q$ accordingly.  For $1\le k<q$, define
\[
 \mathbf H_q(k):=
 \begin{pmatrix}
  1&1&\cdots&1\\
  a_1&a_2&\cdots&a_q\\
  a_1^2&a_2^2&\cdots&a_q^2\\
  \vdots&\vdots&\ddots&\vdots\\
  a_1^{k-1}&a_2^{k-1}&\cdots&a_q^{k-1}
 \end{pmatrix}
 \in\F_q^{k\times q}.
\]
When $\mathbf H_q(k)$ is applied to
$\mathbf z\in\Z^q$, we first reduce the coordinates of $\mathbf z$ modulo
$q$.  The matrix $\mathbf H_q(k)$
defines the lifted Reed--Solomon lattice
\[
 \mathcal L_{q,k}:=\left\{\mathbf z\in\Z^q:
 \mathbf H_q(k)\mathbf z=\mathbf0\right\}.
\]

\begin{samepage}
\begin{lemma}[{\cite[Theorem~14]{BennettPeikert23}}]
\label{lem:rs-minimum}
Let $q$ be prime and $1\le k\le q/2$.  Then the first minimum of
$\mathcal L_{q,k}$ with respect to the $\ell_2$-norm $\|\cdot\|_2$ satisfies
\[
 \lambda_1(\mathcal L_{q,k})^2\ge 2k.
\]
\end{lemma}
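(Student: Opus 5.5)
Here is how I would prove that every nonzero $\mathbf z\in\mathcal L_{q,k}$ satisfies $\|\mathbf z\|_2^2\ge 2k$. The idea is to split $\mathbf z$ into positive and negative parts and read the parity-check equations as equalities of power sums over $\F_q$.

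\emph{Step 1: power sums of the two parts agree.} Write $\mathbf z=\mathbf z^+-\mathbf z^-$ with $\mathbf z^\pm\in\Z_{\ge0}^q$ of disjoint supports. Form the multisets $P$, in which each $a\in\F_q$ appears $z^+_a$ times, and $N$, in which $a$ appears $z^-_a$ times. The condition $\mathbf H_q(k)\mathbf z=\mathbf0$ says exactly that
\[
\sum_{a\in P}a^j=\sum_{a\in N}a^j \quad\text{in }\F_q,\qquad 0\le j<k.
\]
The row $j=0$ gives $|P|\equiv|N|\pmod q$.

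\emph{Step 2: small support is impossible.} Suppose $\|\mathbf z\|_1=|P|+|N|<2k$; I will derive a contradiction. Set $s=\max(|P|,|N|)$, which is $<2k\le q$.
\begin{itemize}
\item If $|P|\neq|N|$, the congruence from $j=0$ forces $\bigl||P|-|N|\bigr|\ge q$. That would give $|P|+|N|\ge q\ge 2k$, a contradiction. So $|P|=|N|=s<k$.
\item Newton's identities express the elementary symmetric polynomials $e_1,\dots,e_s$ through the power sums $p_1,\dots,p_s$. Since $s<k\le q/2<q$, the division by $1,\dots,s$ that these identities require is legitimate in $\F_q$.
\item Hence $P$ and $N$ have the same elementary symmetric functions, so $\prod_{a\in P}(X-a)=\prod_{a\in N}(X-a)$ in $\F_q[X]$. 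Unique factorization then gives $P=N$ as multisets.
\end{itemize}
This contradicts the disjoint supports and $\mathbf z\ne\mathbf0$. Therefore $\|\mathbf z\|_1\ge 2k$.

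\emph{Step 3: from $\ell_1$ to squared $\ell_2$.} Because the entries are integers, $z_a^2\ge|z_a|$ for every $a$. So $\|\mathbf z\|_2^2\ge\|\mathbf z\|_1\ge 2k$, which proves the lemma.

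The step that needs the most care is the Newton-identity argument in Step 2. Newton's identities need the division by $1,\dots,s$ to be valid, which holds because $s<q$ under the hypothesis $k\le q/2$. They also need the power sums $p_j$ only up to $j=s\le k-1$, and these are all covered by the $k$ rows of $\mathbf H_q(k)$. The one earlier point to watch is excluding $|P|\ne|N|$ through the $j=0$ row, which again uses $2k\le q$. Since this is a cited result from \cite{BennettPeikert23}, one could alternatively just invoke it, but the argument above is self-contained.
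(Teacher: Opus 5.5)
Your proof is correct and complete. The zeroth row forces $|P|=|N|$ whenever $\|\mathbf z\|_1<2k\le q$. Newton's identities, valid in $\F_q$ because $s<k<q$, then give $P=N$, which is impossible for disjoint nonempty supports. Finally, $z_a^2\ge|z_a|$ turns $\|\mathbf z\|_1\ge2k$ into the squared-$\ell_2$ bound.

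The paper does not prove this lemma; it imports it as Bennett--Peikert's Theorem~14. As far as I recall, their proof is essentially the same positive/negative power-sum and Newton-identity argument, so your write-up makes the cited fact self-contained.
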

\end{samepage}

Write
$\mathbf1_q^\perp:=\{\mathbf z\in\R^q:
\mathbf1_q^{\mathsf T}\mathbf z=0\}$.  To express the lattice equations and
evaluation at $\zeta$ in the same ring, associate $\mathbf z\in\Z^q$ with
\[
 f_{\mathbf z}:=\sum_{a\in\F_q}z_aX^a
 \in\Z[X]/(X^q-1).
\]
Every element of this quotient has a unique degree-$<q$ representative, which
we use to define its coefficients and its reduction $\bar f\in\F_q[X]$ modulo
$q$.  Let
\[
 \varphi:\Z[X]/(X^q-1)\longrightarrow \mathcal{O}_K,
 \qquad X\longmapsto\zeta,
\]
be the ring homomorphism given by evaluation at $\zeta$.  Its kernel is
generated by $\Phi_q(X)=1+X+\cdots+X^{q-1}$, and
$\varphi(f_{\mathbf z})=[\mathbf z]$.

\begin{theorem}
\label{thm:cyclotomic-descent}
For $1\le k\le q/2$,
\[
 \{[\mathbf z]:\mathbf z\in\mathcal L_{q,k}\}
 =\mathfrak a_{q,k}:=\pi^k\mathcal{O}_K.
\]
Moreover, the map $\mathbf z\mapsto[\mathbf z]$ restricts to an
additive-group isomorphism
\[
 \mathcal L_{q,k}\cap\mathbf1_q^\perp
 \ \longrightarrow\ \mathfrak a_{q,k}.
\]
Consequently, with respect to the canonical norm,
\[
 \lambda_1(\mathfrak a_{q,k})^2\ge2kq.
\]
\end{theorem}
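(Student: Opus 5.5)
The plan has three steps: translate the parity-check equations into divisibility of $\bar f_{\mathbf z}$ by $(X-1)^k$ in $\F_q[X]$; lift that divisibility to divisibility by $\pi^k$ in $\mathcal{O}_K$; and then transfer norms via \cref{lem:cyclotomic-form}.

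\textbf{Step 1 (lattice equations as a multiplicity condition).} Since $k\le q/2<q$, the integers $0!,\ldots,(k-1)!$ are units mod $q$. So the rows $(a^j)_{a}$ for $0\le j<k$ and the rows $\bigl(\binom{a}{j}\bigr)_a$ for $0\le j<k$ span the same $\F_q$-space: the binomial polynomials $\binom Xj$, $j<k$, form a basis of polynomials of degree $<k$. Hence $\mathbf z\in\mathcal L_{q,k}$ iff $\sum_a z_a\binom aj\equiv 0\pmod q$ for all $j<k$. This expression is exactly $D^{[j]}\bar f_{\mathbf z}(1)$, using the degree-$<q$ representative with exponents $a\in\{0,\ldots,q-1\}$. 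By \cref{eq:hasse-multiplicity}, $\mathbf z\in\mathcal L_{q,k}$ iff $(X-1)^k\mid\bar f_{\mathbf z}$ in $\F_q[X]$.

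\textbf{Step 2 (the image equals $\pi^k\mathcal O_K$).} Use $\mathcal O_K/\pi^k\mathcal O_K$. Because $k\le q-1$ and $(q)=(\pi)^{q-1}$, we have $q\in\pi^k\mathcal O_K$ when $k\le q-1$. Hence $\varphi$ induces a surjection $\F_q[X]\to\mathcal O_K/\pi^k$ with $X\mapsto\zeta=1-\pi$, and its kernel contains $(X-1)^k$. Both $\F_q[X]/(X-1)^k$ and $\mathcal O_K/\pi^k$ have cardinality $q^k$, since $N(\pi)=q$. So the induced map is an isomorphism $\F_q[X]/((X-1)^k)\cong\mathcal O_K/\pi^k\mathcal O_K$, and $\varphi(\bar f)\in\pi^k\mathcal O_K$ iff $(X-1)^k\mid \bar f$. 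We must check that reducing mod $X^q-1$ is harmless; it is, since $(X-1)^k\mid X^q-1=(X-1)^q$ in $\F_q[X]$. Combined with Step 1, this gives $[\mathbf z]\in\pi^k\mathcal O_K$ iff $\mathbf z\in\mathcal L_{q,k}$. For surjectivity, every element of $\mathcal O_K$ is $[\mathbf z]$ for some $\mathbf z\in\Z^q$, and such a $\mathbf z$ lies in $\mathcal L_{q,k}$ by the equivalence. The main obstacle I expect is bookkeeping here: that the Hasse derivatives of the reduced representative match the $\mathbf H_q(k)$ rows, with exponents read in $\{0,\ldots,q-1\}$, and the cardinality count.

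\textbf{Step 3 (isomorphism and minimum).} Integral representatives of an element differ by multiples of $\mathbf 1_q$, and $\mathbf1_q\in\mathcal L_{q,k}$ because $\sum_a a^j\equiv0$ for $1\le j<k\le q-2$ and $q\equiv0$ for $j=0$. Hence every element of $\mathfrak a_{q,k}$ has a representative in $\mathcal L_{q,k}$. Subtracting $t\mathbf1_q$ cannot reach $\mathbf1_q^\perp$ over $\Z$ in general, so I instead pick the representative with the right structure. Injectivity on $\mathbf1_q^\perp$ is immediate, since the kernel is $\Z\mathbf1_q$ and $\Z\mathbf1_q\cap\mathbf1_q^\perp=0$. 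For surjectivity onto $\mathfrak a_{q,k}$, observe that for $\mathbf z\in\mathcal L_{q,k}$ the $j=0$ row gives $\sum z_a\equiv0\pmod q$, so $\mathbf z-\frac{\sum z_a}{q}\mathbf1_q$ is integral, still lies in $\mathcal L_{q,k}$, and lies in $\mathbf1_q^\perp$. Finally, for $\mathbf z\in\mathcal L_{q,k}\cap\mathbf1_q^\perp$, \cref{lem:cyclotomic-form} gives $\|[\mathbf z]\|^2=q\|\mathbf z\|_2^2$. Every nonzero element of $\mathfrak a_{q,k}$ arises from some nonzero such $\mathbf z$, so \cref{lem:rs-minimum} yields $\|[\mathbf z]\|^2\ge 2kq$.
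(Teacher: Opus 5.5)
Your proposal is correct and follows the paper's proof essentially step for step: the Hasse-derivative translation of the parity checks, the correction by a multiple of $\mathbf 1_q$ (the paper's $g=f-\frac{f(1)}{q}\Phi_q$), and the norm transfer $\|[\mathbf z]\|^2=q\|\mathbf z\|_2^2$ are all the same. The only variation is in Step~2. There you obtain $\F_q[X]/((X-1)^k)\cong\mathcal O_K/\pi^k\mathcal O_K$ from surjectivity together with the count $|\mathcal O_K/\pi^k\mathcal O_K|=q^k$, whereas the paper computes $\varphi\bigl((q,(X-1)^k)\bigr)=(q,\pi^k)=(\pi^k)$ and checks $\ker\varphi=(\Phi_q)\subseteq(q,(X-1)^k)$ via $\Phi_q\equiv(X-1)^{q-1}\pmod q$; both give the same description of the preimage of $\pi^k\mathcal O_K$.
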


\begin{proof}
 Since $k<q$, the polynomials
 $1,X,\ldots,X^{k-1}$ and
 $\binom X0,\binom X1,\ldots,\binom X{k-1}$ are related by an invertible
triangular change of basis over $\F_q$.  Hence the defining equations of
$\mathcal L_{q,k}$ are equivalent to
\[
 D^{[j]}\bar f_{\mathbf z}(1)=0
 \qquad(0\le j<k),
\]
because
$D^{[j]}\bar f_{\mathbf z}(1)=\sum_a z_a\binom aj$ in $\F_q$.
The multiplicity criterion in \cref{eq:hasse-multiplicity} and the identity
$X^q-1=(X-1)^q$ in $\F_q[X]$ therefore give
\[
 \{f_{\mathbf z}:\mathbf z\in\mathcal L_{q,k}\}
 =\bigl(q,(X-1)^k\bigr)
 \triangleleft\Z[X]/(X^q-1).
\]
Applying $\varphi$ gives the ideal $(q,\pi^k)$.  Total ramification gives
$(q)=(\pi^{q-1})$, so this image is
$(\pi^k)=\mathfrak a_{q,k}$ since $k\le q-1$.

Furthermore, $\Phi_q(X)\equiv(X-1)^{q-1}\pmod q$, so
$\ker(\varphi)=(\Phi_q)$ is contained in
$(q,(X-1)^k)$.  Since $\varphi((q,(X-1)^k))=\mathfrak a_{q,k}$, it follows that
\begin{equation}
\label{eq:ideal-preimage}
 \varphi^{-1}(\mathfrak a_{q,k})
 =\bigl(q,(X-1)^k\bigr)
 \quad\text{in }\Z[X]/(X^q-1).
\end{equation}

To prove surjectivity of the restricted map, fix
$v\in\mathfrak a_{q,k}$.  By \cref{eq:ideal-preimage}, choose
$f\in(q,(X-1)^k)$ with $\varphi(f)=v$.  Then $f(1)\in q\Z$, and
\[
 g:=f-\frac{f(1)}q\Phi_q
\]
remains in $(q,(X-1)^k)$, satisfies $\varphi(g)=v$, and has coefficient sum
zero.  Thus $g=f_{\mathbf z}$ for some
$\mathbf z\in\mathcal L_{q,k}\cap\mathbf1_q^\perp$.

For injectivity, suppose that
$\mathbf z\in\mathcal L_{q,k}\cap\mathbf1_q^\perp$ and $[\mathbf z]=0$.
The degree-$<q$ representative of $f_{\mathbf z}$ vanishes at $\zeta$, so it
equals $\ell\Phi_q$ for some $\ell\in\Z$.  Its coefficient sum is zero,
whereas $\Phi_q(1)=q$.  Thus $\ell=0$ and $\mathbf z=\mathbf0$.

Finally, every nonzero $v\in\mathfrak a_{q,k}$ has a unique nonzero
$\mathbf z\in\mathcal L_{q,k}\cap\mathbf1_q^\perp$ with $v=[\mathbf z]$.
By \cref{lem:cyclotomic-form,lem:rs-minimum},
\[
 \|v\|^2=q\|\mathbf z\|_2^2\ge2kq,
\]
which proves the minimum bound.
\end{proof}

\subsection{Binary representatives in ideal cosets}

For $\mathbf x\in\Z^q$, its \emph{syndrome}
$\mathbf H_q(k)\mathbf x\in\F_q^k$ has the coordinate indexed by
$j$ equal to
$\sum_{a\in\F_q}x_aa^j$ modulo $q$ for $0\le j<k$.  When $\mathbf x$ is
binary, these coordinates are the power sums of the positions $a$ for which
$x_a=1$.  \Cref{lem:coset-syndrome} identifies the vectors with fixed syndrome
and coefficient sum with one coset of $\mathfrak a_{q,k}$.

\begin{lemma}
\label{lem:coset-syndrome}
Assume $1\le k\le q/2$.  Let
$\mathbf u=(u_0,\ldots,u_{k-1})\in\F_q^k$, and let $s\in\Z$ have
residue $u_0$ modulo $q$.  There exists $\mathbf x_{\mathbf u,s}\in\Z^q$
such that
\[
 \mathbf H_q(k)\mathbf x_{\mathbf u,s}=\mathbf u,
 \qquad
 \mathbf1_q^{\mathsf T}\mathbf x_{\mathbf u,s}=s.
\]
Moreover, evaluation at $\zeta$ induces a bijection
\[
 \{\mathbf x\in\Z^q:
   \mathbf H_q(k)\mathbf x=\mathbf u,
   \ \mathbf1_q^{\mathsf T}\mathbf x=s\}
 \ \longrightarrow\
 [\mathbf x_{\mathbf u,s}]+\mathfrak a_{q,k},
 \qquad \mathbf x\longmapsto[\mathbf x].
\]
Thus every element of this ideal coset has a unique representative with
syndrome $\mathbf u$ and coefficient sum $s$.
\end{lemma}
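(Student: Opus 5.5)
The proof has three parts: construct one explicit vector with the prescribed syndrome and coefficient sum, show that any two such vectors differ by an element of $\mathcal L_{q,k}\cap\mathbf1_q^\perp$, and then invoke the isomorphism of \cref{thm:cyclotomic-descent}.

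\emph{Existence.} The rows of $\mathbf H_q(k)$ are evaluations of the monomials $1,X,\ldots,X^{k-1}$ at the $q$ points of $\F_q$. Since $k\le q/2<q$, any $k$ columns form an invertible Vandermonde matrix, so $\mathbf H_q(k)$ has rank $k$ and is surjective onto $\F_q^k$. We take $\mathbf y\in\{0,\ldots,q-1\}^q$ with $\mathbf H_q(k)\mathbf y=\mathbf u$. The zeroth syndrome coordinate is $\mathbf1_q^{\mathsf T}\mathbf y$ reduced mod $q$, and it equals $u_0\equiv s$. Hence $s-\mathbf1_q^{\mathsf T}\mathbf y=tq$ for some $t\in\Z$. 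We set $\mathbf x_{\mathbf u,s}:=\mathbf y+tq\,\mathbf e_{a_1}$. Adding a multiple of $q$ to one coordinate leaves the syndrome unchanged, since syndromes are computed mod $q$, and it fixes the coefficient sum to be exactly $s$.

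\emph{Bijection.} Let $\mathcal S$ denote the fiber $\{\mathbf x\in\Z^q:\mathbf H_q(k)\mathbf x=\mathbf u,\ \mathbf1_q^{\mathsf T}\mathbf x=s\}$. By linearity, $\mathcal S=\mathbf x_{\mathbf u,s}+(\mathcal L_{q,k}\cap\mathbf1_q^\perp)$. Indeed, a difference of two elements of $\mathcal S$ has zero syndrome and zero coefficient sum, and conversely. Evaluation $\mathbf x\mapsto[\mathbf x]$ is additive, so
\[
 [\mathbf x_{\mathbf u,s}+\mathbf z]=[\mathbf x_{\mathbf u,s}]+[\mathbf z].
\]
By \cref{thm:cyclotomic-descent}, $\mathbf z\mapsto[\mathbf z]$ is a bijection from $\mathcal L_{q,k}\cap\mathbf1_q^\perp$ onto $\mathfrak a_{q,k}$. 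Translating by $\mathbf x_{\mathbf u,s}$ on the domain side and by $[\mathbf x_{\mathbf u,s}]$ on the image side therefore gives a bijection $\mathcal S\to[\mathbf x_{\mathbf u,s}]+\mathfrak a_{q,k}$. The final uniqueness sentence of the lemma is a restatement of injectivity.

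The main point to watch is existence. The zeroth syndrome row only sees the coefficient sum mod $q$, so a preimage of $\mathbf u$ may have the wrong integer sum. The correction by $tq\,\mathbf e_{a_1}$ fixes this, and it works precisely because of the hypothesis $s\equiv u_0\pmod q$. Everything else follows directly from \cref{thm:cyclotomic-descent}.
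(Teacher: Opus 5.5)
Your proof is correct and follows essentially the paper's argument: you get surjectivity of $\mathbf H_q(k)$ from a Vandermonde minor, write the fiber as a translate of $\mathcal L_{q,k}\cap\mathbf1_q^\perp$, and obtain the bijection from \cref{thm:cyclotomic-descent}. The one small difference is how you fix the coefficient sum: you add $tq\,\mathbf e_{a_1}$, whereas the paper adds a multiple of $\mathbf1_q$. Your choice avoids checking that $\mathbf1_q\in\mathcal L_{q,k}$ and $[\mathbf1_q]=0$, and it shifts $[\mathbf x_{\mathbf u,s}]$ only by an element of $q\mathcal{O}_K\subseteq\mathfrak a_{q,k}$, so the coset is unaffected.
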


\begin{proof}
A Vandermonde $k\times k$ minor of $\mathbf H_q(k)$ has determinant
$\prod_{1\le i<j\le k}(a_j-a_i)\ne0$ in $\F_q$.  Hence the matrix has
rank $k$.  Choose $\mathbf x_0\in\Z^q$ whose reduction modulo $q$
satisfies $\mathbf H_q(k)\mathbf x_0=\mathbf u$ over $\F_q$.
Its coefficient sum is congruent to $u_0$, and hence to $s$ modulo $q$.
The vector $\mathbf1_q$ belongs to $\mathcal L_{q,k}$: its zeroth power sum is
$q=0$ in $\F_q$, and
$\sum_{a\in\F_q}a^j=0$ for $1\le j<k<q-1$.  Moreover,
$[\mathbf1_q]=0$.  Therefore adding a suitable integer multiple of
$\mathbf1_q$ to $\mathbf x_0$ gives a vector $\mathbf x_{\mathbf u,s}$ with
the required coefficient sum without changing its syndrome or its image at
$\zeta$.

Every other vector with syndrome $\mathbf u$ and coefficient sum $s$ differs
from $\mathbf x_{\mathbf u,s}$ by an element of
$\mathcal L_{q,k}\cap\mathbf1_q^\perp$.  The restricted isomorphism in
\cref{thm:cyclotomic-descent} now gives the stated bijection.
\end{proof}

We use the following parameters to specialize
\cref{prop:wan-power-sum-estimates}.
For every sufficiently large prime $q$, put
\begin{equation}
\label{eq:rs-parameters}
 k:=\left\lfloor\frac{q^{1/1000}}2\right\rfloor,
 \qquad
 h:=\left\lfloor\frac{201k}{200}\right\rfloor,
 \qquad
 T:=\lfloor q^{1/2000}\rfloor.
\end{equation}
These integers can be computed without real-number
approximations.  The value $k$ is the largest nonnegative integer satisfying
$(2k)^{1000}\le q$, and $T$ is the largest nonnegative integer satisfying
$T^{2000}\le q$.  Fixed-degree integer-root algorithms compute $k$ and $T$
in time polynomial in $\log q$, after which integer division gives $h$.
Here $k$ is the number of rows of $\mathbf H_q(k)$.  For each
vector $\mathbf x\in\{0,1\}^{\F_q}$ considered below, $h$ is its weight, and $T$
is the maximum number of its coordinates whose values may be fixed in
advance.  The relevant scale relations are
$T=o(k)$, $h=(201/200)k+O(1)$, and $h=o(q)$.  Set
\begin{equation}
\label{eq:HL-parameters}
 H:=h(q-h),
 \qquad
 L:=2kq.
\end{equation}
By \cref{lem:cyclotomic-form}, every
$\mathbf x\in\{0,1\}^{\F_q}$ of weight $h$ satisfies $\|[\mathbf x]\|^2=H$.
By \cref{thm:cyclotomic-descent}, every nonzero element of
$\mathfrak a_{q,k}$ has canonical squared norm at least $L$.

For a prescribed syndrome $\mathbf u=(u_0,\ldots,u_{k-1})$, let
$P\subseteq\F_q$ be the \emph{position set} indexing the coordinates fixed
in advance, and let $\boldsymbol\eta\in\{0,1\}^P$ specify their values.
Define
\[
 \mathcal F(\mathbf u;P,\boldsymbol\eta):=
 \{\mathbf x\in\{0,1\}^{\F_q}:\mathbf H_q(k)\mathbf x=\mathbf u,\
 \mathbf1_q^{\mathsf T}\mathbf x=h,\
 \mathbf x|_P=\boldsymbol\eta\}.
 \]
The first row of $\mathbf H_q(k)$ forces $u_0=h$ in $\F_q$.

\begin{theorem}
\label{thm:uniform-completions}
For every sufficiently large prime $q$, every $\mathbf u\in\F_q^k$ with
$u_0=h$, every position set $P\subseteq\F_q$ with $|P|\le T$, and every
$\boldsymbol\eta\in\{0,1\}^P$, put $t:=\sum_{a\in P}\eta_a$.  Then
\begin{equation}
\label{eq:uniform-completion-bounds}
 |\mathcal F(\mathbf u;P,\boldsymbol\eta)|
 =\Theta\left(\frac{q^{h-t-k+1}}{(h-t)!}\right).
\end{equation}
 Both the threshold for $q$ and the implied constants are uniform in
 $\mathbf u$, $P$, and $\boldsymbol\eta$.
\end{theorem}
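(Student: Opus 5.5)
We reduce the count to the power-sum solution sets of \cref{prop:wan-power-sum-estimates}. Let $P_1:=\{a\in P:\eta_a=1\}$, so $|P_1|=t$, and put $m:=h-t$. A vector $\mathbf x\in\mathcal F(\mathbf u;P,\boldsymbol\eta)$ is the indicator of an $h$-subset of $\F_q$ that contains $P_1$ and avoids $P\setminus P_1$. Its syndrome condition says that the power sums of the remaining $m$ positions equal
\[
 b_j:=u_j-\sum_{a\in P_1}a^j\qquad(1\le j<k).
\]
The condition $j=0$ is automatic from $u_0=h$. Ordered $m$-tuples $\mathbf z\in\mathcal X_{k,m}(\mathbf b)$ with pairwise distinct coordinates, all outside $P$, therefore correspond $m!$-to-one to elements of $\mathcal F$. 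This gives
\[
 m!\,|\mathcal F(\mathbf u;P,\boldsymbol\eta)|
 =\bigl|\{\mathbf z\in\mathcal X_{k,m}(\mathbf b):z_i\ne z_\ell\ (i\ne\ell),\ z_i\notin P\}\bigr|.
\]

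We bound the right side above by $|\mathcal X_{k,m}(\mathbf b)|$ and below by inclusion--exclusion at first order, subtracting the bad events one at a time:
\[
 |\mathcal X_{k,m}(\mathbf b)|-\sum_{i<\ell}\bigl|\{z_i=z_\ell\}\bigr|-\sum_{i}\sum_{a\in P}\bigl|\{z_i=a\}\bigr|.
\]
First we check the hypotheses of \cref{prop:wan-power-sum-estimates}. Since $T=o(k)$ and $h\ge(201/200)k-1$, we have $m\ge h-T\ge k+2$ for large $q$, and $m\le h<q-2$. Part~(i) then gives
\[
 |\mathcal X_{k,m}(\mathbf b)|=q^{m-k+1}\bigl(1+O(\varepsilon)\bigr),\qquad \varepsilon:=(2k)^m q^{-(m-k)/2}.
\]
Parts~(ii) and~(iii) bound the subtracted terms by
\[
 \binom m2 q^{m-k}+mT q^{m-k}+O\bigl((2k)^m q^{(m-k+2)/2}(m^2+mT)\bigr).
\]
The main terms here total $O(h^2q^{m-k})=o(q^{m-k+1})$ because $h^2=O(q^{1/500})=o(q)$. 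All bounds are uniform in $\mathbf b$ (hence in $\mathbf u,P,\boldsymbol\eta$) and in the indices, which gives the required uniformity.

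The main obstacle is checking that the Weil-type error terms are negligible, that is, $(2k)^m h^2 q^{-(m-k)/2}=o(1)$. Here we use the choice \cref{eq:rs-parameters}: $(2k)\le q^{1/1000}$, so $(2k)^m\le q^{m/1000}$. Meanwhile $m-k\ge h-T-k\ge k/200-T-1$, and $T=O(q^{1/2000})$ is much smaller than $k/200\approx q^{1/1000}/400$, so $m-k\ge k/250$ for large $q$. The exponent of $q$ is then at most
\[
 \frac{m}{1000}-\frac{m-k}{2}+\frac{2}{500}.
\]
Using $m\le h\le 1.005k+O(1)$ and $m-k\ge k/250$, this is at most $0.001005k-0.002k+O(1)\to-\infty$. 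The error is therefore $q^{-\Omega(k)}$, uniformly. Dividing by $m!=(h-t)!$ then yields \cref{eq:uniform-completion-bounds}, with implied constants tending to $1$, so in particular they are absolute.
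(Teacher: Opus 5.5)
Your proposal is correct and follows essentially the same route as the paper. You shift the syndrome by the fixed ones and identify $m!\,|\mathcal F(\mathbf u;P,\boldsymbol\eta)|$ with the pairwise-distinct tuples in $\mathcal X_{k,m}(\mathbf b)$ that avoid $P$. You then sandwich this count between $|\mathcal X_{k,m}(\mathbf b)|$ and a first-order union bound using parts~(i)--(iii) of \cref{prop:wan-power-sum-estimates}, and your explicit exponent check that the Weil-type errors are $q^{-\Omega(k)}$ is, if anything, more detailed than the paper's.
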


\begin{proof}
Fix $\mathbf u$, $P$, and $\boldsymbol\eta$, and put $m_0:=h-t$.  By
\cref{eq:rs-parameters}, all sufficiently large $q$ satisfy
\[
 k+2\le m_0<q-2.
\]
The fixed coordinates contain exactly $t$ ones.  Subtract their contribution
by defining
\begin{equation}
\label{eq:shifted-syndrome}
 u'_j:=u_j-\sum_{a\in P}\eta_aa^j
 \qquad(1\le j<k).
\end{equation}
Every $\mathbf x\in\mathcal F(\mathbf u;P,\boldsymbol\eta)$ has exactly
$m_0$ coordinates outside $P$ equal to one.  Ordering their positions gives
a tuple $(z_1,\ldots,z_{m_0})\in(\F_q\setminus P)^{m_0}$ such that
$z_i\ne z_\ell$ for every $i\ne\ell$ and
\begin{equation}
\label{eq:ordered-power-sums}
 \sum_{i=1}^{m_0}z_i^j=u'_j
 \qquad(1\le j<k).
\end{equation}
Conversely, every pairwise distinct tuple in
$(\F_q\setminus P)^{m_0}$ satisfying \cref{eq:ordered-power-sums}
determines a unique vector by setting $x_{z_i}=1$, setting
$\mathbf x|_P=\boldsymbol\eta$, and setting all remaining coordinates to
zero.  The equation for $j=0$ follows from $u_0=h$.  Hence the number of
these ordered tuples is $m_0!\,|\mathcal F(\mathbf u;P,\boldsymbol\eta)|$.

Let $A$ count the tuples satisfying
\cref{eq:ordered-power-sums} without requiring their entries to be
pairwise distinct or to lie outside $P$.  After the shift in
\cref{eq:shifted-syndrome}, part~(i) of
\cref{prop:wan-power-sum-estimates} gives
$A=q^{m_0-k+1}(1+o(1))$.  We bound the excluded tuples in two cases.
\par\noindent
\emph{Case I (collision).}  For fixed $i<\ell$,
part~(ii) of \cref{prop:wan-power-sum-estimates} shows that
the number of solutions with
$z_i=z_\ell$ is $O(q^{m_0-k})$.
\par\noindent
\emph{Case II (entry in $P$).}  For fixed $i$ and $a\in P$, substituting
$z_i=a$ in \cref{eq:ordered-power-sums} gives
$\sum_{r\ne i}z_r^j=u'_j-a^j$ for $1\le j<k$.
Part~(iii) of \cref{prop:wan-power-sum-estimates} shows that
the number of solutions with $z_i=a$ is
$O(q^{m_0-k})$.

The explicit error bounds in
\cref{prop:wan-power-sum-estimates} are uniform in the prescribed
power-sum vector $(u'_j)_{1\le j<k}$.  For the
parameters in \cref{eq:rs-parameters}, each relative error is at
most
\[
 (2k)^{m_0}q^{-(m_0-k-2)/2}=q^{-\Omega(k)}=o(1).
\]
Thus the estimate for $A$, the bound for $z_i=z_\ell$, and the
bound for $z_i=a$ are uniform over all shifted power-sum vectors
$(u'_j)_{1\le j<k}$ and all
choices of $i$, $\ell$, and $a$.

The union bound over $\binom{m_0}{2}$ equalities $z_i=z_\ell$ and
$m_0|P|$ equalities $z_i=a$ with $a\in P$ shows that the nonnegative
difference satisfies
\[
 0\le A-m_0!\,|\mathcal F(\mathbf u;P,\boldsymbol\eta)|
 =O\bigl((m_0^2+m_0|P|)q^{m_0-k}\bigr).
\]
After division by the main term $q^{m_0-k+1}$, this error is
$O((m_0^2+m_0|P|)/q)=o(1)$ by the scale relations following
\cref{eq:rs-parameters}.  Together with the estimate for $A$, this gives
\[
 m_0!\,|\mathcal F(\mathbf u;P,\boldsymbol\eta)|
 =q^{m_0-k+1}(1+o(1)).
\]
Substituting $m_0=h-t$ proves \cref{eq:uniform-completion-bounds}.
\end{proof}

\begin{corollary}
\label{cor:fixed-coordinate-ratio}
Let $\mathbf u,P,\boldsymbol\eta$, and $t$ be as in
\cref{thm:uniform-completions}.  For $r\in\{1,2\}$, let
$Q\subseteq\F_q\setminus P$ have size $r$ and satisfy $|P|+r\le T$.
Put $P':=P\cup Q$ and extend $\boldsymbol\eta$ to
$\boldsymbol\eta'\in\{0,1\}^{P'}$ by setting $\eta'_a=1$ for every $a\in Q$.
Then
\begin{equation}
\label{eq:fixed-coordinate-ratio}
 \frac{|\mathcal F(\mathbf u;P',\boldsymbol\eta')|}
 {|\mathcal F(\mathbf u;P,\boldsymbol\eta)|}
 =O\left((h/q)^r\right).
\end{equation}
The implied constant is uniform in $\mathbf u$, $P$, $\boldsymbol\eta$, and
$Q$.
\end{corollary}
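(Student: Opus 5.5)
The plan is to apply \cref{thm:uniform-completions} twice: once to the pair $(P',\boldsymbol\eta')$ and once to $(P,\boldsymbol\eta)$, and then divide the two estimates. The hypotheses must be checked for both pairs. We have $|P|\le|P'|=|P|+r\le T$, and $\boldsymbol\eta'$ is a binary vector on $P'$, so the theorem applies to each pair with the same $\mathbf u$. Its threshold on $q$ and its implied constants are uniform in $\mathbf u$, $P$, and $\boldsymbol\eta$, so the same constants serve both applications.

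Write $t=\sum_{a\in P}\eta_a$ and $t'=\sum_{a\in P'}\eta'_a$. Every coordinate of $Q$ is set to one, so $t'=t+r$. The theorem then gives
\[
 |\mathcal F(\mathbf u;P',\boldsymbol\eta')|\le C_1\frac{q^{h-t-r-k+1}}{(h-t-r)!},
 \qquad
 |\mathcal F(\mathbf u;P,\boldsymbol\eta)|\ge C_2\frac{q^{h-t-k+1}}{(h-t)!},
\]
with absolute constants $C_1,C_2>0$. The lower bound is positive, so the quotient is well defined.

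Dividing the two bounds gives
\[
 \frac{|\mathcal F(\mathbf u;P',\boldsymbol\eta')|}{|\mathcal F(\mathbf u;P,\boldsymbol\eta)|}
 \le\frac{C_1}{C_2}\cdot\frac{(h-t)!}{(h-t-r)!}\,q^{-r}.
\]
For $r\in\{1,2\}$ the falling factorial $(h-t)!/(h-t-r)!$ is at most $(h-t)^r\le h^r$. This yields $O((h/q)^r)$ with a constant independent of $\mathbf u$, $P$, $\boldsymbol\eta$, and $Q$.

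The only point that needs care is that the factorials are well defined, namely $h-t-r\ge0$. This follows from $t+r\le T=o(k)$ and $h\ge k$. The proof of \cref{thm:uniform-completions} already needs $h-t-r\ge k+2$ for large $q$, and this inequality holds uniformly because $|P'|\le T$. I expect no real obstacle: the argument rests entirely on the two-sided, uniform $\Theta$-estimate.
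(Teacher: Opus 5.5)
Your proposal is correct and follows essentially the same route as the paper's proof. You apply \cref{thm:uniform-completions} to both $(P',\boldsymbol\eta')$, where $t'=t+r$, and $(P,\boldsymbol\eta)$, divide the two uniform estimates, and bound the falling factorial $(h-t)!/(h-t-r)!$ by $h^r$, after checking that $h-t-r\ge0$ because $t+r\le T<h$.
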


\begin{proof}
For all sufficiently large $q$,
$t+r\le|P|+r\le T<h$, so $h-t-r\ge0$.  The numerator in
\cref{eq:fixed-coordinate-ratio} counts exactly the vectors in
$\mathcal F(\mathbf u;P,\boldsymbol\eta)$ that equal one at every coordinate
in $Q$.  Applying \cref{thm:uniform-completions} to the numerator and
denominator gives
\[
 \frac{|\mathcal F(\mathbf u;P',\boldsymbol\eta')|}
 {|\mathcal F(\mathbf u;P,\boldsymbol\eta)|}
 =O\left(q^{-r}\frac{(h-t)!}{(h-t-r)!}\right)
 =O\left((h/q)^r\right).
\]
\end{proof}

\subsection{Norm separation}

The preceding count supplies the intended binary representatives.  To use
their ideal coset in the module construction, we must rule out shorter
representatives with nonbinary coefficients and short points arising from
nonzero scalar multiples.

\begin{lemma}
\label{lem:binary-shell}
Let $\mathbf u\in\F_q^k$ satisfy $u_0=h$, and let
$\mathbf x'\in\Z^q$ have syndrome $\mathbf u$ and coefficient sum $h$.
Put $c':=[\mathbf x']$.  Every $v\in c'+\mathfrak a_{q,k}$ has a unique
integral coefficient representative $\mathbf z\in\Z^q$ satisfying
\[
 [\mathbf z]=v,
 \qquad
 \mathbf H_q(k)\mathbf z=\mathbf u,
 \qquad
 \sum_a z_a=h.
\]
If $\mathbf z\in\{0,1\}^{\F_q}$, then it has weight $h$ and
$\|v\|^2=H$.  Otherwise, $\|v\|^2\ge H+2q$.
\end{lemma}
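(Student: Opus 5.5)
The plan has three steps: existence and uniqueness from \cref{lem:coset-syndrome}, the binary case from \cref{lem:cyclotomic-form}, and the nonbinary case by a direct minimization of the same quadratic form.

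\emph{Existence and uniqueness.} Apply \cref{lem:coset-syndrome} with this $\mathbf u$ and $s:=h$, using $\mathbf x_{\mathbf u,h}:=\mathbf x'$. That lemma gives a bijection from $\{\mathbf z\in\Z^q:\mathbf H_q(k)\mathbf z=\mathbf u,\ \sum_a z_a=h\}$ onto $c'+\mathfrak a_{q,k}$. Hence every $v$ in the coset has exactly one such representative $\mathbf z$.

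\emph{Binary case.} If $\mathbf z$ is binary, then $\sum_a z_a=h$ means it has weight $h$. By \cref{lem:cyclotomic-form}, $\|v\|^2=qh-h^2=H$.

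\emph{Nonbinary case.} Here the syndrome plays no further role. It suffices to show that every $\mathbf z\in\Z^q$ with $\sum_a z_a=h$ and $\mathbf z\notin\{0,1\}^q$ satisfies $q\sum_a z_a^2-h^2\ge H+2q$. Since $H=qh-h^2$, this is equivalent to
\[
\sum_a z_a^2\ge h+2.
\]
To prove it, write $\sum_a z_a^2-\sum_a z_a=\sum_a z_a(z_a-1)$. Each term $z_a(z_a-1)$ is a nonnegative even integer, and it vanishes exactly when $z_a\in\{0,1\}$. If some coordinate lies outside $\{0,1\}$, its term is at least $2$. Therefore $\sum_a z_a^2\ge h+2$, and so
\[
\|v\|^2=q\sum_a z_a^2-h^2\ge q(h+2)-h^2=H+2q.
\]

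All steps are routine. The only point requiring care is making sure the hypotheses of \cref{lem:coset-syndrome} are met: we need $1\le k\le q/2$ (true for large $q$ by \cref{eq:rs-parameters}), and we need $h\equiv u_0$ modulo $q$, which holds since $u_0=h$. Note that the bound in the nonbinary case does not depend on the coset, so the identity $\sum_a z_a(z_a-1)\ge 2$ carries the whole argument.
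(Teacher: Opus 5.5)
Your proposal is correct and follows the paper's proof essentially line for line. Existence and uniqueness come from \cref{lem:coset-syndrome}; using $\mathbf x'$ as the base point is harmless, since $[\mathbf x']+\mathfrak a_{q,k}=[\mathbf x_{\mathbf u,h}]+\mathfrak a_{q,k}$. The norm claims then follow from the identity $\|v\|^2=H+q\sum_a z_a(z_a-1)$, derived via \cref{lem:cyclotomic-form}, in which each term $z_a(z_a-1)$ is a nonnegative even integer.
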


\begin{proof}
The existence and uniqueness of $\mathbf z$ follow directly from
\cref{lem:coset-syndrome}.  By \cref{lem:cyclotomic-form},
\[
 \|v\|^2=q\sum_a z_a^2-h^2
 =H+q\sum_a z_a(z_a-1).
\]
For every integer $z_a$, the quantity $z_a(z_a-1)$ is a nonnegative even
integer.  The sum is zero exactly when every $z_a\in\{0,1\}$, and is otherwise
at least two.  This proves the stated claim.
\end{proof}

We next choose $c$ so that $wc$ remains separated from
$\mathfrak a_{q,k}$ for every nonzero $w\in\mathcal{O}_K$ with
$\|w\|^2\le3q$.  Define $c\in\mathcal{O}_K$ and
$c+\mathfrak a_{q,k}$ by
\begin{equation}
\label{eq:robust-center}
 m:=\left\lfloor\frac k2\right\rfloor,
 \qquad
 c:=h+\pi^m.
\end{equation}
The scalar summand $h$ fixes the coefficient sum at $h$, since
$h+(1-X)^m$ evaluates to $h$ at $X=1$.  The term $\pi^m$, with
$m=\lfloor k/2\rfloor$, balances the two orders of vanishing, $m$ and $k-m$,
that arise from the $\pi^k$-divisibility condition.  Thus each of the two
sparse-polynomial arguments below retains $k/2+O(1)$ orders of vanishing.
Let $\mathbf c\in\Z^q$ be the coefficient vector of $h+(1-X)^m$.  Then
$[\mathbf c]=c$, and its syndrome
\[
 \mathbf u_c:=\mathbf H_q(k)\mathbf c\in\F_q^k
\]
satisfies $(u_c)_0=h$.  Thus \cref{lem:coset-syndrome} identifies the vectors
of syndrome $\mathbf u_c$ and coefficient sum $h$ with the coset
$c+\mathfrak a_{q,k}$.

The proof of \cref{thm:robust-center} combines a sparse integral
coefficient representative with a bound on the multiplicity of a sparse
polynomial at $X=1$.  The next lemma records both facts.
\begin{lemma}
\label{lem:sparse-tools}
The following statements hold.
\begin{enumerate}[label=(\roman*)]
 \item If $v\in\mathcal{O}_K$ and $\|v\|^2<q^2/4$, then $v$ has
 an integral coefficient representative with at most
 $2\|v\|^2/q$ nonzero coefficients.
 \item If
 \[
  F(X)=\sum_{i=1}^s c_iX^{a_i}\in\F_q[X]
 \]
 is nonzero, where every $c_i\ne0$ and the exponents
 $a_i\in\{0,\ldots,q-1\}$ are distinct, then its order of
 vanishing at
 $X=1$ is at most $s-1$.
\end{enumerate}
\end{lemma}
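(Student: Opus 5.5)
For part~(i), the plan is to use the freedom in the $q$-coordinate expression.  Take any integral coefficient representative $\mathbf z\in\Z^q$ of $v$; every other representative has the form $\mathbf z-\ell\mathbf1_q$ with $\ell\in\Z$.  By \cref{lem:cyclotomic-form},
\[
 \|v\|^2=\sum_{a<b}(z_a-z_b)^2 .
\]
Choose $\ell$ to be a most frequent value among the coordinates $z_a$, and let $N$ be the number of coordinates with $z_a\ne\ell$.  Then $\mathbf z-\ell\mathbf1_q$ has exactly $N$ nonzero entries, so it suffices to show $N\le 2\|v\|^2/q$.  Partition the coordinates into classes of equal value, with sizes $n_1\ge n_2\ge\cdots$ and $n_1=q-N$.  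Every pair of coordinates in different classes contributes at least $1$ to the sum, so
\[
 \|v\|^2\ge\sum_{i<j}n_in_j\ge n_1N=(q-N)N .
\]
The hypothesis $\|v\|^2<q^2/4$ is where the case split enters.  If $N\le q/2$, then $q-N\ge q/2$, which gives $\|v\|^2\ge qN/2$, i.e.\ $N\le2\|v\|^2/q$, as required.  If instead $N>q/2$, the largest class has size less than $q/2$.  Then every coordinate is in a different class from more than $q/2$ others, so
\[
 \sum_{i<j}n_in_j=\tfrac12\Bigl(q^2-\sum_i n_i^2\Bigr)\ge\tfrac12\bigl(q^2-n_1q\bigr)>\tfrac{q^2}{4},
\]
which contradicts the hypothesis.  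The only delicate point is making this boundary computation tight, in particular confirming $\sum_i n_i^2\le n_1q$ when $n_1<q/2$.

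For part~(ii), I will induct on $s$.  If $s=1$, then $F=c_1X^{a_1}$ with $c_1\ne0$, and $F(1)=c_1\ne0$, so the order of vanishing is $0$.  For $s\ge2$, divide by the smallest power $X^{a_{\min}}$; this is a unit at $X=1$ and does not change the order of vanishing there, so we may assume some exponent equals $0$.  If $F$ vanishes at $1$ to order $r\ge1$, the ordinary derivative $F'$ vanishes at $1$ to order at least $r-1$.  Here $F'$ has at most $s-1$ terms: the constant term drops out, and the remaining coefficients $a_ic_i$ are nonzero in $\F_q$ because $1\le a_i\le q-1$.  This is where the restriction $a_i<q$ is essential.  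The derivative $F'$ is also nonzero, since it has at least one term with nonzero coefficient.

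Two facts need care in characteristic $q$.  First, differentiation lowers the order by at most one: if $F=(X-1)^rG$ with $G(1)\ne0$ and $r<q$, then $F'=(X-1)^{r-1}\bigl(rG+(X-1)G'\bigr)$ and $r\ne0$ in $\F_q$.  Second, $r<q$ holds because $\deg F\le q-1$.  By induction $r-1\le s-2$, hence $r\le s-1$.  This is a Descartes-type bound, and I expect the characteristic-$q$ bookkeeping in this step to be the only real obstacle.
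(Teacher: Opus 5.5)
Your proof is correct. Part~(i) follows the paper: both arguments subtract the most frequent coordinate value and use the pair count from \cref{lem:cyclotomic-form}, with $\|v\|^2<q^2/4$ ruling out a largest class of size below $q/2$. Your ``delicate point'' is immediate from $n_i\le n_1$.

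Part~(ii) takes a genuinely different route. The paper argues in one step. Multiplicity at least $s$ would force the first $s$ Hasse derivatives to vanish at $1$ by \cref{eq:hasse-multiplicity}. This gives the homogeneous system $\sum_i c_i\binom{a_i}{j}=0$ for $0\le j<s$. Its determinant is the Vandermonde product divided by $\prod_{j<s}j!$, which is nonzero in $\F_q$ because the $a_i$ are distinct residues and $s\le q$.

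You instead use the classical derivative induction behind Descartes-type bounds. You normalize to $a_{\min}=0$ and observe that $F'$ keeps exactly $s-1$ nonzero terms, since every shifted exponent lies in $\{1,\ldots,q-1\}$ and is therefore a unit in $\F_q$. You then use $(X-1)^{r-1}\mid F'$. Your route is fully elementary and needs neither a determinant evaluation nor the Hasse-derivative criterion, at the cost of an induction. The paper's route is shorter given the Hasse-derivative machinery it already uses for \cref{thm:cyclotomic-descent}.

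One simplification on your side: you do not need $r<q$ or $r\ne0$ in $\F_q$. The relation $F=(X-1)^rG$ implies $(X-1)^{r-1}\mid F'$ for every $r\ge1$ in any characteristic, and only this lower bound on the order of $F'$ enters the induction. So the characteristic-$q$ issue reduces to the nonvanishing of the coefficients of $F'$, which you already handled.
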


\begin{proof}
For (i), choose a coefficient vector
$\mathbf z=(z_a)_{a\in\F_q}\in\Z^q$ with $[\mathbf z]=v$.  If no value occurs
among the coordinates $z_a$ more than $q/2$ times, then at least $q^2/4$
unordered pairs have unequal coordinates, contradicting the last expression
in \cref{lem:cyclotomic-form}.  Subtract the most frequent coordinate value
from every $z_a$.  This does not change $[\mathbf z]=v$.  Let $s$ be the
number of nonzero coefficients in the resulting representative.  The
preceding argument gives $s<q/2$.  Each pair formed
by a nonzero coordinate and a zero coordinate gives
\[
 \|v\|^2\ge s(q-s)\ge\frac{sq}{2},
\]
which proves (i).

For (ii), multiplicity at least $s$ would make the first $s$ Hasse derivatives
vanish at one.  The resulting homogeneous system has determinant
\[
 \det\left(\binom{a_i}{j}\right)_{
       0\le j<s,\ 1\le i\le s}
 =\frac{\prod_{1\le i<\ell\le s}(a_\ell-a_i)}
        {\prod_{j=0}^{s-1}j!},
\]
which is nonzero in $\F_q$: the exponents are distinct, and $s\le q$ makes
every factorial in the denominator nonzero.  Hence every $c_i$ would vanish,
a contradiction.
\end{proof}

With $b:=1/100$, the next theorem separates every short nonzero
multiple of $c$ from $\mathfrak a_{q,k}$.
\begin{theorem}
\label{thm:robust-center}
For all sufficiently large primes $q$, every nonzero
$w\in\mathcal{O}_K$ with $\|w\|^2\le3q$ satisfies
\[
 \dist(wc,\mathfrak a_{q,k})^2>bL.
\]
\end{theorem}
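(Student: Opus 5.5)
We argue by contradiction and reduce everything to a single sparse polynomial that vanishes to high order at $X=1$, which \cref{lem:sparse-tools}(ii) forbids. Suppose $w\ne0$, $\|w\|^2\le3q$, and there is $y\in\mathfrak a_{q,k}$ with $\|wc-y\|^2\le bL=kq/50$. Put $e:=wc-y$.

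The first step makes both $w$ and $e$ sparse. Since $\|w\|^2\le3q<q^2/4$ for large $q$, \cref{lem:sparse-tools}(i) gives an integral coefficient representative $W(X)=\sum_a w_aX^a$ of $w$ with at most $6$ nonzero coefficients. Since $\|e\|^2\le kq/50<q^2/4$, the same lemma gives a representative $E(X)$ of $e$ with at most $2\|e\|^2/q\le k/25$ nonzero coefficients. All exponents can be taken in $\{0,\ldots,q-1\}$.

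Next I would translate the divisibility $wc-e=y\in\pi^k\mathcal{O}_K$ into a congruence in $\F_q[X]$. By \cref{eq:ideal-preimage}, $\varphi^{-1}(\pi^k\mathcal{O}_K)=(q,(X-1)^k)$. Since $c=\varphi(h+(1-X)^m)$, reducing mod $q$ gives
\[
 \bar W(X)\bigl(h+(1-X)^m\bigr)-\bar E(X)\equiv0 \pmod{(X-1)^k}
 \quad\text{in }\F_q[X]/(X^q-1),
\]
and the identity $X^q-1=(X-1)^q$ makes this congruence meaningful. Rearranging,
\[
 G(X):=\bar E(X)-h\bar W(X)\equiv \bar W(X)(1-X)^m \pmod{(X-1)^k},
\]
after reducing exponents into $\{0,\ldots,q-1\}$. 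The polynomial $G$ has at most $k/25+6$ terms.

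The key step is to pin down the order of vanishing of the right-hand side. Let $r$ be the order of $\bar W$ at $X=1$; equivalently, $r=v_\pi(w)$ as long as $r<k$. I must check that $\bar W\ne0$. If every $w_a$ were divisible by $q$, then the differences $w_a-w_b$ would be multiples of $q$, not all zero because $w\ne0$, so \cref{lem:cyclotomic-form} would force $\|w\|^2\ge q^2\cdot(q-1)>3q$. With $\bar W\ne0$ having at most $6$ terms, \cref{lem:sparse-tools}(ii) gives $r\le5$. Hence $\bar W(X)(1-X)^m$ has order exactly $r+m$ at $1$. Because $r+m\le5+\lfloor k/2\rfloor<k$, the congruence shows that $G$ also has order exactly $r+m\ge m$ at $X=1$; in particular $G\ne0$. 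Applying \cref{lem:sparse-tools}(ii) to $G$ then gives $m\le k/25+5$, which contradicts $m=\lfloor k/2\rfloor$ for large $q$ (and hence large $k$). So $\dist(wc,\mathfrak a_{q,k})^2>bL$.

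I expect the main care to lie in making the reduction mod $q$ legitimate. Specifically, one has to confirm that $\bar W\ne0$, that exponents stay in the range $\{0,\ldots,q-1\}$ so that the terms of $G$ are genuinely distinct monomials, and that the order of vanishing is computed in $\F_q[X]$ with $(X-1)^k\mid X^q-1$. This is what lets the congruence modulo $(X-1)^k$ pass to an honest order-of-vanishing statement. The choice $m\approx k/2$ leaves ample slack against the sparsity $k/25+6$.
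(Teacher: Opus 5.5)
Your proposal is correct and follows essentially the same argument as the paper: sparse representatives of $w$ and $e$ via part~(i) of \cref{lem:sparse-tools}, the congruence $\bar E-h\bar W\equiv\bar W(1-X)^m\pmod{(X-1)^k}$ obtained from \cref{eq:ideal-preimage}, the norm argument showing $\bar W\ne0$, and two applications of part~(ii) of \cref{lem:sparse-tools}. The only difference is the order of the final step, and the two orders are interchangeable. The paper first uses sparsity to force $\bar E-h\bar W=0$ and then contradicts the resulting order at least $k-m$ of $\bar W$. You instead first bound the order $r\le5$ of $\bar W$ and then contradict the exact order $r+m$ of $G$ with its sparsity.
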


\begin{proof}
We prove the theorem by contradiction.  Suppose that there exist a nonzero
$w\in\mathcal{O}_K$ with $\|w\|^2\le3q$ and an element
$e\in wc+\mathfrak a_{q,k}$ such that
$\|e\|^2\le bL=2bkq$.  For sufficiently large $q$, both $2bkq$ and $3q$ are
below $q^2/4$.  By part~(i) of \cref{lem:sparse-tools}, choose integral
coefficient representatives $\mathbf z_e$ and $\mathbf z_w$ of $e$ and $w$
with at most $4bk$ and $6$ nonzero coefficients, respectively.  Put
$f_e:=f_{\mathbf z_e}$ and $f_w:=f_{\mathbf z_w}$.  Since
$\varphi(h+(1-X)^m)=c$ and $e-wc\in\mathfrak a_{q,k}$,
\cref{eq:ideal-preimage} gives
\[
 f_e-f_w\bigl(h+(1-X)^m\bigr)\in\bigl(q,(X-1)^k\bigr).
\]
Reducing modulo $q$ and using $X^q-1=(X-1)^q$ in $\F_q[X]$ gives
\begin{equation}
\label{eq:center-congruence}
 \bar f_e-h\bar f_w
 \equiv\bar f_w(1-X)^m\pmod{(X-1)^k}.
\end{equation}
In particular, $\bar f_e-h\bar f_w$ is divisible by $(X-1)^m$.
For sufficiently large $k$, this polynomial has at most $4bk+6\le m$ nonzero
coefficients.  Part~(ii) of \cref{lem:sparse-tools} therefore forces
$\bar f_e-h\bar f_w=0$ in $\F_q[X]$.

It follows from \cref{eq:center-congruence} that
$\bar f_w$ has a zero of order at least $k-m$ at $X=1$.  This polynomial is
nonzero.  Otherwise, $\mathbf z_w=q\mathbf z'$ for some $\mathbf z'\in\Z^q$.
Since $[\mathbf z']=w/q\ne0$, the vector $\mathbf z'$ is not constant.  At
least $q-1$ pairs of its coordinates therefore differ, and
\cref{lem:cyclotomic-form} gives
\[
 \|w\|^2=q^2\|[\mathbf z']\|^2\ge q^2(q-1)>3q.
\]
Since $\bar f_w$ has at most six nonzero coefficients, part~(ii) of
\cref{lem:sparse-tools} bounds its order of vanishing at $X=1$ by $5$.  This
is impossible because $k-m\to\infty$.
\end{proof}

We finally record the norm gap between roots of unity and all other
nonzero elements of $\mathcal{O}_K$.
\begin{lemma}
\label{lem:second-shell}
For every nonzero $w\in\mathcal{O}_K$, either
$w=\pm\zeta^j$ for some $j$, in which case $\|w\|^2=q-1$, or~
$\|w\|^2\ge2q-4$.
\end{lemma}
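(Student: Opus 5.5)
We will work entirely with the coefficient formula of \cref{lem:cyclotomic-form}. Fix a nonzero $w\in\mathcal{O}_K$ and choose an integral coefficient representative $\mathbf z\in\Z^q$ with $[\mathbf z]=w$. Since representatives differ by multiples of $\mathbf1_q$, we may shift $\mathbf z$ by subtracting its most frequent coordinate value. This makes the number $n_0$ of zero coordinates as large as possible.

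The basic identity is
\[
 \|w\|^2=\sum_{a<b}(z_a-z_b)^2,
\]
where every unequal pair contributes at least $1$. For each value $v$ occurring among the coordinates, let $m_v$ be its multiplicity. Then
\[
 \|w\|^2\ge\#\{\text{unequal pairs}\}=\binom q2-\sum_v\binom{m_v}2 .
\]

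The first step handles roots of unity. If $w=\pm\zeta^j$, a representative is $\pm\mathbf e_j$, and the formula gives $\|w\|^2=q\cdot1-1=q-1$. This value does not depend on the representative, since the norm is intrinsic.

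The second step handles every other nonzero $w$ by counting how many coordinates differ from the majority value $0$.

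\emph{Case $s=1$.} Exactly one coordinate is nonzero, say $z_a=t$. Then $w=t\zeta^a$ with $|t|\ge1$, and $\|w\|^2=t^2(q-1)$. If $|t|=1$, $w$ is a root of unity. Otherwise $\|w\|^2\ge4(q-1)\ge2q-4$.

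\emph{Case $s\ge2$.} At least two coordinates are nonzero. The pairs formed by a nonzero coordinate and a zero coordinate alone give $\|w\|^2\ge s(q-s)$. For $2\le s\le q-2$, this is at least $2(q-2)=2q-4$.

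\emph{Remaining cases.} Since $0$ is the most frequent value and $s\ge2$, we have $s\le q-2$ unless $n_0\le1$. If $n_0=1$, then every value occurs at most once, so all $q$ coordinates are distinct. If $n_0=0$, then $\mathbf z$ would have to be constant, which is impossible because $w\ne0$. In the all-distinct case every one of the $\binom q2$ pairs contributes, so $\|w\|^2\ge\binom q2\ge2q-4$ for $q\ge3$.

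Hence every nonzero $w$ that is not $\pm\zeta^j$ satisfies $\|w\|^2\ge2q-4$.

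The main obstacle is choosing the representative so that the counting is tight. Normalizing by the most frequent value is what makes the bound $s(q-s)$ valid and confines the exceptional configurations to $n_0\le1$, which we handled directly.
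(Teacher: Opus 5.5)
Your proof is correct and follows essentially the paper's argument: normalize so the most frequent coordinate value is $0$, bound $\|w\|^2$ by the $s(q-s)$ mixed zero/nonzero pairs, and treat $s=1$ as $w=t\zeta^j$. The only differences are that you dispatch the dense case $n_0\le1$ directly via the all-distinct count $\binom q2$ rather than splitting on $\|w\|^2\ge q^2/4$ and invoking part~(i) of \cref{lem:sparse-tools}, and that your justification for $n_0=0$ is off, though harmlessly so: that case is vacuous because the most frequent value occurs at least once, not because $\mathbf z$ would have to be constant.
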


\begin{proof}
If $\|w\|^2\ge q^2/4$, then $\|w\|^2\ge2q-4$.  Otherwise,
part~(i) of
\cref{lem:sparse-tools} gives an integral coefficient representative with
exactly $s<q/2$ nonzero coefficients.  If $s\ge2$, every pair formed by one
nonzero coordinate and one zero coordinate contributes to
\[
 \|w\|^2\ge s(q-s)\ge2(q-2).
\]
If $s=1$, then $w=t\zeta^j$ for a nonzero integer $t$.  The cases
$t=\pm1$ are precisely the roots of unity, whereas $|t|\ge2$ gives squared
norm at least $4(q-1)$.
\end{proof}

\section{The X3C checker}
\label{sec:x3c-checker}

\Cref{sec:descent} supplies weight-$h$ binary representatives in
fixed ideal cosets.  For such a representative $\mathbf x$, put
$v_{\mathbf x}:=[\mathbf x]$ and let $\xi_j:=x_{\alpha_j}$.  This section
constructs $U,V\in K$ so that the squared norm of the checker value
$Uv_{\mathbf x}-V$ records the residual
$\mathbf A\boldsymbol\xi-\mathbf1_M$, up to a baseline independent of
$\mathbf x$ and a nonnegative error term.  In
\cref{sec:clean-representatives}, we define clean completions for which this
error term vanishes.  The final corollary records the conclusions used in
\cref{sec:reduction}.

\subsection{Separated offsets}
\label{sec:separated-offsets}

For a selection vector $\boldsymbol\xi\in\{0,1\}^n$ and each $i\in[M]$, define
\[
 \mathcal R_i:=\{j\in[n]:i\in C_j\}
 \qquad\text{and}\qquad
 s_i:=\sum_{j\in\mathcal R_i}\xi_j=(\mathbf A\boldsymbol\xi)_i.
\]
Thus $s_i$ is the number of selected sets containing $i$, and
$\boldsymbol\xi$ is an exact cover exactly when $s_i=1$ for every $i$.  The
quantity encoded by the checker is
\begin{equation}
\label{eq:checker-target-decomposition}
 2\|\mathbf A\boldsymbol\xi-\mathbf1_M\|_2^2
 =
   \underbrace{2\sum_{i=1}^M(s_i^2-s_i)}_{\text{quadratic}}
   -\underbrace{2\sum_{i=1}^M s_i}_{\text{linear}}
   +2M.
\end{equation}

Define the \emph{incidence set}
\[
 \mathcal I:=\{(i,j)\in[M]\times[n]:j\in\mathcal R_i\}.
\]
Every set $C_j$ contains three elements.  We assume here that
$M,n\ge1$ and that every universe element lies in at least one set.  These
conditions are enforced by the preprocessing in
\cref{subsec:prime-and-euclidean}.  Under these assumptions,
\begin{equation}
\label{eq:incidence-count}
 W:=|\mathcal I|=3n,
 \qquad M\le W.
\end{equation}

Let
$0\le\alpha_1<\cdots<\alpha_n=\alpha_\star$ be the Sidon positions from
\cref{lem:sidon}.  The coordinate indexed by $\alpha_j$ will store the
selection bit for $C_j$, so we call $\alpha_j$ a \emph{witness position}.  For
$a,b\in\F_q$, the \emph{offset} from $a$ to $b$ is the cyclic difference
$b-a\in\F_q$.  We separate the offsets associated with different rows of
$\mathbf A$.  Define
\begin{equation}
\label{eq:row-shifts}
 \tau:=2\alpha_\star+1,
 \qquad
 \beta_i:=\tau i\quad(i\in[M]),
 \qquad
 d_{ij}:=\beta_i-\alpha_j\quad((i,j)\in\mathcal I).
\end{equation}
Here $\beta_i$ is the position assigned to row $i$ of $\mathbf A$,
and $d_{ij}$ is the offset from the witness position $\alpha_j$ to
$\beta_i$.  We choose $q$
large enough that
\begin{equation}
\label{eq:row-shift-q-bound}
 q>4\beta_M=4\tau M.
\end{equation}

We use two types of offsets.  The \emph{witness offsets} are the
cyclic differences between distinct witness positions:
\[
\mathcal D_{\mathrm{wit}}
 :=\{\alpha_\ell-\alpha_j:j,\ell\in[n],\ j\ne\ell\}.
\]
By the Sidon property and
$q>4\beta_M>2\alpha_\star$, these differences remain distinct modulo $q$.
The \emph{bad offsets} are the cross-row differences and the sums
of the offsets $d_{ij}$:
\[
 \mathcal D_{\mathrm{bad}}
 :=\{d_{ij}-d_{r\ell}:(i,j),(r,\ell)\in\mathcal I,\ i\ne r\}
 \cup
 \{d_{ij}+d_{r\ell}:(i,j),(r,\ell)\in\mathcal I\}.
\]
In the norm expansion, each witness offset carries a product
$\xi_j\xi_\ell$ from the quadratic term in
\cref{eq:checker-target-decomposition}.  Every correlation at a bad
 offset contributes instead to $\mathcal E(\mathbf x)$.  To preserve the products
at witness offsets while eliminating the correlations at bad offsets, the two
offset sets must be disjoint.  The next lemma proves this separation and bounds the number of
bad offsets.
\begin{lemma}
\label{lem:row-shift-separation}
Under the bound in \cref{eq:row-shift-q-bound}, the $2W$
residues $d_{ij}$ and $-d_{ij}$, for $(i,j)\in\mathcal I$, are nonzero and pairwise
distinct.
Moreover,
\[
 0\notin\mathcal D_{\mathrm{bad}},
 \qquad
 \mathcal D_{\mathrm{bad}}\cap\mathcal D_{\mathrm{wit}}=\varnothing,
 \qquad
 |\mathcal D_{\mathrm{bad}}|=O(W^2)=O(n^2).
\]
\end{lemma}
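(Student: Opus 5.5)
The whole argument reduces to integer estimates followed by a single reduction modulo $q$. With $\tau=2\alpha_\star+1$ and $\beta_i=\tau i$, the offsets satisfy $d_{ij}=\tau i-\alpha_j$ with $0\le\alpha_j\le\alpha_\star$. Hence
\[
 \tau i-\alpha_\star\le d_{ij}\le\tau i .
\]
Each integer $d_{ij}$ therefore lies in the window $[\tau i-\alpha_\star,\tau i]$. These windows are disjoint for different $i$, because consecutive windows are separated by a gap of length $\tau-\alpha_\star=\alpha_\star+1>0$. The offsets also satisfy $1\le\alpha_\star+1\le d_{ij}\le\beta_M<q/4$.

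The plan is to verify every claim as an identity or inequality between integers of absolute value less than $q$. Reduction modulo $q$ is then injective on such integers, and a nonzero integer stays nonzero.

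For the first claim, take two incidence pairs. If they share the same row $i$, then $d_{ij}=d_{i\ell}$ forces $\alpha_j=\alpha_\ell$, hence $j=\ell$. If the rows differ, the windows are disjoint. Every $d_{ij}$ lies in $(0,q/4)$ and every $-d_{ij}$ lies in $(-q/4,0)$. A congruence $d_{ij}\equiv\pm d_{r\ell}$ therefore forces equality as integers: the difference or sum has absolute value below $q/2$. The sign $-$ is impossible since both quantities are positive. This shows the $2W$ residues $\pm d_{ij}$ are nonzero and pairwise distinct.

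Next consider the bad offsets. A sum $d_{ij}+d_{r\ell}$ lies in $[2,q/2)$, so it is a nonzero residue. For a cross-row difference with $i\ne r$, we have
\[
 |d_{ij}-d_{r\ell}|\ge\tau|i-r|-\alpha_\star\ge\alpha_\star+1 ,
\]
and also $|d_{ij}-d_{r\ell}|<q/4$. This gives $0\notin\mathcal D_{\mathrm{bad}}$.

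For disjointness from the witness offsets, note that each witness offset $\alpha_\ell-\alpha_j$ is an integer with $1\le|\alpha_\ell-\alpha_j|\le\alpha_\star$. Suppose a bad offset were congruent to such a witness offset. Both are integers of absolute value below $q/2$, so the congruence is an equality of integers. A sum is at least $2(\alpha_\star+1)>\alpha_\star$, so no sum qualifies. A cross-row difference has absolute value at least $\alpha_\star+1>\alpha_\star$, so no difference qualifies either. The key inequality is therefore $|\tau(i-r)|-\alpha_\star\ge\alpha_\star+1$, and this is precisely why $\tau$ was chosen as $2\alpha_\star+1$.

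Finally, $|\mathcal D_{\mathrm{bad}}|\le 2W^2$ by counting ordered pairs of incidences, and $W=3n$ by \cref{eq:incidence-count}. There is no real obstacle in this proof. The only care needed is to apply each bound in the correct range, $|{\cdot}|<q/2$, so that residues can be compared as integers. The hypothesis $q>4\beta_M$ guarantees this range with room to spare.
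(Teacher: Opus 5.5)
Your proof is correct and follows essentially the same route as the paper: you bound every $d_{ij}$, every cross-row difference, and every sum as an integer strictly inside $(-q/2,q/2)$ using $q>4\beta_M$, and then you separate bad offsets from witness offsets by comparing absolute values with $\alpha_\star$, which works because $\tau-\alpha_\star=\alpha_\star+1$. Your disjoint-window argument for distinctness across rows is a rephrasing of the paper's lower bound $|d_{ij}-d_{r\ell}|\ge\alpha_\star+1$ for $i\ne r$.
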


\begin{proof}
For every $(i,j)\in\mathcal I$,
\[
 \alpha_\star+1=\beta_1-\alpha_\star
 \le d_{ij}\le\beta_M<q/4,
\]
so $d_{ij}$ is nonzero modulo $q$.  Fix distinct
$(i,j),(r,\ell)\in\mathcal I$.

\smallskip
\noindent\emph{Case 1: $i=r$.}
\[
 d_{ij}-d_{i\ell}=\alpha_\ell-\alpha_j\ne0,
 \qquad
 |d_{ij}-d_{i\ell}|\le\alpha_\star.
\]

\smallskip
\noindent\emph{Case 2: $i\ne r$.}
\[
 d_{ij}-d_{r\ell}
 =\beta_i-\beta_r+\alpha_\ell-\alpha_j
 =\tau(i-r)+\alpha_\ell-\alpha_j.
\]
Consequently,
\[
 \alpha_\star+1\le|d_{ij}-d_{r\ell}|
 \le \tau(M-1)+\alpha_\star=\beta_M-\alpha_\star-1.
\]
For all $(i,j),(r,\ell)\in\mathcal I$, one also has
\[
 2\alpha_\star+2=2\beta_1-2\alpha_\star
 \le d_{ij}+d_{r\ell}\le2\beta_M.
\]
Together with $q>4\beta_M$, these bounds place the same-row differences,
cross-row differences, and sums strictly inside $(-q/2,q/2)$, so reduction
modulo $q$ creates no new equalities.  The difference bounds separate residues
of the same sign, and the sum bound separates residues of opposite signs.
Thus the $2W$ residues $d_{ij}$ and $-d_{ij}$ are nonzero and pairwise
distinct.

Every witness offset is represented in
$[-\alpha_\star,\alpha_\star]$.  By contrast, every cross-row difference and
every sum in $\mathcal D_{\mathrm{bad}}$ has a representative in
$(-q/2,q/2)$ with absolute value greater than $\alpha_\star$.  Hence
$0\notin\mathcal D_{\mathrm{bad}}$ and
$\mathcal D_{\mathrm{bad}}\cap\mathcal D_{\mathrm{wit}}=\varnothing$.
The two sets in the definition of $\mathcal D_{\mathrm{bad}}$ contain at most
$O(W^2)=O(n^2)$ offsets.
\end{proof}

\subsection{Checker elements and norm formula}
\label{sec:checker-construction}

Assume throughout this subsection that $q\equiv3\pmod4$.  Fix
$\mathbf x\in\{0,1\}^{\F_q}$ of weight $h$ and put
\[
 v_{\mathbf x}:=[\mathbf x]
 :=\sum_{a\in\F_q}x_a\zeta^a.
\]
Define $\xi_j:=x_{\alpha_j}$ and let
$s_i:=\sum_{j\in\mathcal R_i}\xi_j$ as in \cref{sec:separated-offsets}.  By
\cref{lem:cyclotomic-form} and
\cref{eq:HL-parameters}, $\|v_{\mathbf x}\|^2=H$.

For $d\in\F_q$, define the \emph{cyclic correlation} of $\mathbf x$ at
offset $d$ by
\begin{equation*}
 \mathcal{C}_{\mathbf x}(d):=\sum_{a\in\F_q}x_ax_{a+d}.
\end{equation*}
The indices lie in $\F_q$, so $a+d$ is computed modulo $q$.
Thus $\mathcal C_{\mathbf x}(d)$ counts the ordered pairs
$(a,a+d)\in\F_q^2$ for which $x_a=x_{a+d}=1$.

The preceding decomposition separates the target into quadratic
and linear parts.  We first define $p$, whose shifted monomials encode the
quadratic part through cyclic correlations.  A Gauss sum is then used to
define $U$, so that the quadratic contribution enters
$\|Uv_{\mathbf x}\|^2$ and multiplication by $U$ does not decrease the
canonical norm.  Next, $D$ encodes the linear part, and $V$ combines both
parts in $\|Uv_{\mathbf x}-V\|^2$.
\paragraph{Quadratic term.}
Using the separated offsets $d_{ij}$, define
\begin{equation*}
 p:=\sum_{(i,j)\in\mathcal I}
       \bigl(\zeta^{d_{ij}}+\zeta^{-d_{ij}}\bigr)\in \mathcal{O}_K.
\end{equation*}
By \cref{lem:row-shift-separation}, the displayed $q$-coordinate
expression for $p=\overline p$ has $2W$ terms with pairwise distinct
exponents.  Same-row differences produce
witness offsets, while the other differences and sums produce bad offsets.
The resulting additional correlations are collected in
$\mathcal E(\mathbf x)$ below.

\paragraph{The element $U$.}
Let $\chi$ be the quadratic character of
$\F_q$.  Thus $\chi$ is $1$ on nonzero squares, $-1$ on nonsquares, and $0$ at
zero.  Define the quadratic Gauss sum
\begin{equation*}
 \vartheta:=\sum_{a\in\F_q}\chi(a)\zeta^a\in \mathcal{O}_K.
\end{equation*}

\begin{lemma}
\label{lem:gauss-sum}
The Gauss sum $\vartheta$ satisfies
\[
 \overline\vartheta=-\vartheta,
 \qquad
 \vartheta\overline\vartheta=q,
 \qquad
 \vartheta^2=-q.
\]
\end{lemma}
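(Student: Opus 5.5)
The plan is to prove the three identities in the order $\overline\vartheta=-\vartheta$, then $\vartheta\overline\vartheta=q$, and to deduce $\vartheta^2=-q$ by combining them.

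\emph{Conjugation.} Since $\overline\zeta=\zeta^{-1}$, we have $\overline\vartheta=\sum_a\chi(a)\zeta^{-a}$. Substituting $a\mapsto-a$ gives $\overline\vartheta=\chi(-1)\vartheta$. Because $q\equiv3\pmod4$, the element $-1$ is a nonsquare in $\F_q$, so $\chi(-1)=-1$ and $\overline\vartheta=-\vartheta$.

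\emph{Norm.} Expand
\[
 \vartheta\overline\vartheta=\sum_{a,b\in\F_q^\times}\chi(a)\chi(b)\zeta^{a-b}.
\]
For $b\ne0$, substitute $a=bt$ and use multiplicativity, $\chi(bt)\chi(b)=\chi(t)$ for $b\ne0$. This gives
\[
 \vartheta\overline\vartheta=\sum_{t\in\F_q^\times}\chi(t)\sum_{b\in\F_q^\times}\zeta^{b(t-1)}.
\]
The inner sum is the character sum already used in the proof of \cref{lem:cyclotomic-form}: it equals $q-1$ when $t=1$ and $-1$ otherwise. Hence
\[
 \vartheta\overline\vartheta=(q-1)-\sum_{t\ne0,1}\chi(t)=(q-1)-\bigl(0-\chi(1)\bigr)=q,
\]
where the middle step uses $\sum_{t\in\F_q^\times}\chi(t)=0$, since squares and nonsquares are equinumerous.

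\emph{Square.} Combining the two facts, $\vartheta^2=-\vartheta\overline\vartheta=-q$.

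There is no real obstacle. The only care needed is in the reindexing of the double sum: one must restrict to $a,b\ne0$, since $\chi(0)=0$, and account correctly for the $t=1$ term.
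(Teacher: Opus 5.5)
Your proof is correct and follows the paper's argument essentially step for step: the reflection $a\mapsto-a$ with $\chi(-1)=-1$, the substitution $a=tb$ reducing $\vartheta\overline\vartheta$ to the same inner character sum, and then $\vartheta^2=-\vartheta\overline\vartheta=-q$. You also spell out the step $\sum_{t\ne0,1}\chi(t)=-1$, which the paper leaves implicit.
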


\begin{proof}
Since $q\equiv3\pmod4$, one has $\chi(-1)=-1$.  Conjugation followed by
$a\mapsto-a$ therefore gives $\overline\vartheta=-\vartheta$.
Since $\chi(0)=0$, writing $a=tb$ in
$\vartheta\overline\vartheta$ gives
\[
 \vartheta\overline\vartheta
 =\sum_{t\in\F_q^\times}\chi(t)
   \sum_{b\in\F_q^\times}\zeta^{(t-1)b}
 =q.
\]
The inner sum is $q-1$ for $t=1$ and $-1$ otherwise.  Combining
$\vartheta\overline\vartheta=q$ with $\overline\vartheta=-\vartheta$ gives
$\vartheta^2=-q$.
\end{proof}

Using \cref{lem:gauss-sum}, define
\begin{equation}
\label{eq:checker-U}
 U:=1+\frac{\vartheta p}{q}\in K.
\end{equation}

\begin{lemma}
\label{lem:checker-U-properties}
The element $U$ has the following properties.
\begin{enumerate}[label=(\roman*)]
 \item $U\overline U=1+p^2/q$.
 \item For every embedding $\sigma:K\hookrightarrow\C$,
 $|\sigma(U)|^2=1+\sigma(p)^2/q\ge1$.  In particular, $U\ne0$.
 \item For every $y\in K$, one has
 $\|Uy\|\ge\|y\|$.
\end{enumerate}
\end{lemma}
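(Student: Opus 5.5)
The plan is to compute $U\overline U$ directly from the definition in \cref{eq:checker-U}, and then deduce (ii) and (iii) embedding by embedding. First, $\overline p=p$: the $q$-coordinate expression of $p$ is symmetric under $\zeta\mapsto\zeta^{-1}$, because each term $\zeta^{d_{ij}}+\zeta^{-d_{ij}}$ is fixed by complex conjugation. Also $\overline q=q$. Hence
\[
 \overline U=1+\frac{\overline\vartheta\,p}{q}=1-\frac{\vartheta p}{q}
\]
by \cref{lem:gauss-sum}. Multiplying gives
\[
 U\overline U=1-\frac{\vartheta^2p^2}{q^2}=1+\frac{qp^2}{q^2}=1+\frac{p^2}{q},
\]
using $\vartheta^2=-q$. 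This proves (i).

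For (ii), I will use that complex conjugation on $K$ commutes with every embedding: $\sigma(\overline x)=\overline{\sigma(x)}$, since $K$ is abelian (a CM field). Applying $\sigma$ to (i) then gives $|\sigma(U)|^2=1+\sigma(p)^2/q$. Because $p=\overline p$, each value $\sigma(p)=\sum(\zeta^{ad_{ij}}+\zeta^{-ad_{ij}})=\sum 2\cos(2\pi a d_{ij}/q)$ is real. Therefore $\sigma(p)^2\ge0$, so $|\sigma(U)|^2\ge1$. In particular $\sigma(U)\ne0$, and so $U\ne0$.

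For (iii), expand the canonical norm as
\[
 \|Uy\|^2=\sum_{a\in\F_q^\times}|\sigma_a(U)|^2|\sigma_a(y)|^2\ge\sum_a|\sigma_a(y)|^2=\|y\|^2 .
\]
Taking square roots gives (iii).

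The only point needing care is the reality of $\sigma(p)$ together with the identity $\overline\vartheta=-\vartheta$. The latter relies on $q\equiv3\pmod4$ and is already established in \cref{lem:gauss-sum}, so there is no real obstacle.
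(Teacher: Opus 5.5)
Your proposal is correct and follows the paper's proof essentially step for step. You compute $U\overline U$ from $\overline p=p$ and the Gauss-sum identities, apply each embedding using the reality of $\sigma(p)$, and sum over embeddings; the only addition is that you state explicitly that $\sigma(\overline x)=\overline{\sigma(x)}$ for the abelian field $K$, which the paper leaves implicit.
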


\begin{proof}
Since $p=\overline p$, \cref{lem:gauss-sum} gives
$U\overline U=1+p^2/q$, proving~(i).  The value $\sigma(p)$ is real for every
embedding $\sigma$, so applying $\sigma$ to~(i) gives~(ii).  Summing the
lower bound in~(ii) over all embeddings proves~(iii).
\end{proof}

\paragraph{Linear term.}
Define
\begin{equation*}
 D:=-3\sum_{j=1}^n\zeta^{\alpha_j}\in \mathcal{O}_K.
\end{equation*}
Each selected set contributes to three row sums, which motivates
the coefficient $-3$.  Part~(ii) of \cref{lem:checker-U-properties} gives
$U\ne0$, so define
\begin{equation}
\label{eq:checker-V}
 V:=-\frac{D}{q\overline U}
   =-\frac{D}{q-\vartheta p}\in K.
\end{equation}

\paragraph{Norm formula.}
We call $Uv_{\mathbf x}-V\in K$ the \emph{checker value}.  The
correlations in $q^{-1}\|pv_{\mathbf x}\|^2$ that are not determined by the
counts $s_i$ are collected in the following term:
\begin{equation}
\label{eq:checker-additional-correlations}
 \begin{aligned}
  \mathcal E(\mathbf x)
  :={}&
  2\sum_{i=1}^M
   \sum_{\substack{j,\ell\in\mathcal R_i\\j\ne\ell}}
  \bigl(
   \mathcal{C}_{\mathbf x}(\alpha_\ell-\alpha_j)-\xi_j\xi_\ell
  \bigr)
  \\
  &+2\sum_{\substack{i,t\in[M]\\i\ne t}}
   \sum_{\substack{j\in\mathcal R_i\\\ell\in\mathcal R_t}}
   \mathcal{C}_{\mathbf x}
   (\beta_i-\beta_t-\alpha_j+\alpha_\ell)
  \\
  &+2\sum_{i,t\in[M]}
   \sum_{\substack{j\in\mathcal R_i\\\ell\in\mathcal R_t}}
   \mathcal{C}_{\mathbf x}
   (\beta_i+\beta_t-\alpha_j-\alpha_\ell).
 \end{aligned}
\end{equation}
The three lines record, respectively, the additional same-row
correlations, the cross-row difference correlations, and the sum-offset
correlations.

\begin{theorem}
\label{thm:checker-score}
Assume that $q\equiv3\pmod4$.
For every $\mathbf x\in\{0,1\}^{\F_q}$ of weight $h$, define
$\boldsymbol\xi\in\{0,1\}^n$ by $\xi_j:=x_{\alpha_j}$ for $j\in[n]$.  Then
\begin{equation}
\label{eq:checker-score-exact}
 \|Uv_{\mathbf x}-V\|^2
 =B_0+2\|\mathbf A\boldsymbol\xi-\mathbf1_M\|_2^2
   +\mathcal E(\mathbf x),
\end{equation}
where
\begin{equation}
\label{eq:checker-baseline}
 B_0:=H+\|V\|^2+2Wh-2M
       +\frac{2hW-4h^2W^2}{q}
\end{equation}
is a rational quantity determined by the constructed checker and
independent of $\mathbf x$.  Moreover,
$\mathcal E(\mathbf x)\ge0$ for every binary $\mathbf x$.
\end{theorem}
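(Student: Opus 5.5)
The plan is to expand the squared norm as
\[
 \|Uv_{\mathbf x}-V\|^2=\|Uv_{\mathbf x}\|^2-2\langle Uv_{\mathbf x},V\rangle+\|V\|^2
\]
and to evaluate each of the three pieces with \cref{lem:cyclotomic-form,lem:gauss-sum,lem:checker-U-properties}. The constant $\|V\|^2$ goes directly into $B_0$.

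\emph{The cross term.} Write $\langle Uv_{\mathbf x},V\rangle=\operatorname{Re}\Tr_{K/\Q}(Uv_{\mathbf x}\overline V)$. By \cref{eq:checker-V}, $\overline V=-\overline D/(qU)$, so the factor $U$ cancels and
\[
 Uv_{\mathbf x}\overline V=-v_{\mathbf x}\overline D/q .
\]
Hence $-2\langle Uv_{\mathbf x},V\rangle=\tfrac2q\langle v_{\mathbf x},D\rangle$. Now apply \cref{lem:cyclotomic-form} with $\mathbf x$ and the coefficient vector of $D$, which is $-3$ at each $\alpha_j$ and $0$ elsewhere. This gives
\[
 \langle v_{\mathbf x},D\rangle=-3q\sum_j\xi_j+3nh .
\]
So the cross term equals $-6\sum_j\xi_j+6nh/q$. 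Since each set has three elements, $\sum_i s_i=3\sum_j\xi_j$, and $6nh/q=2hW/q$ by \cref{eq:incidence-count}. The cross term therefore produces exactly the linear part $-2\sum_i s_i$ together with the constant $2hW/q$.

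\emph{The first term.} By \cref{lem:checker-U-properties}(i),
\[
 \|Uv_{\mathbf x}\|^2=\Tr(U\overline U v_{\mathbf x}\overline{v_{\mathbf x}})=H+q^{-1}\|pv_{\mathbf x}\|^2 .
\]
Let $E$ be the multiset $\{\pm d_{ij}:(i,j)\in\mathcal I\}$, which has $2W$ elements. The element $pv_{\mathbf x}$ has the $q$-coordinate expression $\sum_b y_b\zeta^b$ with $y_b=\sum_{e\in E}x_{b-e}$, whose coefficient sum is $2Wh$. \Cref{lem:cyclotomic-form} then gives
\[
 q^{-1}\|pv_{\mathbf x}\|^2=\sum_b y_b^2-4W^2h^2/q .
\]
Expand $\sum_b y_b^2$ as a sum over ordered pairs $(e,e')\in E^2$ of $\sum_b x_{b-e}x_{b-e'}=\mathcal C_{\mathbf x}(e-e')$. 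I would sort these pairs into four classes.
\begin{enumerate}[label=(\alph*),leftmargin=*]
 \item Diagonal pairs $e=e'$ contribute $2W\cdot h$.
 \item Same-sign, same-row pairs $(\pm d_{ij},\pm d_{i\ell})$ with $j\ne\ell$ give offsets $\pm(\alpha_\ell-\alpha_j)$. Using $\mathcal C_{\mathbf x}(d)=\mathcal C_{\mathbf x}(-d)$, the two signs double the count, giving $2\sum_i\sum_{j\ne\ell}\mathcal C_{\mathbf x}(\alpha_\ell-\alpha_j)$.
 \item Same-sign, cross-row pairs give $2\sum_{i\ne t}\sum_{j,\ell}\mathcal C_{\mathbf x}(\beta_i-\beta_t-\alpha_j+\alpha_\ell)$.
 \item Opposite-sign pairs give the sum offsets $\pm(d_{ij}+d_{t\ell})$. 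Both orders occur, giving $2\sum_{i,t}\sum_{j,\ell}\mathcal C_{\mathbf x}(\beta_i+\beta_t-\alpha_j-\alpha_\ell)$.
\end{enumerate}
In class (b), replace each correlation $\mathcal C_{\mathbf x}(\alpha_\ell-\alpha_j)$ by $\xi_j\xi_\ell$ plus the difference. Because $\xi$ is binary, $\sum_{j\ne\ell\in\mathcal R_i}\xi_j\xi_\ell=s_i^2-s_i$, so the main part is the quadratic part $2\sum_i(s_i^2-s_i)$ of \cref{eq:checker-target-decomposition}. The remaining differences, together with classes (c) and (d), are exactly the three lines of $\mathcal E(\mathbf x)$ in \cref{eq:checker-additional-correlations}.

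\emph{Assembly and sign of $\mathcal E$.} Collect the pieces: $H$, $\|V\|^2$, $2Wh$, $2hW/q$ and $-4h^2W^2/q$ are constants, and the quadratic and linear parts appear explicitly. Adding and subtracting $2M$ and using \cref{eq:checker-target-decomposition} gives \cref{eq:checker-score-exact} with $B_0$ as in \cref{eq:checker-baseline}. For $\mathcal E(\mathbf x)\ge0$, lines two and three are sums of correlations, which are nonnegative for binary $\mathbf x$. In line one, $\mathcal C_{\mathbf x}(\alpha_\ell-\alpha_j)\ge x_{\alpha_j}x_{\alpha_\ell}=\xi_j\xi_\ell$, because the ordered pair $(\alpha_j,\alpha_\ell)$ is among those counted.

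The main obstacle is the bookkeeping of class (b) against classes (c) and (d): one must get the factor $2$ from the two signs and match each ordered pair to the stated index ranges. The expansion itself needs no distinctness of offsets. I would still cite \cref{lem:row-shift-separation} to confirm that the $2W$ exponents in $p$ are distinct, so that the multiset $E$ is an honest set and the classes are well defined.
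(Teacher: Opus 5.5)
Your proposal is correct and follows essentially the same route as the paper. You expand the square into $\|Uv_{\mathbf x}\|^2=H+q^{-1}\|pv_{\mathbf x}\|^2$, the cross term $\tfrac2q\langle v_{\mathbf x},D\rangle$, and $\|V\|^2$, then split $\sum_b y_b^2$ into diagonal, same-row difference, cross-row difference, and sum-offset classes exactly as the paper's proof does, and your bookkeeping of signs and index ranges matches \cref{eq:checker-additional-correlations}.

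Your side remark is also accurate: since the pairs are indexed by $\mathcal I$ and signs rather than by offset values, the identity itself does not need the distinctness from \cref{lem:row-shift-separation}.
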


\begin{proof}
\noindent\textbf{Quadratic term.}
For $a\in\F_q$, define
\begin{equation*}
 y_a:=\sum_{(i,j)\in\mathcal I}
      (x_{a-d_{ij}}+x_{a+d_{ij}}).
\end{equation*}
Then $pv_{\mathbf x}=\sum_a y_a\zeta^a$.  Each of the $2W$ shifted copies
of $\mathbf x$ has coefficient sum $h$, so $\sum_a y_a=2Wh$.  For
$r,s\in\F_q$, cyclic reindexing gives
\[
 \sum_{a\in\F_q}x_{a-r}x_{a-s}
 =\mathcal{C}_{\mathbf x}(r-s),
 \qquad
 \mathcal{C}_{\mathbf x}(-r)=\mathcal{C}_{\mathbf x}(r).
\]
For $e=(i,j)$ and $f=(t,\ell)$ in $\mathcal I$, write
$d_e:=d_{ij}$ and $d_f:=d_{t\ell}$.  Expanding $\sum_a y_a^2$ and using
the preceding identities gives
\begin{equation*}
 \sum_{a\in\F_q}y_a^2
 =2Wh+2\sum_{\substack{e,f\in\mathcal I\\e\ne f}}\mathcal{C}_{\mathbf x}(d_e-d_f)
 +2\sum_{e,f\in\mathcal I}\mathcal{C}_{\mathbf x}(d_e+d_f).
\end{equation*}
The diagonal products contribute $2Wh$, and the remaining correlations fall
into three classes.

\smallskip
\noindent\emph{Case 1: $d_e-d_f$ with $e\ne f$ and $i=t$.}
Here $e=(i,j)$ and $f=(i,\ell)$, so $j\ne\ell$ and
$d_{ij}-d_{i\ell}=\alpha_\ell-\alpha_j$.
Since
$\sum_{\substack{j,\ell\in\mathcal R_i\\j\ne\ell}}
\xi_j\xi_\ell=s_i^2-s_i$, this class contributes
$2\sum_i(s_i^2-s_i)$ together with the first line of
\cref{eq:checker-additional-correlations}.

\smallskip
\noindent\emph{Case 2: $d_e-d_f$ with $i\ne t$.}
Here $d_{ij}-d_{t\ell}=\beta_i-\beta_t-\alpha_j+\alpha_\ell$.
These terms form the second line of
\cref{eq:checker-additional-correlations}.

\smallskip
\noindent\emph{Case 3: $d_e+d_f$.}
For all $(i,j),(t,\ell)\in\mathcal I$, one has
$d_{ij}+d_{t\ell}=\beta_i+\beta_t-\alpha_j-\alpha_\ell$.
These terms form the third line of
\cref{eq:checker-additional-correlations}.

Combining the diagonal term and the three cases yields
\begin{equation*}
 \sum_a y_a^2
 =2Wh+2\sum_{i=1}^M(s_i^2-s_i)+\mathcal E(\mathbf x).
\end{equation*}

For $j\ne\ell$, the difference
$\mathcal C_{\mathbf x}(\alpha_\ell-\alpha_j)-\xi_j\xi_\ell$
counts the ordered pairs
$(a,a+\alpha_\ell-\alpha_j)\ne(\alpha_j,\alpha_\ell)$ for which
$x_a=x_{a+\alpha_\ell-\alpha_j}=1$.  The remaining two lines of
\cref{eq:checker-additional-correlations} are sums of correlation
counts.  Thus $\mathcal E(\mathbf x)\ge0$.

Using \cref{lem:cyclotomic-form} and $\sum_a y_a=2Wh$, we obtain
\begin{equation*}
 \|pv_{\mathbf x}\|^2
 =q\left(
 2Wh+2\sum_i(s_i^2-s_i)+\mathcal E(\mathbf x)
 \right)-4W^2h^2.
\end{equation*}

\smallskip
\noindent\textbf{Linear term.}
The coefficient sum of $D$ is $-W$.  Since each selected set contains three
universe elements, $3\sum_j\xi_j=\sum_i s_i$.  Hence
\cref{lem:cyclotomic-form} gives
\begin{equation*}
 \langle v_{\mathbf x},D\rangle
 =-3q\sum_j\xi_j+hW
 =-q\sum_i s_i+hW.
\end{equation*}

\begin{samepage}
\smallskip
\noindent\textbf{Combining the terms.}
Part~(i) of \cref{lem:checker-U-properties}, the definition of
$V$, the equality $\|v_{\mathbf x}\|^2=H$, and the two preceding identities
give
\[
\begin{aligned}
 \|Uv_{\mathbf x}-V\|^2
 &=H+\frac1q\|pv_{\mathbf x}\|^2
   +\frac2q\langle v_{\mathbf x},D\rangle+\|V\|^2\\
 &=H+\|V\|^2+2Wh+\frac{2hW-4h^2W^2}{q}
   +2\sum_i(s_i^2-2s_i)+\mathcal E(\mathbf x)\\
 &=B_0+2\sum_i(s_i-1)^2+\mathcal E(\mathbf x).
\end{aligned}
\]
\end{samepage}
Since $s_i=(\mathbf A\boldsymbol\xi)_i$, the last line is
\cref{eq:checker-score-exact}.  The formula for $B_0$ is independent
of $\mathbf x$, and $B_0\in\Q$ because
$\|V\|^2=\Tr_{K/\Q}(V\overline V)$.
\end{proof}

\subsection{Clean completions}
\label{sec:clean-representatives}

Put $P:=\{\alpha_1,\ldots,\alpha_n\}$.  For each selection vector
$\boldsymbol\xi\in\{0,1\}^n$, define
$\boldsymbol\eta^{\boldsymbol\xi}\in\{0,1\}^P$ by
$\eta^{\boldsymbol\xi}_{\alpha_j}:=\xi_j$.
For $\mathbf u\in\F_q^k$ with $u_0=h$, a \emph{completion} of
$\boldsymbol\xi$ with syndrome $\mathbf u$ is a vector
$\mathbf x\in\mathcal F(\mathbf u;P,\boldsymbol\eta^{\boldsymbol\xi})$.
Thus $\mathbf x$ is binary, has weight $h$ and syndrome $\mathbf u$, and
satisfies $x_{\alpha_j}=\xi_j$ for every $j\in[n]$.

A completion $\mathbf x$ is \emph{clean} if
\begin{equation}
\label{eq:enhanced-clean-correlations}
 \mathcal{C}_{\mathbf x}(\alpha_\ell-\alpha_j)=\xi_j\xi_\ell
 \quad(j,\ell\in[n],\ j\ne\ell),
 \qquad
 \mathcal{C}_{\mathbf x}(d)=0
 \quad(d\in\mathcal D_{\mathrm{bad}}).
\end{equation}
The first equality allows only the ordered pair
$(\alpha_j,\alpha_\ell)$ to contribute at the witness offset
$\alpha_\ell-\alpha_j$.  This ordered pair contributes $\xi_j\xi_\ell$.
The second equality excludes every ordered
pair at a bad offset.  Hence every term in
\cref{eq:checker-additional-correlations} vanishes and
$\mathcal E(\mathbf x)=0$.

The next theorem combines \cref{thm:uniform-completions} with the
fixed-coordinate bound in \cref{cor:fixed-coordinate-ratio} to prove that a
clean completion always exists.
\begin{theorem}
\label{thm:enhanced-clean}
Suppose that $q$ is a sufficiently large prime, $n+2\le T$,
and \mbox{\cref{eq:row-shift-q-bound}} holds.  For every syndrome
$\mathbf u\in\F_q^k$ with $u_0=h$, and every
$\boldsymbol\xi\in\{0,1\}^n$, there is a clean completion
$\mathbf x\in\{0,1\}^{\F_q}$ such that
\[
 \mathbf H_q(k)\mathbf x=\mathbf u,
 \qquad
 \sum_a x_a=h,
 \qquad
 x_{\alpha_j}=\xi_j\quad(j\in[n]).
\]
\end{theorem}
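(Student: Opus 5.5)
We use a first-moment (union bound) argument over the completion family $\mathcal F:=\mathcal F(\mathbf u;P,\boldsymbol\eta^{\boldsymbol\xi})$. Since $|P|=n\le T-2$, \cref{thm:uniform-completions} makes $\mathcal F$ nonempty, and \cref{cor:fixed-coordinate-ratio} applies with $r\in\{1,2\}$. It suffices to show that the number of \emph{dirty} completions, those violating some equality in \cref{eq:enhanced-clean-correlations}, is $o(|\mathcal F|)$. Each violation is witnessed by an ordered pair $(a,a+d)$ with $x_a=x_{a+d}=1$ and $d\in\mathcal D_{\mathrm{wit}}\cup\mathcal D_{\mathrm{bad}}$, where for $d=\alpha_\ell-\alpha_j$ we exclude the pair $(\alpha_j,\alpha_\ell)$. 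By the Sidon property the witness offsets are distinct, so this pair is the only one reading both designated coordinates.

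We classify a violating pair $\{a,a+d\}$ by how many of its two positions lie outside $P$.

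\emph{Both positions in $P$.} Then $a=\alpha_{j'}$ and $a+d=\alpha_{\ell'}$, so $d$ is itself a witness offset $\alpha_{\ell'}-\alpha_{j'}$. By \cref{lem:row-shift-separation}, $d\notin\mathcal D_{\mathrm{bad}}$. Because the differences $\alpha_\ell-\alpha_j$ are distinct modulo $q$ (Sidon property together with $q>2\alpha_\star$), we get $(j',\ell')=(j,\ell)$, which is exactly the excluded pair. So this case contributes nothing.

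\emph{Exactly one position outside $P$.} Fix $d$, fix the position $a\in P$ (at most $n$ choices), and take $Q=\{a+d\}$ or $\{a-d\}$. If $Q\subseteq P$ we are in the previous case. Otherwise \cref{cor:fixed-coordinate-ratio} with $r=1$ bounds the completions having $x=1$ on $Q$ by $O(h/q)|\mathcal F|$. Summing over $a$ and $d$ gives $O(n\cdot|\mathcal D|\cdot h/q)|\mathcal F|$, where $|\mathcal D|=O(n^2)$ by \cref{lem:row-shift-separation}.

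\emph{Both positions outside $P$.} Here $Q=\{a,a+d\}$ has size $2$, since $d\ne0$ (as $0\notin\mathcal D_{\mathrm{bad}}$ and witness offsets are nonzero). There are at most $q$ choices of $a$, so \cref{cor:fixed-coordinate-ratio} with $r=2$ bounds this case by $O(q\cdot|\mathcal D|\cdot h^2/q^2)|\mathcal F|=O(n^2h^2/q)|\mathcal F|$.

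Adding the cases, the dirty fraction is $O\bigl((n^3h+n^2h^2)/q\bigr)$. Since $n\le T=q^{1/2000}$ and $h=O(q^{1/1000})$, this is $O(q^{-1/2})=o(1)$. The implied constants in \cref{cor:fixed-coordinate-ratio} are uniform in $\mathbf u$, $\boldsymbol\xi$ and $Q$, so for all sufficiently large $q$ the dirty fraction is below $1$, and some completion is clean.

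The main obstacle is the bookkeeping in the first two cases. We must verify that when both endpoints of a pair are designated, the only pair that can occur is the allowed one, and never one at a bad offset; this is exactly where the Sidon property and the separation in \cref{lem:row-shift-separation} are needed. We must also check that the hypotheses $Q\subseteq\F_q\setminus P$, $|Q|=r$, and $|P|+r\le T$ of \cref{cor:fixed-coordinate-ratio} hold in every application.
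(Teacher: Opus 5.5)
Your proposal is correct and follows essentially the same argument as the paper. The paper also takes a union bound over violating ordered pairs at offsets in $\mathcal D_{\mathrm{wit}}\cup\mathcal D_{\mathrm{bad}}$: pairs with both endpoints in $P$ are ruled out (except the allowed pair) via the Sidon property and \cref{lem:row-shift-separation}, and \cref{cor:fixed-coordinate-ratio} with $r=1,2$ handles one or two endpoints outside $P$, giving the same dirty fraction $O\bigl((n^3h+n^2h^2)/q\bigr)=o(1)$. The paper's split by offset type before counting endpoints is only a cosmetic reorganization of your case split.
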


\begin{proof}
Write
$\mathcal F:=\mathcal F(\mathbf u;P,\boldsymbol\eta^{\boldsymbol\xi})$.  By
\cref{thm:uniform-completions}, one has $|\mathcal F|>0$.
For a set $Q\subseteq\F_q\setminus P$ of one or two additional
coordinates, put
\[
 \mathcal F_Q:=\{\mathbf x\in\mathcal F:x_a=1
 \text{ for every }a\in Q\}.
\]
Applying \cref{cor:fixed-coordinate-ratio} gives
\[
 \frac{|\mathcal F_Q|}{|\mathcal F|}
 =O((h/q)^{|Q|}).
\]

\smallskip
\noindent\emph{Case 1: $d=\alpha_\ell-\alpha_j\in\mathcal D_{\mathrm{wit}}$.}
By \cref{lem:sidon,eq:row-shift-q-bound},
$(\alpha_j,\alpha_\ell)$ is the unique ordered pair in $P$ at offset $d$.
This ordered pair contributes $\xi_j\xi_\ell$ to
$\mathcal C_{\mathbf x}(d)$.  For every $a\ne\alpha_j$, the
first equality in \cref{eq:enhanced-clean-correlations} requires
$x_ax_{a+d}=0$, and at least one of $a,a+d$ lies outside $P$.

\smallskip
\noindent\emph{Case 2: $d\in\mathcal D_{\mathrm{bad}}$.}
The second equality in \cref{eq:enhanced-clean-correlations}
requires $x_ax_{a+d}=0$ for every $a\in\F_q$.
By \cref{lem:row-shift-separation}, $d\notin\mathcal D_{\mathrm{wit}}$ and
$d\ne0$.  Thus no ordered pair in $P$ has offset $d$, so at least one of
$a,a+d$ lies outside $P$.

Every required zero product from the two cases has at least one endpoint
outside $P$.  If exactly one endpoint lies outside $P$, no completion
violates the condition when the endpoint in $P$ is fixed to zero.  When
the endpoint in $P$ is fixed to one, the violating completions lie in
$\mathcal F_Q$ with $|Q|=1$ and occupy an
$O(h/q)$ fraction of $\mathcal F$.  If both endpoints lie outside $P$, the
corresponding fraction is $O((h/q)^2)$.

The definition of $\mathcal D_{\mathrm{wit}}$ and
\cref{lem:row-shift-separation} give
$|\mathcal D_{\mathrm{wit}}\cup\mathcal D_{\mathrm{bad}}|=O(n^2)$.  For each
of these offsets, at most $2n$ ordered pairs have exactly one endpoint outside
$P$, and at most $q$ have both endpoints outside $P$.  A union bound shows
that the number of completions violating at least one required zero-product
condition is at most
\[
 O\left(\frac{n^3h}{q}+\frac{n^2h^2}{q}\right)|\mathcal F|.
\]
Since $n\le T=O(q^{1/2000})$ and $h=O(q^{1/1000})$, the relative factor is
$o(1)$.  Therefore some completion in $\mathcal F$ is clean.
\end{proof}

We summarize the checker bounds used in \cref{sec:reduction}.
\begin{corollary}
\label{cor:checker-interface}
For every sufficiently large prime $q\equiv3\pmod4$ satisfying
$n+2\le T$ and the bound in \cref{eq:row-shift-q-bound}, the
following statements hold.
\begin{enumerate}[label=(\roman*)]
 \item Every $\mathbf x\in\{0,1\}^{\F_q}$ of weight $h$, with
 $\xi_j:=x_{\alpha_j}$, satisfies
 \[
  \|Uv_{\mathbf x}-V\|^2
  \ge B_0+2\|\mathbf A\boldsymbol\xi-\mathbf1_M\|_2^2.
 \]
 \item For the center $c$ and syndrome $\mathbf u_c$ defined in
 \cref{sec:descent}, every $\boldsymbol\xi\in\{0,1\}^n$ has a clean completion
 $\mathbf x\in\mathcal F(\mathbf u_c;P,\boldsymbol\eta^{\boldsymbol\xi})$
 such that
 \begin{equation}
 \label{eq:clean-checker-interface}
  v_{\mathbf x}\in c+\mathfrak a_{q,k},
  \qquad \|Uv_{\mathbf x}-V\|^2
  =B_0+2\|\mathbf A\boldsymbol\xi-\mathbf1_M\|_2^2.
 \end{equation}
\end{enumerate}
\end{corollary}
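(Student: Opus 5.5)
The corollary packages \cref{thm:checker-score}, \cref{thm:enhanced-clean} and \cref{lem:coset-syndrome}, so the plan is to verify that their hypotheses hold and then combine their conclusions.

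\emph{Part~(i).} Since $q\equiv3\pmod4$, \cref{thm:checker-score} applies to every weight-$h$ binary vector $\mathbf x$. It gives the exact identity \cref{eq:checker-score-exact}, with error term $\mathcal E(\mathbf x)\ge0$. Dropping this nonnegative term yields the stated lower bound. No further hypothesis on $q$ is needed for this part.

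\emph{Part~(ii).} First check that $\mathbf u_c$ is an admissible syndrome. The vector $\mathbf c$ is the coefficient vector of $h+(1-X)^m$, whose coefficient sum is $h$, so $(u_c)_0=h$, as recorded after \cref{eq:robust-center}. Next check the hypotheses of \cref{thm:enhanced-clean}: $q$ is sufficiently large, $n+2\le T$, and \cref{eq:row-shift-q-bound} holds. All three are assumed in the corollary. \Cref{thm:enhanced-clean} then supplies, for each $\boldsymbol\xi$, a clean completion $\mathbf x\in\mathcal F(\mathbf u_c;P,\boldsymbol\eta^{\boldsymbol\xi})$.

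Two facts remain.

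\begin{enumerate}[label=(\alph*),leftmargin=*]
 \item \emph{Coset membership.} The vector $\mathbf x$ is integral, with syndrome $\mathbf u_c$ and coefficient sum $h$. The vector $\mathbf c$ has the same syndrome and coefficient sum. By the bijection in \cref{lem:coset-syndrome}, taking $\mathbf x_{\mathbf u_c,h}:=\mathbf c$, we get $[\mathbf x]\in[\mathbf c]+\mathfrak a_{q,k}=c+\mathfrak a_{q,k}$. This requires $1\le k\le q/2$, which holds for large $q$ by \cref{eq:rs-parameters}.
 \item \emph{Equality.} Cleanliness forces every term of \cref{eq:checker-additional-correlations} to vanish, as noted after \cref{eq:enhanced-clean-correlations}. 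In the first line, the witness-offset correlations equal $\xi_j\xi_\ell$. In the second and third lines, the offsets lie in $\mathcal D_{\mathrm{bad}}$, so those correlations are zero. Hence $\mathcal E(\mathbf x)=0$, and \cref{eq:checker-score-exact} becomes the stated equality, with $\xi_j=x_{\alpha_j}$ as required.
\end{enumerate}

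The only step that needs care is the bookkeeping in~(b). One must confirm that each offset in the second and third lines of \cref{eq:checker-additional-correlations} really lies in $\mathcal D_{\mathrm{bad}}$. For cross-row terms, $\beta_i-\beta_t-\alpha_j+\alpha_\ell=d_{ij}-d_{t\ell}$ with $i\ne t$. For sum terms, $\beta_i+\beta_t-\alpha_j-\alpha_\ell=d_{ij}+d_{t\ell}$. Both forms match the definition of $\mathcal D_{\mathrm{bad}}$ directly.
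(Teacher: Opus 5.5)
Your proposal is correct and matches the paper's proof. As in the paper, you get part~(i) by dropping the nonnegative term $\mathcal E(\mathbf x)$ from \cref{thm:checker-score}, and part~(ii) by applying \cref{thm:enhanced-clean} with $\mathbf u=\mathbf u_c$, using cleanliness to obtain $\mathcal E(\mathbf x)=0$, and invoking \cref{lem:coset-syndrome} because $\mathbf x$ and $\mathbf c$ share syndrome $\mathbf u_c$ and coefficient sum $h$. Your extra checks ($(u_c)_0=h$, $1\le k\le q/2$, and that the offsets in the last two lines of \cref{eq:checker-additional-correlations} lie in $\mathcal D_{\mathrm{bad}}$) are details the paper leaves implicit.
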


\begin{proof}
Part~(i) follows directly from \cref{thm:checker-score}.
For part~(ii), choose $\mathbf x$ by
\cref{thm:enhanced-clean} with $\mathbf u=\mathbf u_c$.  Cleanliness gives
$\mathcal E(\mathbf x)=0$, so
\cref{eq:checker-score-exact} gives
the norm equality in \cref{eq:clean-checker-interface}.  The vectors
$\mathbf x$ and $\mathbf c$ have syndrome $\mathbf u_c$ and coefficient sum
$h$.  Hence \cref{lem:coset-syndrome} gives
$v_{\mathbf x}=[\mathbf x]\in c+\mathfrak a_{q,k}$.
\end{proof}

\section{The rank-two reduction}
\label{sec:reduction}

Let $q\equiv3\pmod4$ be prime.  This section assembles the checker
and the ideal coset into a rank-two module and chooses the scale $\Gamma$ for
its second coordinate.  Let $U,V\in K$ be the checker elements constructed in
\cref{sec:checker-construction} for the given X3C instance.  The reduction uses the ideal
$\mathfrak a:=\mathfrak a_{q,k}=\pi^k\mathcal{O}_K$ and the center
$c\in\mathcal{O}_K$ from \cref{eq:robust-center}.
Until \cref{subsec:prime-and-euclidean}, assume that $M,n\ge1$,
that every universe element occurs in at least one set, that $n+2\le T$, and
that \cref{eq:row-shift-q-bound} holds.  That subsection enforces these
conditions when choosing $q$ and handles isolated elements separately.
The center $c$ fixes the ideal coset whose representatives encode
X3C selections.  The elements $U$ and $V$ test the X3C equations in the first
coordinate.  The integer $\Gamma$ weights the second coordinate.
Before clearing denominators, define the generators and their span by
\begin{gather*}
 \mathbf m_1:=(U\pi^k,0),
 \qquad
 \mathbf m_2:=(Uc-V,-\Gamma),
 \\
 \mathcal M:=\mathcal{O}_K\mathbf m_1
 +\mathcal{O}_K\mathbf m_2\subset K^2.
\end{gather*}
\Cref{lem:checker-U-properties} gives $U\ne0$, and the choice of
$\Gamma$ below is positive.  Hence every vector in $\mathcal M$ has the unique
form
\begin{equation}
\label{eq:module-vector-exact}
 \mathbf v(z,w)
 :=\bigl(U(\pi^kz+wc)-Vw,-\Gamma w\bigr),
 \qquad z,w\in\mathcal{O}_K.
\end{equation}
When $w=1$ and $\pi^kz+c=v_{\mathbf x}$ for some
$\mathbf x\in\{0,1\}^{\F_q}$, the first coordinate is the checker value
 $Uv_{\mathbf x}-V$.  Part~(ii) of
 \cref{cor:checker-interface} guarantees the existence of such a clean
 completion for the completeness branch $w=1$.  The reduction does not compute this
 completion.  Part~(i) gives a lower bound for every binary coefficient
 vector.  The second coordinate contributes
$\Gamma^2\|w\|^2$ to the squared norm.  For soundness, we consider separately
$w=0$, roots of unity, and elements that are not roots of unity.

\subsection{Gap parameters}

Recall from \cref{eq:HL-parameters} that every
$\mathbf x\in\{0,1\}^{\F_q}$ of weight $h$ satisfies
$\|v_{\mathbf x}\|^2=H=h(q-h)$, and
that $L$ is the lower bound on the squared norm of a nonzero element of
$\mathfrak a$.  By \cref{eq:incidence-count}, $W=3n$.  The constant
$b=1/100$ is used in \cref{thm:robust-center}, and $B_0$ is defined in
\cref{eq:checker-baseline}.
Let $\Gamma$ be the nearest positive integer to
$\sqrt{99L/(200(q-1))}$, with ties rounded upward.  Then
$\Gamma$ can be computed exactly in time polynomial in $\log q$.
\begin{equation}
\label{eq:Gamma-target-exact}
 \Gamma
 =\sqrt{\frac{99L}{200(q-1)}}+O(1)
 =\Theta(\sqrt{k}).
\end{equation}
Squaring the estimate in \cref{eq:Gamma-target-exact} gives
$\Gamma^2=99k/100+O(\sqrt{k})$.  Since $L=2kq$, the error contributes
$O(q\sqrt{k})=o(L)$ after multiplication by $q-1$.  Define
\begin{equation}
\label{eq:F-S-nu-exact}
 F:=\Gamma^2(q-1)
   =\frac{198}{400}L+o(L)
   =\Theta(L),
 \qquad
 S:=B_0+F,
 \qquad
 \nu:=\frac{W}{q\Gamma}.
\end{equation}
A root of unity $w$ contributes
$\Gamma^2\|w\|^2=F$ through the second coordinate, so $S$ is the completeness
threshold.  The parameter $\nu$ bounds the cancellation caused by $Vw$.

\begin{lemma}
\label{lem:gap-and-stability}
Under the standing assumptions of this section, for every
sufficiently large prime $q\equiv3\pmod4$, one has $B_0>0$, $0<\nu<1$, and
\[
\begin{aligned}
 (1-\nu)(H+2q+F)&>S,
 &\quad S&<L,\\
 (1-\nu)\bigl(bL+\Gamma^2(2q-4)\bigr)&>S,
 &\quad 3F&>S.
\end{aligned}
\]
Moreover, every $y,w\in K$ satisfy the stability estimate
\begin{equation}
\label{eq:global-stability-exact}
 \|Uy-Vw\|^2+\Gamma^2\|w\|^2
 \ge
 (1-\nu)\bigl(\|y\|^2+\Gamma^2\|w\|^2\bigr).
\end{equation}
\end{lemma}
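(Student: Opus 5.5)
The heart of the lemma is the stability estimate \cref{eq:global-stability-exact}. I will prove it one embedding at a time and then deduce the four numerical inequalities from the asymptotic sizes of $H$, $F$, $L$ and $B_0-H$.

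\textbf{Stability.} Fix an embedding $\sigma$ and write $Y=\sigma(y)$, $\Omega=\sigma(w)$. Expanding gives
\[
 |\sigma(U)Y-\sigma(V)\Omega|^2
 \ge|\sigma(U)|^2|Y|^2-2\operatorname{Re}\bigl(\sigma(U)\overline{\sigma(V)}\,Y\overline\Omega\bigr),
\]
where the nonnegative term $|\sigma(V)\Omega|^2$ has been dropped.
\begin{itemize}[leftmargin=*]
 \item The key simplification comes from \cref{eq:checker-V}. Since $V=-D/(q\overline U)$, we have $U\overline V=-\overline D/q$, so the factor $U$ cancels from the cross term.
 \item The coefficients of $D$ give $|\sigma(D)|\le3n=W$. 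Hence the cross term is at most $2(W/q)|Y||\Omega|$ in absolute value.
 \item By part~(ii) of \cref{lem:checker-U-properties}, $|\sigma(U)|^2\ge1$.
\end{itemize}
Writing $2(W/q)|Y||\Omega|=2\nu\,|Y|\cdot\Gamma|\Omega|\le\nu(|Y|^2+\Gamma^2|\Omega|^2)$, each embedding contributes at least $(1-\nu)(|Y|^2+\Gamma^2|\Omega|^2)$ once $\Gamma^2|\Omega|^2$ is added to both sides. Summing over $\sigma$ gives \cref{eq:global-stability-exact}.

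\textbf{Size of $\nu$ and $B_0$.} Since $\Gamma\ge1$ and $W=3n\le3T=o(q)$, we get $0<\nu<1$. In fact $\nu=O(n/(q\sqrt k))$. For $B_0$, the same embedding bound gives $|\sigma(V)|\le|\sigma(D)|/(q|\sigma(U)|)\le W/q$, so $\|V\|^2\le W^2/q=o(1)$. Therefore
\[
 B_0=H+O(Wh)+O(hW/q)-O(h^2W^2/q)
 =H+O(nk)+o(1).
\]
The scale relations $n\le T=q^{1/2000}$ and $k\le q^{1/1000}$ make $B_0-H=O(nk)=o(q)$. Since $H=h(q-h)\to\infty$, this gives $B_0>0$.

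\textbf{The four inequalities.} Use $L=2kq$, $H\le hq\le\tfrac{201}{200}kq=0.5025L$, and $F=\tfrac{198}{400}L+o(L)=0.495L+o(L)$. Then $S=H+F+o(q)\le0.9975L+o(L)$.
\begin{itemize}[leftmargin=*]
 \item $S<L$ and $3F\approx1.485L>S$ follow with constant-factor margins.
 \item For $(1-\nu)(bL+\Gamma^2(2q-4))>S$: the left side is $0.01L+2F-O(\Gamma^2)-\nu O(L)=L-o(L)$. It beats $0.9975L+o(L)$ by a margin $0.0025L$.
 \item The step I expect to be tightest is $(1-\nu)(H+2q+F)>S$, because the gap is only $2q$ rather than a constant fraction of $L$. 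I would reduce it to the requirement
\[
 2q> (B_0-H)+\nu(H+2q+F).
\]
The first term on the right is $O(nk)$ and the second is $O(\nu L)=O(n\sqrt k)$. Both are $O(q^{1/2000+1/1000})=o(q)$, which finishes the proof.
\end{itemize}
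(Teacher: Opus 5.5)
Your proof is correct and follows essentially the paper's argument: the same weighted AM--GM split of the cross term with weight $\nu$, the same bounds $|\sigma(U)|\ge1$ and $|\sigma(D)|\le W$, and the same asymptotic bookkeeping ($B_0-H=O(nk)=o(q)$, $\nu L=O(n\sqrt k)=o(q)$, $S=\tfrac{399}{400}L+o(L)$). The only cosmetic difference is in the stability step. You work embedding by embedding and use $U\overline V=-\overline D/q$ to make the cross term free of $U$. The paper instead bounds $\|Vw\|\le\nu\Gamma\|w\|$ globally, applies Cauchy--Schwarz to $\langle Uy,Vw\rangle$, and only then invokes $\|Uy\|\ge\|y\|$.
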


\begin{proof}
The definitions of $h$ and $L$, together with $k=o(q)$, give
\[
 h=\frac{201}{200}k+O(1),
 \quad
 H=\frac{201}{400}L+o(L).
\]
The estimate for $\Gamma^2$ preceding
\cref{eq:F-S-nu-exact} also gives
$\Gamma^2(2q-4)=(99/100)L+o(L)$.
For every embedding $\sigma:K\hookrightarrow\C$, Part~(ii) of
\cref{lem:checker-U-properties} gives
\[
 |\sigma(q\overline U)|=q|\sigma(U)|\ge q.
\]
The triangle inequality gives $|\sigma(D)|\le W$.  Since
$V=-D/(q\overline U)$, it follows that
\[
 \max_\sigma|\sigma(V)|\le\frac Wq,
 \qquad
 \|V\|^2\le\frac{(q-1)W^2}{q^2}<\frac{W^2}{q}.
\]
Since
$W=3n=O(T)$ and $M\le W$, the checker formula
in \cref{eq:checker-baseline} and the scale relations following
\cref{eq:rs-parameters} give
\[
 |B_0-H|=o(q),
 \qquad
 \nu L=O(W\sqrt{k})=o(q).
\]
In particular, $B_0>0$ and $0<\nu<1$ for large $q$.
Since $q=o(L)$, the same estimates give
$B_0=(201/400)L+o(L)$ and $S=(399/400)L+o(L)$.  Therefore,
\[
\begin{aligned}
 (1-\nu)(H+2q+F)-S&=2q+o(q)>0,
 &\quad L-S&=\frac1{400}L+o(L)>0,\\
 (1-\nu)\bigl(bL+\Gamma^2(2q-4)\bigr)-S
   &=\frac1{400}L+o(L)>0,
 &\quad 3F-S&=\frac{39}{80}L+o(L)>0.
\end{aligned}
\]
For the first line, expanding $S=B_0+F$ cancels the $F$ terms.
The remaining error is $o(q)$ because $B_0-H=o(q)$ and
$\nu(H+2q+F)=O(\nu L)=o(q)$.
This proves every assertion of the lemma except the stability
estimate.

For the stability estimate, the bound
$\max_\sigma|\sigma(V)|\le W/q$ gives
$\|Vw\|\le(W/q)\|w\|=\nu\Gamma\|w\|$.  Hence
\[
 2|\langle Uy,Vw\rangle|
 \le \nu\|Uy\|^2+\nu^{-1}\|Vw\|^2
 \le \nu\|Uy\|^2+\nu\Gamma^2\|w\|^2.
\]
Using this inequality together with
$\|Uy\|\ge\|y\|$ from Part~(iii) of
\cref{lem:checker-U-properties}, we obtain
\begin{align*}
 \|Uy-Vw\|^2+\Gamma^2\|w\|^2
 &\ge (1-\nu)\|Uy\|^2+\|Vw\|^2
   +(1-\nu)\Gamma^2\|w\|^2\\
 &\ge (1-\nu)
   \bigl(\|y\|^2+\Gamma^2\|w\|^2\bigr),
\end{align*}
which is \cref{eq:global-stability-exact}.

\end{proof}

\subsection{Completeness and soundness}

\begin{theorem}
\label{thm:exact-X3C-encoding}
For every sufficiently large prime $q\equiv3\pmod4$, the
following implications hold for every X3C instance such that each universe
element occurs in at least one set, $n+2\le T$, and
\cref{eq:row-shift-q-bound} holds:
\begin{align*}
 \exists\boldsymbol\xi\in\{0,1\}^n:
 \mathbf A\boldsymbol\xi=\mathbf1_M
 &\quad\Longrightarrow\quad
 \lambda_1(\mathcal M)^2\le S,\\
 \forall\boldsymbol\xi\in\{0,1\}^n:
 \mathbf A\boldsymbol\xi\ne\mathbf1_M
 &\quad\Longrightarrow\quad
 \lambda_1(\mathcal M)^2>S.
\end{align*}
\end{theorem}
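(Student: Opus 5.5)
For completeness, I will take an X3C witness $\boldsymbol\xi$ and apply Part~(ii) of \cref{cor:checker-interface}. This yields a clean completion $\mathbf x$ with $v_{\mathbf x}\in c+\mathfrak a$ and $\|Uv_{\mathbf x}-V\|^2=B_0$, since $\mathbf A\boldsymbol\xi-\mathbf1_M=\mathbf0$. I will write $v_{\mathbf x}=c+\pi^kz$ with $z\in\mathcal{O}_K$ and take $w=1$ in \cref{eq:module-vector-exact}. Then $\|\mathbf v(z,1)\|^2=B_0+\Gamma^2\|1\|^2=B_0+\Gamma^2(q-1)=S$. The vector is nonzero because its second coordinate is $-\Gamma\ne0$, so $\lambda_1(\mathcal M)^2\le S$.

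For soundness, I will fix an arbitrary nonzero $\mathbf v(z,w)$ and put $y:=\pi^kz+wc\in wc+\mathfrak a$, so that $\|\mathbf v(z,w)\|^2=\|Uy-Vw\|^2+\Gamma^2\|w\|^2$. I will then split into four cases according to $w$.

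\emph{Case $w=0$.} Here $y=\pi^kz\ne0$ lies in $\mathfrak a$. Part~(iii) of \cref{lem:checker-U-properties} and \cref{thm:cyclotomic-descent} give $\|Uy\|^2\ge\|y\|^2\ge L>S$ by \cref{lem:gap-and-stability}.

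\emph{Case $w=\pm\zeta^j$.} Put $u:=w^{-1}y\in c+\mathfrak a$. Since $w$ is a root of unity and multiplication by it is an isometry, $\|Uy-Vw\|=\|Uu-V\|$ and $\|w\|^2=q-1$. I will apply \cref{lem:binary-shell} with $\mathbf x'=\mathbf c$ to obtain the unique representative $\mathbf z$ of $u$ with syndrome $\mathbf u_c$ and coefficient sum $h$.
\begin{itemize}
\item If $\mathbf z$ is binary, it has weight $h$, and Part~(i) of \cref{cor:checker-interface} gives $\|Uu-V\|^2\ge B_0+2\|\mathbf A\boldsymbol\xi-\mathbf1_M\|_2^2\ge B_0+2$, because the instance is negative. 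The squared norm is therefore at least $S+2>S$.
\item If $\mathbf z$ is not binary, then $\|u\|^2\ge H+2q$. The stability estimate \cref{eq:global-stability-exact}, applied to the pair $(u,1)$, gives a squared norm of at least $(1-\nu)(H+2q+F)>S$.
\end{itemize}

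\emph{Case $w$ not a root of unity and $\|w\|^2\le3q$.} By \cref{lem:second-shell}, $\|w\|^2\ge2q-4$. By \cref{thm:robust-center}, $\|y\|^2\ge\dist(wc,\mathfrak a)^2>bL$. Stability then gives a squared norm greater than $(1-\nu)(bL+\Gamma^2(2q-4))>S$.

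\emph{Case $\|w\|^2>3q$.} The second coordinate alone contributes $\Gamma^2\|w\|^2>3q\Gamma^2\ge3F>S$.

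The main delicate point is the root-of-unity case. There I must check that the isometry reduces the vector to the $w=1$ checker value with $u$ in the correct coset $c+\mathfrak a$, which holds since $w^{-1}\mathfrak a=\mathfrak a$. I must also check that the dichotomy between binary and nonbinary $\mathbf z$ covers every element of the coset; \cref{lem:binary-shell} guarantees this. All numerical comparisons are supplied by \cref{lem:gap-and-stability}.
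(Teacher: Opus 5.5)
Your proposal is correct and follows the paper's proof essentially step for step. Completeness uses the clean completion with $w=1$. Soundness uses the same four-way split on $w$: the case $w=0$; roots of unity, with the binary/nonbinary dichotomy from \cref{lem:binary-shell}; short non-roots of unity, via \cref{thm:robust-center} and \cref{lem:second-shell}; and $\|w\|^2>3q$, via the second coordinate alone. All numerical comparisons come from \cref{lem:gap-and-stability}, as in the paper.
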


\begin{proof}
\noindent\textbf{Completeness.}
Suppose $\mathbf A\boldsymbol\xi=\mathbf1_M$.
Part~(ii) of \cref{cor:checker-interface} gives a clean
$\mathbf x\in\{0,1\}^{\F_q}$ such that
$v_{\mathbf x}\in c+\mathfrak a$ and $\|Uv_{\mathbf x}-V\|^2=B_0$.
Hence $v_{\mathbf x}=\pi^kz+c$ for some
$z\in \mathcal{O}_K$.  Taking $w=1$ in
\cref{eq:module-vector-exact} gives
\[
 \|\mathbf v(z,1)\|^2
 =B_0+\Gamma^2\|1\|^2
 =B_0+\Gamma^2(q-1)=S.
\]

\medskip
\noindent\textbf{Soundness.}
Assume that the X3C instance is negative, and consider any nonzero
$\mathbf v(z,w)\in\mathcal M$.

\smallskip
\noindent\emph{Case 1: $w=0$.}
Then $z\ne0$.  \Cref{lem:checker-U-properties,thm:cyclotomic-descent}
give
\[
 \|\mathbf v(z,0)\|^2
 =\|U\pi^kz\|^2
 \ge\|\pi^kz\|^2
 \ge L>S,
\]
where the final inequality follows from
\cref{lem:gap-and-stability}.

\smallskip
\noindent\emph{Case 2: $w$ is a root of unity.}
Write $w=\pm\zeta^a$ for some $a\in\F_q$.  Since
$w\mathfrak a=\mathfrak a$, the element
$u:=w^{-1}\pi^kz+c$ belongs to $c+\mathfrak a$, and
$\pi^kz+wc=wu$.  The first coordinate is $w(Uu-V)$.
Since multiplication by $w$ is an isometry and $\|w\|^2=q-1$,
both subcases below satisfy
$\|\mathbf v(z,w)\|^2=\|Uu-V\|^2+F$.  By
\cref{lem:binary-shell}, $u$ has a
unique integral coefficient representative
$\mathbf x\in\Z^q$ with coefficient sum $h$.

\begin{enumerate}[label=(\alph*)]
\item If $\mathbf x\in\{0,1\}^{\F_q}$, the coefficient-sum condition shows that
$\mathbf x$ has weight $h$.  Define
$\xi_j:=x_{\alpha_j}$ for $j\in[n]$.  Since the instance is negative,
$\|\mathbf A\boldsymbol\xi-\mathbf1_M\|_2^2\ge1$.  Thus
\cref{cor:checker-interface} yields
\[
 \|\mathbf v(z,w)\|^2
 =\|Uu-V\|^2+\Gamma^2(q-1)\ge B_0+2+F=S+2>S.
\]

\item If $\mathbf x\notin\{0,1\}^{\F_q}$, then \cref{lem:binary-shell} gives
$\|u\|^2\ge H+2q$.  Apply
\cref{eq:global-stability-exact} to the arguments $u$ and $1$, and then use
the inequality $(1-\nu)(H+2q+F)>S$ from
\cref{lem:gap-and-stability}:
\[
 \|\mathbf v(z,w)\|^2
 =\|Uu-V\|^2+F
 \ge(1-\nu)(H+2q+F)>S.
\]
\end{enumerate}

\smallskip
\noindent\emph{Case 3: $w$ is not a root of unity and
$0<\|w\|^2\le3q$.}
Put $y:=\pi^kz+wc$.  Since $\pi^kz\in\mathfrak a$, the
coset-separation bound in \cref{thm:robust-center} gives
\[
 \|y\|^2\ge\dist(wc,\mathfrak a)^2>bL.
\]
Because $w$ is not a root of unity, \cref{lem:second-shell} also gives
$\|w\|^2\ge2q-4$.  The stability estimate and
the inequality
$(1-\nu)\bigl(bL+\Gamma^2(2q-4)\bigr)>S$ from
\cref{lem:gap-and-stability} imply
\[
 \|\mathbf v(z,w)\|^2
 >(1-\nu)\bigl(bL+\Gamma^2(2q-4)\bigr)>S.
\]

\smallskip
\noindent\emph{Case 4: $w$ is not a root of unity and
$\|w\|^2>3q$.}
The second coordinate alone gives
\[
 \|\mathbf v(z,w)\|^2
 \ge\Gamma^2\|w\|^2
 >3q\Gamma^2>3F>S,
\]
using the last inequality in \cref{lem:gap-and-stability}.  These cases exhaust every nonzero module
vector.
\end{proof}

\subsection{Integral module and encoding length}
\label{subsec:denominator-clearing}

The generators of $\mathcal M$ may have denominators.  We multiply
both generators by a common positive integer.  This scales every squared norm
and the threshold by the same square and preserves the gap.
\paragraph{A common denominator.}
Put $r:=q-\vartheta p=q\overline U\in \mathcal{O}_K\setminus\{0\}$, the
denominator of $V$ in \cref{eq:checker-V}.  Let
\begin{equation*}
 N_r:=\prod_{a\in\F_q^\times}\sigma_a(r)
 \in\Z\setminus\{0\},
 \qquad
 \rho:=q|N_r|.
\end{equation*}
Every $\sigma_a$ permutes the factors defining $N_r$, so $N_r$ is
rational.  It is also an algebraic integer and therefore lies in $\Z$.
Since $\sigma_1$ is the identity embedding,
$N_r/r=\prod_{a\in\F_q^\times\setminus\{1\}}\sigma_a(r)\in \mathcal{O}_K$.
Using \cref{eq:checker-U,eq:checker-V},
\begin{equation}
\label{eq:cleared-UV-exact}
 \rho U=|N_r|(q+\vartheta p)\in \mathcal{O}_K,
 \qquad
 \rho V=-qD\frac{|N_r|}{r}\in \mathcal{O}_K.
\end{equation}
Define
\begin{equation}
\label{eq:integral-module-exact}
 \widehat{\mathbf m}_i:=\rho\mathbf m_i\in \mathcal{O}_K^2
 \quad(i\in\{1,2\}),
 \qquad
 \widehat{\mathcal M}
 :=\mathcal{O}_K\widehat{\mathbf m}_1+\mathcal{O}_K\widehat{\mathbf m}_2,
 \qquad
 \widehat S:=\rho^2S.
\end{equation}
Its generator matrix
\[
 \widehat{\mathbf B}=
 \begin{pmatrix}
  \rho U\pi^k&\rho(Uc-V)\\
  0&-\rho\Gamma
 \end{pmatrix}\in \mathcal{O}_K^{2\times2},
\]
has determinant $-\rho^2\Gamma U\pi^k\ne0$.  By the module-lattice
facts in \cref{sec:preliminaries}, $\widehat{\mathcal M}$ is therefore a
full-rank free submodule of $\mathcal{O}_K^2$.
Since $\widehat{\mathcal M}=\rho\mathcal M$ and
$\lambda_1(\widehat{\mathcal M})=\rho\lambda_1(\mathcal M)$,
\cref{thm:exact-X3C-encoding} yields
\begin{align}
\label{eq:scaled-exact-encoding}
 \text{positive X3C}&\Longrightarrow
 \lambda_1(\widehat{\mathcal M})^2\le\widehat S,
 \nonumber\\
 \text{negative X3C}&\Longrightarrow
 \lambda_1(\widehat{\mathcal M})^2>\widehat S.
\end{align}

\paragraph{Integral threshold.}
Since $\rho V\in\mathcal{O}_K$,
\[
 \rho^2\|V\|^2=\|\rho V\|^2
 =\Tr_{K/\Q}((\rho V)\overline{\rho V})
\]
is an integer.  Every denominator $q$ in \cref{eq:checker-baseline} is
cleared because $\rho^2/q=qN_r^2\in\Z$.  Hence $\widehat S\in\Z$.
Since $\rho>0$ and $S>0$, we have
$\widehat S=\rho^2S>0$.

\paragraph{Encoding length.}
For the module $\widehat{\mathcal M}$ constructed above, we bound
the power-basis coefficients of its two generators and the bit length of
$\widehat S$ polynomially in $q$ and the X3C input length.  By
\cref{eq:integral-module-exact}, the generator bounds amount to controlling
$\rho U\pi^k$, $\rho(Uc-V)$, and $\rho\Gamma$.  In view of
\cref{eq:cleared-UV-exact}, the only new quantities requiring estimates are
$N_r$ and the power-basis coefficients of $N_r/r$.  Write
\[
 r=\sum_{i=0}^{q-2}r_i\zeta^i,
 \qquad
 B_r:=\max\left\{1,\max_i|r_i|\right\}.
\]
Since $q\equiv3\pmod4$, one has $\chi(-1)=-1$.  Using
$\zeta^{q-1}=-(1+\zeta+\cdots+\zeta^{q-2})$, we obtain
\[
 \vartheta
 =\sum_{a=1}^{q-2}\chi(a)\zeta^a-\zeta^{q-1}
 =1+\sum_{a=1}^{q-2}(1+\chi(a))\zeta^a.
\]
Thus the power-basis coefficients of $\vartheta$ lie in $\{0,1,2\}$ and are
efficiently computable from $a^{(q-1)/2}\bmod q$.  The power-basis
coefficient bounds for $p$, $D$, and $\vartheta$ are therefore $1$, $3$, and
$2$, respectively.  In $q$-coordinates, $p$ has $2W$
nonzero coefficients and every coefficient of $\vartheta$ has absolute value
at most one.  Each coefficient of $\vartheta p$ is therefore a sum of at most
$2W$ coefficients of $\vartheta$.  Hence the coefficients of $\vartheta p$
are bounded by $2W$ in $q$-coordinates and by $4W$ in the power basis.
Consequently,
\[
 B_r\le q+4W.
\]
Let $\mathbf M_r\in\Z^{(q-1)\times(q-1)}$ be the matrix of multiplication
by $r$ in the power basis $1,\zeta,\ldots,\zeta^{q-2}$.  For
$0\le j\le q-2$, the column indexed by $j$ is the coefficient vector of
$r\zeta^j$.  First use $\zeta^q=1$ to reduce its exponents to
$0,\ldots,q-1$.  The resulting powers are distinct, and the only power
outside the power basis is $\zeta^{q-1}$.  Replacing this term with
\[
 \zeta^{q-1}=-(1+\zeta+\cdots+\zeta^{q-2})
\]
shows that every entry of $\mathbf M_r$ is a sum of at most two terms of the
form $\pm r_i$.  Hence every entry has absolute value at most $2B_r$, and
every row has $\ell_2$-norm at most $2\sqrt{q-1}\,B_r$.

Over $\C$, the eigenvalues of multiplication by $r$ are the conjugates
$\sigma_a(r)$ for $a\in\F_q^\times$.
Consequently, the determinant identity and Hadamard's determinant
bound give
\begin{equation}
\label{eq:hadamard-norm-r}
 \det(\mathbf M_r)=\prod_{a\in\F_q^\times}\sigma_a(r)=N_r,
 \qquad
 |N_r|\le\left(2\sqrt{q-1}\,B_r\right)^{q-1}.
\end{equation}

We also need to control the coefficients of $N_r/r$.  Its multiplication
matrix is $N_r\mathbf M_r^{-1}$.  For $1\le i,j\le q-1$, let
$\mathbf M_r^{(j,i)}$ be the matrix obtained from $\mathbf M_r$ by deleting
row $j$ and column $i$.  The cofactor formula for the inverse gives
\[
 \bigl(N_r\mathbf M_r^{-1}\bigr)_{ij}
 =(-1)^{i+j}\det\bigl(\mathbf M_r^{(j,i)}\bigr).
\]
Every row of this $(q-2)\times(q-2)$ matrix has $\ell_2$-norm at most
$2\sqrt{q-2}\,B_r$.  A second application of Hadamard's bound therefore
shows that
\begin{equation}
\label{eq:hadamard-quotient-r}
 \left|\bigl(N_r\mathbf M_r^{-1}\bigr)_{ij}\right|
 \le\left(2\sqrt{q-2}\,B_r\right)^{q-2}.
\end{equation}
Applying $N_r\mathbf M_r^{-1}$ to the coefficient vector of $1$
shows that its first column is the power-basis coefficient vector of $N_r/r$.
Thus \cref{eq:hadamard-norm-r,eq:hadamard-quotient-r} show that
$N_r$ and every coefficient of $N_r/r$ have bit length
$O\bigl(q(\log B_r+\log q)\bigr)$.  The matrix $\mathbf M_r$ is computable
from the coefficients of $r$.  Standard polynomial-time exact integer linear
algebra computes its determinant and the first column of its adjugate, and
hence computes $N_r$ and the coefficients of $N_r/r$ within these bounds.

The bound $B_r\le q+4W$ and
\cref{eq:hadamard-norm-r} show that $\rho=q|N_r|$ has polynomial bit length.
The coefficients of $\pi^k$ and
$c=h+\pi^{\lfloor k/2\rfloor}$ have $O(k)$ bit length, and $\Gamma$ has
$O(\log k)$ bit length by \cref{eq:Gamma-target-exact}.  Addition and
multiplication in the power basis use $q-1$ integer coefficients and increase
their bit lengths by only a polynomial amount.  Since the next subsection
chooses $q$ polynomially
bounded in the X3C input length, the two generators in
\cref{eq:integral-module-exact} are computable in deterministic polynomial
time and have polynomial encoding length.

The bit-length bound for $\rho$ and the coefficient bounds for
$\rho V$, together with
\cref{eq:checker-baseline,eq:F-S-nu-exact},
show that $\widehat S$ has bit length
polynomial in $q$ and the X3C input length.
It is also computable in deterministic polynomial time.  Compute
$\|\rho V\|^2$ from the power-basis coefficients of $\rho V$ using
\cref{lem:cyclotomic-form}, and then compute $\widehat S=\rho^2S$ from
\cref{eq:checker-baseline,eq:F-S-nu-exact} using exact integer
arithmetic.

\subsection{Prime choice and Euclidean input}
\label{subsec:prime-and-euclidean}

For an X3C instance with an isolated universe element, the
reduction outputs the fixed module lattice over $\Z[\zeta_3]$ generated by
$(2,0)$ and $(0,2)$, with squared threshold $1$.  Its squared minimum is
$8>1$, so this is a NO instance.  For every remaining instance,
adjoin a disjoint three-element block together with its unique set, as in
\cref{sec:preliminaries}.  This answer-preserving preprocessing ensures
$M,n\ge1$.  We retain the notation $M$, $n$, and $\mathbf A$ for the resulting
instance, in which every universe element occurs in at least one set.

Let $\ell$ be the bit length of this preprocessed X3C instance, and
recall from \cref{eq:rs-parameters} that $T$ is the number of coordinates
that may be prescribed.  By the prime number theorem for
arithmetic progressions~\cite[Equation~(17.2)]{IwaniecKowalski04}, choose an
absolute integer $X_0$ exceeding the threshold implicit in
\cref{thm:exact-X3C-encoding} and large enough that every interval $[Y,2Y]$
with $Y\ge X_0$ contains a prime congruent to $3$ modulo $4$.
Since $n=O(\ell)$ for this explicit encoding,
\cref{lem:sidon} and \cref{eq:incidence-count,eq:row-shifts} give
$4\tau M=O(n^3)=O(\ell^3)$.  Since $T=\lfloor q^{1/2000}\rfloor$, an
absolute constant $C_0$ can therefore be chosen so that every
$q\ge(\ell+2)^{C_0}$ satisfies $n+2\le T$ and
the inequality in \cref{eq:row-shift-q-bound}.  Put
\[
 X:=\max\left\{X_0,\left\lceil(\ell+2)^{C_0}\right\rceil\right\}.
\]

 By the defining property of $X_0$, the interval $[X,2X]$ contains
 a prime congruent to $3$ modulo $4$.  The reduction scans the
integers in this interval that are congruent to $3$ modulo $4$ and applies
deterministic primality testing~\cite{AgrawalKayalSaxena04}.  It finds a
prime $q\equiv3\pmod4$ with $q=\ell^{O(1)}$ in deterministic polynomial
time.

It remains to represent the canonical norm by the standard
coordinate $\ell_2$-norm.
Let $d:=q-1$.  Using the integral basis of $\mathcal{O}_K$ from
\cref{sec:preliminaries}, consider the vectors
\[
 \zeta^t\widehat{\mathbf m}_i,
 \qquad 0\le t<d,
 \quad i\in\{1,2\},
\]
which form a $\Z$-basis of $\widehat{\mathcal M}$.  Let
$\mathbf C_{\widehat{\mathcal M}}\in\Z^{2d\times2d}$ be the matrix whose
columns are their power-basis coefficient vectors.

To realize the canonical inner product, let
$\mathbf E_q\in\Z^{\binom q2\times d}$ be an oriented incidence matrix of
the complete graph on $q$ vertices with one vertex column deleted.  Its Gram
matrix is
\[
 \mathbf E_q^{\mathsf T}\mathbf E_q
 =q\mathbf I_d-\mathbf1_d\mathbf1_d^{\mathsf T}.
\]
Thus \cref{lem:cyclotomic-form}
gives the following integer basis matrix:
\begin{equation}
\label{eq:standard-Euclidean-basis-exact}
 \mathbf B
 :=(\mathbf I_2\otimes\mathbf E_q)
\mathbf C_{\widehat{\mathcal M}}
  \in\Z^{2\binom q2\times2d},
\end{equation}
The lattice $\mathcal L(\mathbf B)$ is isometric to
$\widehat{\mathcal M}$.  The
Gram identity
above shows that $\mathbf I_2\otimes\mathbf E_q$ has full column rank, while
$\det(\widehat{\mathbf B})\ne0$ shows that
$\mathbf C_{\widehat{\mathcal M}}$ has full rank.  Hence $\mathbf B$ has full
column rank.  Its dimensions and encoding length are polynomial in $\ell$.

\subsection{Proof of the main theorem}

\begin{proof}[Proof of \cref{thm:main-intro}]
Every output of the reduction is a full-rank free $\mathcal{O}_K$-submodule of
$\mathcal{O}_K^2$ and hence has module rank two.  As a $\Z$-lattice,
each output has rank $2(q-1)$.
For $\NP$-hardness, consider the two branches in
\cref{subsec:prime-and-euclidean}.  If the input has an isolated universe
element, then both the input and the fixed output are negative.  Otherwise,
the reduction maps the preprocessed X3C instance to
$(q,\widehat{\mathbf m}_1,\widehat{\mathbf m}_2,\widehat S)$.  On this branch,
the construction is deterministic and runs in polynomial time.  The two
module generators lie in $\mathcal{O}_K^2$ and have nonzero determinant, and
\cref{eq:scaled-exact-encoding} proves the exact YES--NO equivalence.
For the polynomial-time computable integer matrix $\mathbf B$
in \cref{eq:standard-Euclidean-basis-exact}, the lattice
$\mathcal L(\mathbf B)$ is isometric to the canonically embedded module
$\widehat{\mathcal M}$.

For membership in $\NP$, first parse the encoded string and check the
coefficient format, the nonnegative integer threshold, primality,
$q\equiv3\pmod4$, and the nonzero determinant condition.  Reject the input if
 any check fails.  For a valid input, apply the coefficient-matrix construction
 in \cref{subsec:prime-and-euclidean} to the input generators
 $\mathbf m_1,\mathbf m_2$.  This produces in polynomial time an
 integer matrix $\mathbf B$ such that $\mathcal L(\mathbf B)$
 is isometric to the module lattice defined by the input generators.  The
 nonzero $K$-determinant
 guarantees that $\mathbf B$ has full column rank.  The standard verifier
 for decision-$\SVP$ from \cref{sec:preliminaries} then
 applies.  This proves
\cref{thm:main-intro}.
\end{proof}

\begin{proof}[Proof of \cref{cor:search-hardness}]
Given an X3C instance, first check for an isolated universe
element.  If one exists, return NO without calling the oracle.  Otherwise,
apply the answer-preserving preprocessing above, carry out the reduction, and
construct the integer basis $\mathbf B$ from
\cref{subsec:prime-and-euclidean}.  The lattice
$\mathcal L(\mathbf B)$ is isometric to the full-rank free module lattice
$\widehat{\mathcal M}\subseteq\mathcal{O}_K^2$.  A search-$\SVP$ oracle
therefore returns a shortest nonzero vector
$\mathbf y\in\mathcal L(\mathbf B)$.
By \cref{eq:scaled-exact-encoding}, the X3C
instance is positive exactly when $\|\mathbf y\|_2^2\le\widehat S$.
Thus at most one oracle call decides X3C in polynomial time,
which proves the claimed Turing hardness.
\end{proof}

\paragraph{AI disclosure.}
GPT-5.6 Sol Ultra discovered the proof through an iterative conversation conducted in a single session. The discussion began by asking whether the locally dense lattices that Bennett and Peikert construct from Reed--Solomon codes for
approximate $\SVP$~\cite{BennettPeikert23} could be adapted to module lattices.  The model first found a randomized reduction by combining this construction with Wan's point-count estimates~\cite{wan2026np}, and then developed the deterministic reduction proved in this paper.  The authors independently verified every mathematical claim and proof, checked every use of the cited references, simplified and refined the argument, and wrote the final manuscript.  The authors take full responsibility for the paper.

\small
\bibliographystyle{alpha}
\bibliography{references}

\end{document}